\PassOptionsToPackage{dvipsnames,table}{xcolor}

\documentclass[sigconf,nonacm]{acmart}

\usepackage{graphicx}
\usepackage{subcaption}
\usepackage{multirow}
\usepackage{balance}
\usepackage{cleveref}
\usepackage{caption}
\usepackage{algorithm}
\usepackage[normalem]{ulem}
\usepackage[italicComments=true,rightComments=true,commentColor=gray]{algpseudocodex}
\makeatletter
\renewcommand{\ALG@beginalgorithmic}{\small}
\makeatother
\usepackage{mathtools}
\usepackage{amsmath}
\usepackage{booktabs}
\usepackage{array}
\usepackage{soul}
\usepackage{tikz}
\usepackage[framemethod=tikz]{mdframed}
\usepackage{amsthm}
\usepackage{xspace}
\definecolor{forestgreen}{RGB}{34,139,34}
\usepackage[dvipsnames,table]{xcolor}
\usepackage[most]{tcolorbox}
\usepackage{balance}
\usepackage{enumitem}
\usepackage{ifthen}

\newboolean{submit}
\setboolean{submit}{true} %

\newboolean{rag}
\setboolean{rag}{false} %

\usepackage{titlesec}
\titlespacing*{\section}{0pt}{1.0ex}{0.8ex}
\titlespacing*{\subsection}{0pt}{0.8ex}{0.6ex}
\titlespacing*{\subsubsection} {0pt}{0ex}{4pt}

\makeatletter
\AtBeginDocument{%
  \let\ACM@origsection\section
  \let\ACM@origsubsection\subsection
  \let\ACM@origsubsubsection\subsubsection
  \let\ACM@origparagraph\paragraph
}
\makeatother

\setlist[itemize]{noitemsep,nolistsep}

\lstdefinestyle{plaintextstyle}{
  breaklines=true,
  breakatwhitespace=false,
  basicstyle=\footnotesize\ttfamily,
  columns=flexible,
  keepspaces=true,
  showstringspaces=false,
  frame=single,
  framesep=2pt,
  numbers=left,
  numbersep=5pt,
  numberstyle=\tiny\color{gray},
  xleftmargin=10pt,  %
  framexleftmargin=10pt,
  xrightmargin=3pt,
  resetmargins=true, %
  aboveskip=10pt,    %
  belowskip=10pt,    %
  gobble=0           %
}

\lstdefinelanguage{text}{
  identifierstyle=,
  keywordstyle=,
  commentstyle=,
  stringstyle=,
  morekeywords={}
}

\newcommand{\topic}[1]{\vspace{.5pt} \noindent{\bf #1.}}

\DeclareCaptionFormat{empty}{#1}

\newcommand*\circled[1]{\tikz[baseline=(char.base)]{
            \node[shape=circle,draw,inner sep=0.5pt] (char) {\small #1};}}

\newcolumntype{L}{l}
\newcolumntype{C}{>{$}c<{$}}
\newcolumntype{R}{>{$}r<{$}}

\newcommand{\sys}{\textsc{SHED}\xspace}

\newcommand{\code}[1]{\texttt{\small #1}}
\newcommand{\ds}[1]{\textsf{\textsc{\small #1}}}

\newcommand{\best}[1]{\textcolor{green!50!black}{{\bf #1}}}
\newcommand{\diff}[1]{\textcolor{red!70!black}{{\bf \em  #1}}}

\newlength{\tinyskipamount}
\newcommand{\tinyskip}{\vspace{\tinyskipamount}}

\newcounter{definition}
\newenvironment{definition}[1][]{\refstepcounter{definition}\par\tinyskip\textsc{Definition~\thedefinition.\ #1}}{\tinyskip}

\newcounter{example}
\newenvironment{example}[1][]{\refstepcounter{example}\par\tinyskip\textsc{Example~\theexample.\ #1}}{$\square$\tinyskip}

\newcounter{theorem}
\newenvironment{theorem}[1][]{\refstepcounter{theorem}\par\tinyskip\textsc{Theorem~\thetheorem.\ #1}}{\tinyskip}

\newenvironment{cleanproof}[1][]{\par\textsc{Proof.\ #1}}{\qedsymbol\endproof}
\newenvironment{cleanproofsketch}[1][]{\par\textsc{Proof Sketch.\ #1}}{\qedsymbol\endproof}

\newenvironment{myproof}
{%
  \ifthenelse{\boolean{submit}}
  {\expandafter\comment}
  {\begin{cleanproof}}
}
{%
  \ifthenelse{\boolean{submit}}
  {\expandafter\endcomment}
  {\end{cleanproof}}
}

\newcommand{\hide}[1]{}
\newcommand{\hidec}[1]{\ifthenelse{\boolean{true}}{}{}}
\newcommand{\blank}[1]{\textcolor{red}{\textbf{\footnotesize}}}
\newcommand{\keep}[1]{#1}
\newcommand{\keepc}[1]{\textcolor{black}{#1}}
\newcommand{\ttt}[1]{\texttt{\small #1}}

\AtBeginDocument{%
  }

\makeatletter
\def\@ACM@resetaffil{\global\@ACM@instpresentfalse\global\@ACM@citypresentfalse}
\makeatother

\begin{document}

\title{Robust Hierarchical Structures for Agentic Document Analysis}

\author{Ruiying Ma}
\affiliation{%
  \institution{UC Berkeley}
}
\email{ruiyingm@berkeley.edu}

\author{Yiming Lin}
\affiliation{%
  \institution{UC Berkeley}
}
\email{yiminglin@berkeley.edu}

\author{Aditya G. Parameswaran}
\affiliation{%
  \institution{UC Berkeley}
}
\email{adityagp@berkeley.edu}

\renewcommand{\shortauthors}{Ma et al.}

\begin{abstract}
Large Language Models (LLMs) enable us
to better understand text
documents, including \hide{}PDFs and Word documents.
However, LLMs, as well as more modern
LLM agents, i.e., those with tool-calling abilities,
typically treat such documents as plain text,
ignoring the fact that
they are often organized hierarchically
into sections and subsections.
Extracting this structure, while difficult,
can improve efficiency and effectiveness
for agents (and humans)---since
only sections relevant to a given task need to be processed.
Unfortunately, prior work on structure extraction
provides no formal guarantees
on how well the inferred structure matches the true one.
Instead, we target a {\em robust} and {\em compact} variant
that is feasible to infer and useful in practice.
Robustness ensures that
the text under each subsection header
is a superset of
the text under the same header
in the true structure.
Compactness seeks to minimize this superset,
reducing agentic cost (or human cognitive load).
We propose \sys,
a two-stage workflow
for inferring a robust and compact structure.
The first stage is pluggable with an infinite family of approaches,
each guaranteeing robustness
for a specific document class.
We theoretically characterize the document space
using these classes and their hierarchical relationships.
Empirically, \sys improves F-1 scores (measuring the robustness–compactness trade-off) by
\hide{}
13\%--68\% over non-LLM baselines and 9\%--15\% over expensive LLM-based approaches.
Finally, we show how \sys-inferred structures are valuable for
agentic document analysis: agents using \sys
outperform baselines,
achieving \hide{}\keep{3\%--23\%} higher accuracy
while being up to \hide{}\keep{$10\times$} cheaper.

\end{abstract}

\maketitle

\section{Introduction}
\label{sec:intro}

\begin{figure*}[t]
\centering
    \begin{subfigure}[b]{1\linewidth}
         \centering
         \includegraphics[width=1\textwidth]{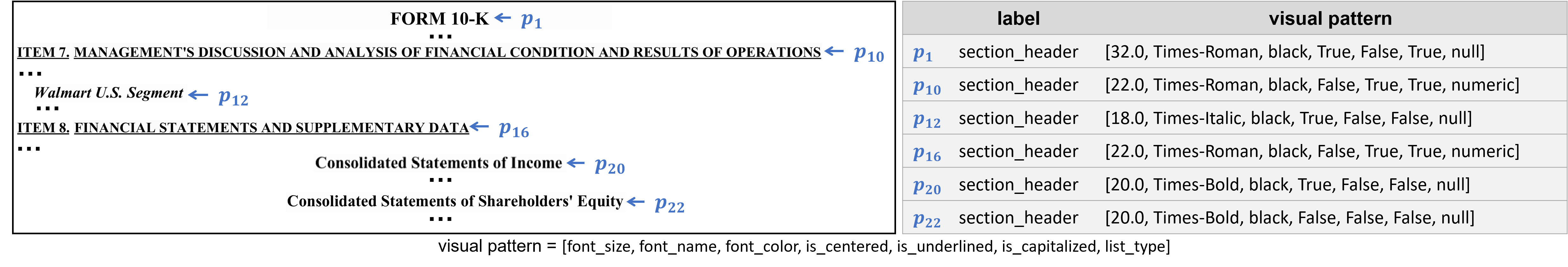}
         \vspace{-2em}
         \caption{The header phrases and their visual patterns.}
         \label{fig:10k}
    \end{subfigure}
    \begin{subfigure}{0.54\linewidth} %
         \centering
         \raisebox{0.3cm}
         {\includegraphics[width=1\textwidth]{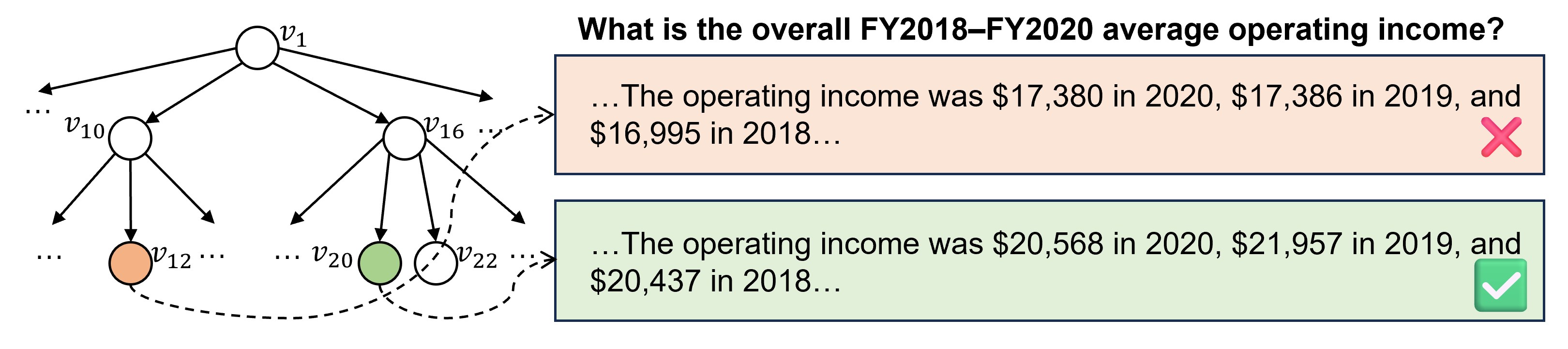}}
         \vspace{-2em}
         \caption{The true SHT.}
         \label{fig:true_sht}
    \end{subfigure}
    \begin{subfigure}{0.2\linewidth} %
         \centering
         \raisebox{0.6cm}
         {\includegraphics[width=0.9\textwidth]{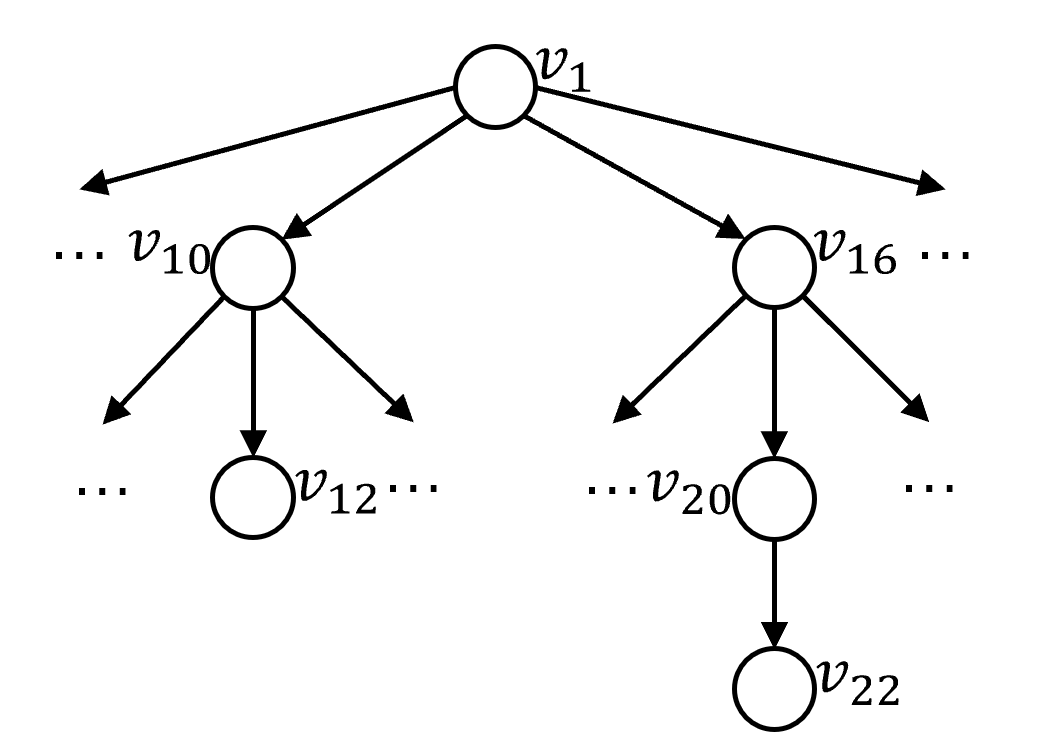}}
         \vspace{-2em}
         \caption{A robust SHT.}
         \label{fig:robust_sht}
    \end{subfigure}
    \begin{subfigure}{0.25\linewidth}
         \centering
         \includegraphics[width=1\textwidth]{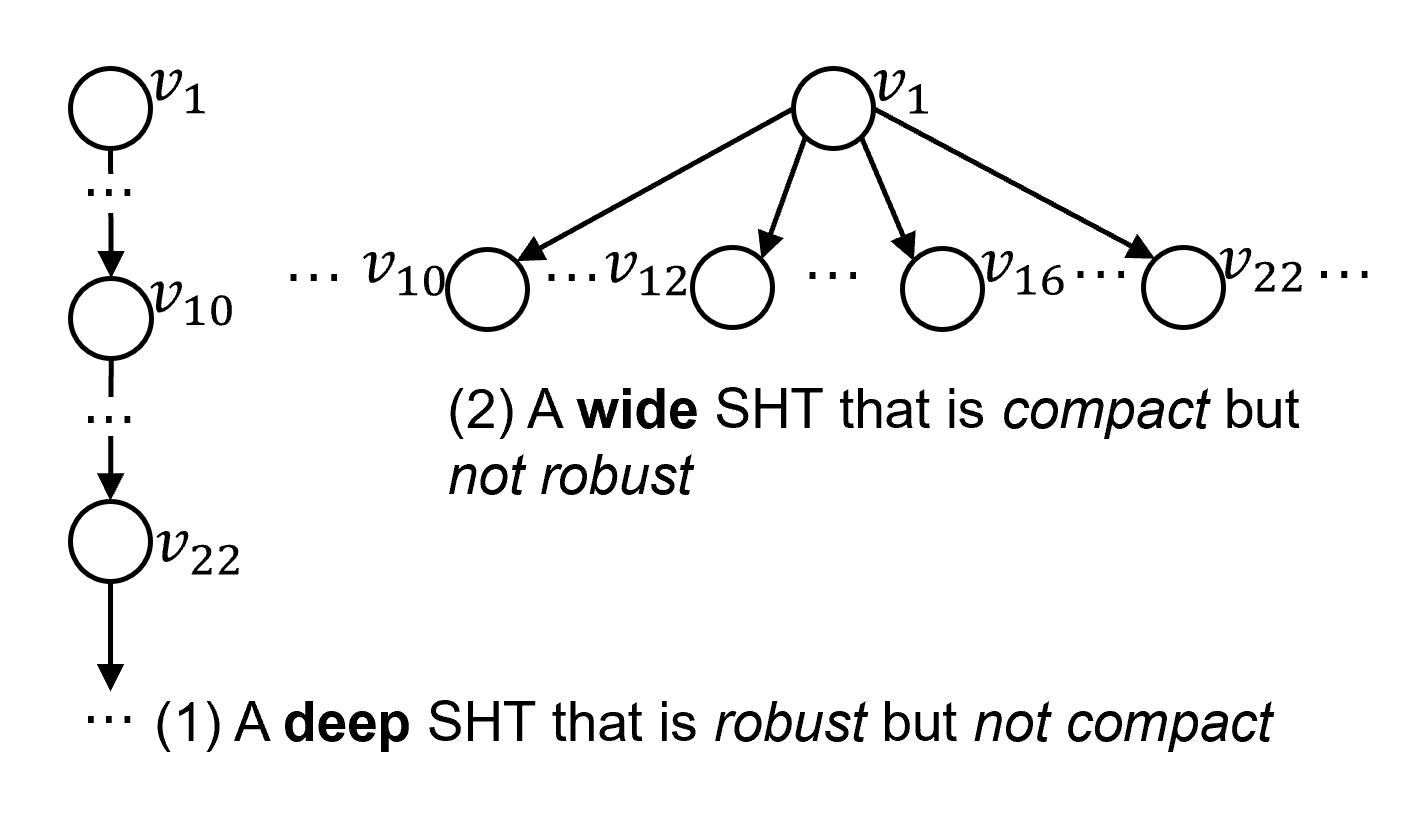}
         \vspace{-2em}
         \caption{Deep and wide SHTs.}
         \label{fig:deep_and_wide_shts}
    \end{subfigure}
    \caption{Hierarchical structures of a financial document {Walmart Inc. form 10-K}. The node $v_i$ in an SHT corresponds to the header phrase $p_i$. }
    \label{fig:filing}
\end{figure*}

\hide{}Thanks to Large Language Models (LLMs), we are seeing a resurgence of
interest in automatically processing
the vast volumes of unstructured data in our
organizations, including
PDF and Word documents~\cite{unstructured1, unstructured2}.
Modern {\em agentic} LLMs
with tool-calling~\cite{yao2022react},
and systems built on them,
such as Claude Code~\cite{claude-code}, Codex~\cite{codex}, and Claude Cowork~\cite{cowork},
are rapidly emerging as a dominant
paradigm for reading, reasoning over,
and querying long documents,
a setting we term {\em agentic document analysis}.
Yet processing long documents remains
challenging for agents:
reading them in full
incurs high token costs
and causes accuracy to degrade with context (i.e., input) length,
a phenomenon known as ``context rot'' or ``lost-in-the-middle''~\cite{lost-in-the-middle, context-rot}.
One solution is to simply provide more
tools to the agents, such as
regular expression-based search
within documents, e.g., via \code{grep}~\cite{sun2025docagent},
or embedding-based search~\cite{agentic-rag-survey},
which allow agents to retrieve and process portions of documents
rather than all of it at once.
However, such tools treat documents as a {\em bag of words},
ignoring their underlying hierarchical structure
in the form of sections and subsections.
Humans rely on this hierarchy---essentially a table of contents---to understand a document at a glance and navigate it efficiently; agents should too.
\keep{Moreover, this hierarchy is common across domains: e.g., as of 2025, 17M+ SEC EDGAR filings~\cite{edgar}, 530K+ ClinicalTrials
reports~\cite{clinicaltrials}, and 4M+ DOE OSTI reports~\cite{osti} have such structures.}
We therefore ask the question:
{\em how can we automatically and correctly identify
this hierarchical structure---and can agents (and humans)
make use of this structure for efficient and effective document analysis?}

\begin{example}[Impact of Hierarchical Structure.]
    Consider a Walmart Inc.~10-K document with 170 pages in Figure~\ref{fig:filing}.
    Its hierarchical structure
    can be represented as a tree shown
    in Figure~\ref{fig:true_sht}, where nodes represent headers,
    and edges represent semantic (containment) relationships.
    Following Lin et al.~\cite{zendb},
    we call this tree a {\em Semantic Hierarchical Tree (SHT)}.
    For example, $v_i$ in Figure~\ref{fig:true_sht}
    corresponds to the header phrase $p_i$ in Figure~\ref{fig:10k},
    while edge $v_{16} \rightarrow v_{20}$ indicates
    that the document portion
    headed by {\em Consolidated Statements of Income}
    (corresponding to $v_{20}$)
    is a subsection under the section titled {\em ITEM 8...}
    represented by $v_{16}$.
    Consider the
    question {\em “What is the overall FY2018–FY2020 average operating income?”}.
    The correct answer is in the subsection corresponding to $v_{20}$,
    with the header {\em ``Consolidated Statements of Income''},
    and content about {\em operating income}---retrieving and processing
    this entire subsection\hide{}
    is sufficient to answer this question.
    However, an agent without the document hierarchy
    may use \code{grep} to retrieve portions of the content
    underlying $v_{12}$ and $v_{20}$
    (the two boxes in Figure~\ref{fig:true_sht}), since
    the content is similar to the question, even though
    $v_{12}$ is a false positive,
    since it corresponds to the U.S.~segment, and not overall.
    Without knowing the structure---and specifically the headers---for
    $v_{12}$ and $v_{20}$, it is impossible for the agent
    to know which one is accurate.
    Even worse, in other cases, the document portion retrieved
    may not be understandable
    in isolation (unlike section/subsection-level units
    that are more logically complete),
    requiring the agent to make a guess based on incomplete information.
\end{example}

\noindent Like agents, human understanding of long documents hinges
on SHTs~\cite{human-reading-linear-text},
improving the accessibility of
scientific~\cite{hiqa},
legal~\cite{sht-for-legal},
and medical~\cite{sht-for-med} documents.
An SHT
not only aids humans in locating which
subsections they should review (top-down), but
also provides them with {\em context} for where they are in the document
when reviewing a given text portion (bottom-up)~\cite{cognitive-sht, cognitive-headings}.
For example, when reading the text containing {\em ``operating income''} under $v_{20}$, the headers are essential to understand what the text refers to.

\topic{Lack of guarantees in prior approaches}
Past work on SHT inference from documents \hide{}is based on LLMs~\cite{contextgem, llmaparse}, custom models~\cite{ml-sht-1, ml-sht-2}, or rules~\cite{ pymupdf, pdfminer},
of which LLM-based approaches are currently considered
state-of-the-art.
Unfortunately, {\em prior approaches
for inferring SHTs provide
no real guarantees} for how well such SHTs match
the ``true'' underlying SHT,
except in very specific settings that are rare in practice,
as we will discuss later~\cite{zendb}.
In particular,
LLM-based approaches similarly provide no guarantees, and also are expensive. In fact, as we will see later, they sometimes
hallucinate, introducing headers that don't exist in the underlying
document.
Overall, identifying the true SHT is unrealistic
due to heterogeneous layouts, semantic ambiguity,
and length~\cite{sht-inference-challenge}.

\topic{Robustness for SHTs}
Instead of inferring the true SHT,
we introduce a new objective of
{\em robustness}.
A robust SHT is one where,
for each (section or subsection)
header, we require the text ``under it'' to be
a superset of the corresponding text in the
true SHT.
Robustness is essential for
correctness, as discussed below:
\begin{example} [Robust Retrieval.]
    Figure~\ref{fig:robust_sht} shows a robust SHT;
    the document portion corresponding to $v_{22}$
    is now part of that of $v_{20}$,
    as the edge $v_{16} \rightarrow v_{22}$
    in the true SHT (\autoref{fig:true_sht})
    becomes a direct path connecting $v_{16}$ to $v_{22}$.
    When
    answering the question in Figure~\ref{fig:true_sht}
    using this robust SHT,
    suppose $v_{20}$ is retrieved
    and its document portion (where the correct answer resides)
    is returned to answer the question.
    Here, in addition to the portion
    providing the correct answer (i.e., one corresponding to $v_{20}$
    in the true SHT), the extra portion corresponding to $v_{22}$
    is also returned, which is permissible
    as long as all true portions
    (the ones corresponding to $v_{20}$) are included---as
    agents are
    able to filter out or ignore extraneous irrelevant content.
    Moreover, retrieving $v_{20}$ as a complete subsection, rather than keyword-matched fragments, ensures that the returned content is self-contained and interpretable. %
\end{example}

\topic{Compactness for SHTs}
While a robust SHT
guarantees that the returned context
for any given header (i.e., the text portions underneath)
is a superset of the true one,
we also want to {\em minimize the size of the returned context}
to reduce either processing cost for agents or human cognitive load,
depending on the setting---we call this property {\em compactness}.
Consider the deep SHT in Figure~\ref{fig:deep_and_wide_shts}-1,
which forms a chain of nodes with a single leaf.
This tree is robust but not compact
(e.g., retrieving the context of $v_{10}$
returns all the content from $p_{10}$
to the end of the document).
In contrast, the wide SHT in
Figure~\ref{fig:deep_and_wide_shts}-2,
where all nodes except $v_1$ are in the same layer,
is compact but not robust
(e.g., the document portion of $v_{16}$
no longer contains that of $v_{22}$,
and $v_{16}$ is no longer an ancestor of $v_{22}$).
{\em We therefore aim to infer SHTs {\bf\em that are guaranteed to be robust,
while being as compact as possible},
ensuring correctness while reducing cost}.
(This is analogous to ensuring 100\% recall,
while maximizing precision, when retrieving
document subsections.)

\topic{Inferring a robust and compact SHT via \sys}
Inferring an SHT
that guarantees robustness
and is compact is non-trivial.
While current tools can
identify headers accurately~\cite{vgt-paper, lightgbm}, i.e., section or subsection headers,
corresponding to nodes in the tree,
they fail to assemble these (header) nodes
into a robust tree.
We propose \sys\footnote{\textbf{S}emantic \textbf{H}ierarchical Structure \textbf{E}xtraction for \textbf{D}ocuments.}, a framework
for inferring
a robust and compact hierarchical structure for documents\keep{~with visually identifiable structures, using only intrinsic document elements (e.g., (sub)section headers) without any external ontology}.

Intuitively, each
header has specific human-interpretable visual patterns in
the form of numbering, font, size, and capitalization.
Headers with identical visual patterns
are typically of the same type,
and therefore belong in similar ``levels'' in the SHT,
e.g., section headers are often formatted similarly,
and would be at similar levels in the SHT.
Taking phrases labeled as headers as input nodes,
\sys employs a two-stage workflow:
it first groups nodes with identical visual patterns
into clusters and infers relationships among these clusters;
it subsequently derives edges among nodes
from the cluster relationships to assemble a compact SHT.

The first stage is {\em semantic depth (SD) inference},
where \sys assigns each cluster a {\em semantic depth}.
These semantic depths may differ from the depths in the true SHT,
but their relative ordering indicates hierarchical relationships:
the cluster of a node’s parent
must have a smaller semantic depth than
that of the node.
We introduce two approaches for SD inference:
{\em local-first},
which uses the header immediately prior
to a cluster’s first occurrence
to infer its semantic depth,
and {\em global-first},
which uses all headers prior to that occurrence.
The latter generalizes Lin et al.'s approach~\cite{zendb},
which only focuses on inferring the true SHT
rather than robust variants,
and only considers a narrow
class of {\em well-formatted} documents that,
as we will show in \Cref{sec:robustness-analysis},
are rare, occupying $<$25\% of real-world documents.
The second stage,
{\em SHT assembly},
constructs a compact SHT
by deriving edges from the relative ordering of the inferred semantic depths,
with tree depth
bounded by the number of clusters,
independent of the SD inference approach used in the first stage.
{\bf \em  \sys (with local-first) improves
robustness and compactness
by 13\%--68\% and 9\%--15\% in F-1 scores
over non-LLM baselines and expensive LLM-based approaches, respectively}.

\topic{Characterization of Robustness Classes}
\begin{figure}[t]
    \centering
    \includegraphics[width=1\linewidth]{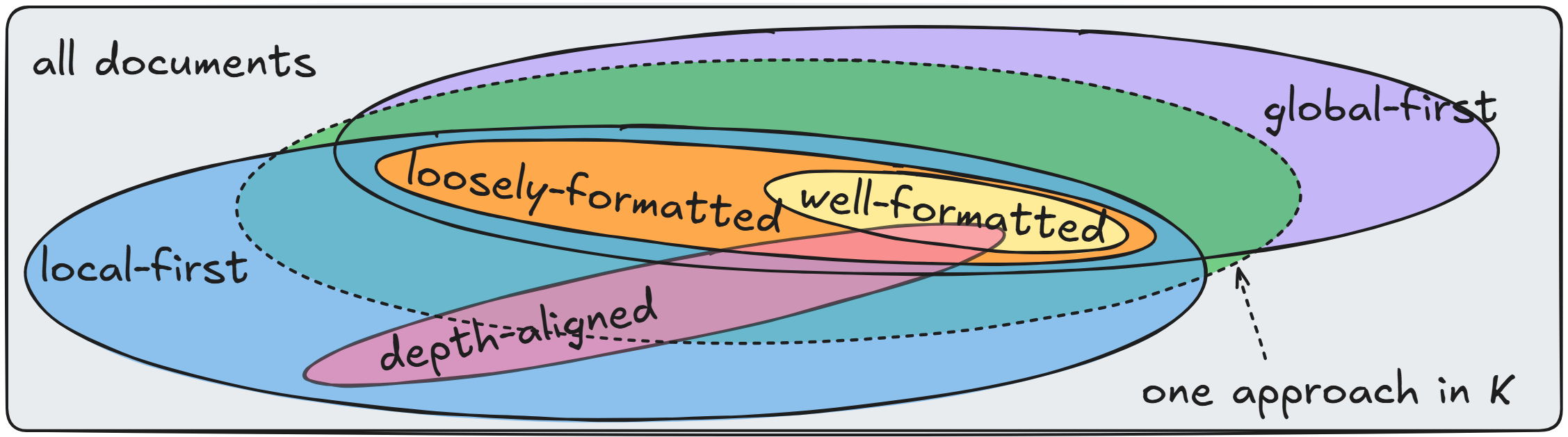}
    \caption{Our contribution: Hierarchy of robustness classes.}
    \label{fig:hierarchy-robustness}
\end{figure}
While our two-stage framework
results in high robustness
and compactness as we see in our experiments, we wanted
to additionally characterize the space of documents
for which we can guarantee robustness for each approach,
and also go beyond the two SD inference approaches,
local- and global-first.
To do so,
we establish
a necessary and sufficient condition
for robustness
for each approach,
a property we call {\em SD conformance}.
Leveraging SD conformance,
we construct a Venn-diagram of relationships
for various approaches, as shown in  \autoref{fig:hierarchy-robustness}---this diagram shows the sets of documents for which each approach guarantees robustness.
We prove that neither local- nor global-first dominate,
but there is a class of documents we call {\em depth-aligned},
for which local-first guarantees robustness, but global-first does not.
We further generalize beyond local- and global-first
to
{\em an infinite family of approaches $\mathcal{K}$,
each guaranteed to return a robust SHT}
for a specific class of documents
while optimizing for compactness.
We additionally define
a class of {\em loosely-formatted} documents
which lies at the intersection of all approaches in $\mathcal{K}$,
including local- and global-first,
and extends far beyond {\em well-formattedness} (by $2.57\times$ empirically).
To our knowledge, this is {\bf \em the first formal characterization
of various document classes and effectiveness of the corresponding approaches}.

\topic{Application to Agentic Document Analysis}
We demonstrate the practical usefulness
of robust and compact SHTs for {\em agentic document analysis}\keep{, whose core needs are what we target, motivating \sys's formulation, guarantees, and evaluations.
Specifically, agents need structure for accurate and cost-efficient analysis of long, complex documents, and \sys precisely targets such structures: robustness (the text under each header is a superset of the true text) ensures that the agent gets all relevant content, while compactness bounds context size to mitigate context rot and provide lower cost.}
By providing an agent with the hierarchical structure
inferred by \sys, the agent can retrieve
the appropriate section or subsection at a time,
{\bf \em outperforming agents without structures or
those with baseline structures by \hide{}\keep{3\%--23\%} in average accuracy,
while reducing total
costs by as much as \hide{}\keep{$10\times$} across four datasets}.

\vspace{2pt}
\noindent
We make the following contributions in this paper:
\begin{itemize}[leftmargin=*, topsep=0pt]
\itemsep0em
\item We formalize the novel concepts of robustness and compactness
of SHTs for document analytics
(\Cref{sec:defn})
and design \sys,
a two-stage workflow with a first stage
that is
pluggable with an infinite family of approaches (\Cref{sec:sht-construction}--\ref{sec:robustness-analysis}).
\item We theoretically characterize various classes of documents
for which \sys guarantees robustness
while optimizing compactness\hide{} (\Cref{sec:robustness-analysis}).
\item Across four datasets spanning various domains and layouts (\Cref{sec:eval_setup}),
for SHT inference, \sys outperforms non-LLM baselines by 13\%--68\%,
and significantly
more expensive LLM-based approaches by 9\%--15\%
in F-1 score, measuring the robustness-compactness trade-off
(\Cref{sec:sht_eval}).
\item We demonstrate the practical usefulness of \sys
for agentic document analysis,
where providing hierarchical structure to agents
improves average accuracy by \hide{}\keep{3\%--23\%}
while reducing total costs by up to \hide{}\keep{$10\times$}
over baselines across four datasets (\Cref{sec:doc_qa}).
\keep{\item We additionally show that \sys's scalability on changing number of documents or complexity is satisfactory. \sys still infers robust and compact SHTs, and has high QA accuracy at low cost, outperforming baselines by up to $4.5\times$ (\Cref{sec:scalability}).}
\end{itemize}

\section{Robust Semantic Hierarchical Trees}
\label{sec:defn}

We focus on documents
with
rich visual formatting
(e.g.,
PDF and Word files).
We first review relevant concepts related to documents, including semantic hierarchical trees, as defined in~\cite{zendb}, in \Cref{subsec:preliminaries}, and then define robust and compact variants in \Cref{sec:robustness-analysis}.

\subsection{Preliminaries}
\label{subsec:preliminaries}

\topic{Documents and Phrases}
Given a document $D$, we operate on a plain text serialized representation as a list of phrases, $\small D = \{p_1,p_2,...,p_n\}$, indexed based on where they appear in $D$. Each phrase $p_i$ is associated with a label $\small l_i \in \{\code{section\_header}, \code{text}, \code{table},\\ \code{image}, \ldots\}$.
This
labeling
can be performed at no cost
using
open-source document recognition tools~\cite{vgt-paper, lightgbm, mineru-3},
which typically extract phrases in a top-down, left-to-right manner,
consistent with the human reading order. Although different tools may define different label sets, labels such as \code{section\_header} (which broadly refers to the headers such as titles, section and subsection headers)
and \code{text} (representing content) are common, making it feasible to identify the hierarchical structure of a document as described below.
We say a phrase is a {\em header phrase} if its label is \code{section\_header}.

\topic{Visual Pattern}
Humans typically infer hierarchical structures from visual features of the header phrases such as font and layout position. For example, header phrases in larger, bold fonts that are centered may indicate section headers, whereas header phrases in smaller, non-bold fonts that are not centered may indicate subsection headers. We similarly rely on the {\em visual patterns} of phrases, defined below, to infer hierarchical structures.
For each phrase $p_i \in D$, let $vp(p_i)$ be the visual pattern associated with $p_i$, encoding various visual features.
$vp(p_i)$ is represented as a feature vector, with features such as $\tiny [\code{font\_size}, \code{font\_name}, \code{font\_color}, \code{is\_underlined}, \\\tiny \code{is\_centered}, \code{is\_capitalized}, \code{list\_type}, \ldots]$.
Here, {\tiny \code{font\_size}}, {\tiny \code{name}}, and {\tiny \code{color}} represent the style of the phrase. The remaining features capture aspects such as whether the phrase is underlined,  center-aligned, fully capitalized, or part of a numeric or alphabetical list.

\topic{Semantic Hierarchical Tree}
We define a Semantic Hierarchical Tree (SHT) for a document $D$, given phrases $p_i\in D$, their labels $l_i$, and visual patterns $vp(\cdot)$.

\begin{definition} [Semantic Hierarchical Tree.]~\label{def:sht}
    An SHT for a document is a directed rooted tree $T = (V, E)$, with a one-to-one correspondence between the nodes $V$ and the header phrases. Let $v_i$ denote the node corresponding to the header phrase $p_i$ with index $i$.
    For any two nodes $v_i, v_j \in V$, where $i < j$, $v_i$ must precede $v_j$ in the pre-order traversal of the tree.
\end{definition}

Note that the node set $V = \{v_{i_1}, v_{i_2}, \ldots, v_{i_n}\}$ corresponds to a subset of the phrases $P_i\subseteq D, P_i = \{p_{i_1}, p_{i_2}, \ldots, p_{i_n}\}$, namely, only those that are header phrases, and thus indices $i_1, i_2, \ldots, i_n$ are not necessarily consecutive.
Definition~\ref{def:sht} additionally implies that the indices of $v_i$ and its descendants are smaller than those of any right sibling of $v_i$ (if any). For example, ``Section~2'' and all its subsection headers should appear before ``Section~3''.
Further note that $V$ is fixed (comprising all header phrases $p_{i_1}, \ldots, p_{i_n}$), while $E$ is not, admitting multiple valid SHTs for the same document. We define the {\em true} SHT as the one that human users believe best reflects the document's hierarchical structure.

\begin{example}
Figure~\ref{fig:true_sht} shows the (partial) true SHT for Walmart Inc. form 10-K.
$v_{16}$ corresponds to $p_{16}$;
its child $v_{20}$ corresponds to $p_{20}$,
a subsection header of
$p_{16}$.
The pre-order traversal
preserves a strictly increasing sequence of node indices,
$v_1, \ldots, v_{10}, \ldots, v_{22}$.
\end{example}

\topic{Text Span}
The {\em text span} of a node $v_i$, denoted by $ts(v_i)$, refers to the text portions (i.e., phrases) associated with $v_i$ in $D$. Given an SHT $T$, $ts(v_i)$ denotes the {\em sequential} list of phrases starting from the header phrase corresponding to $v_i$ and ending immediately before the header phrase corresponding to the next non-descendant node $v_j$ in $T$, defined below, i.e.,  $[p_i, p_{i+1}, \ldots, p_{j-1}]$\hide{}\keepc{.} Formally, node $v_j$ is a non-descendant of $v_i$ in $T$ with the smallest index $j > i$. For example, the text span of $v_{20}$ in \autoref{fig:true_sht} is $[p_{20}, p_{21}]$, since its next non-descendant node is $v_{22}$.

\subsection{Robust and Compact SHTs}
\label{subsec:robust-sht}

Constructing the true SHT may be unrealistic due to semantic ambiguities in documents. A key insight of this paper is that {\em it is often sufficient to construct a robust SHT} for downstream tasks, rather than the true one, as defined below.

\begin{definition} [Robust SHT.]~\label{def:robust-sht}
    An SHT $T$ of a document $D$ is {\em robust} if $\forall~v_i\in V$, $ts'(v_i)\subseteq ts(v_i)$, where $ts'$ and $ts$ denote the text spans inferred from $T'$ (the true SHT) and $T$, respectively.
\end{definition}

A robust SHT $T$ guarantees that the text span $ts(v_i)$ of any node $v_i$ in $T$ is a {\em superset} of that of $v_i$ in the true SHT $T'$.
This robustness property provides correctness guarantees for \hide{}many tasks.
For example, when performing retrieval (e.g., as part of agentic document processing) over a robust SHT $T$, the retrieved context for $v_i$ will always include the true context for $v_i$ in $T'$. This property ensures correctness of question answering \hide{}as long as the retrieval method identifies the right header (e.g., section\hide{} header) where the answer lies, and the LLM gives the right answer when provided a superset of the text portion under this header (i.e., its text span).
\blank{} %

\begin{example}
    Figure~\ref{fig:robust_sht}
    presents
    a robust tree $T$ related to the true one $T'$ in Figure~\ref{fig:true_sht}, where $v_{22}$ is now a child of $v_{20}$.
   $ts(v_{20})$ contains $ts'(v_{20})$ and $ts'(v_{22})$,
    forming a superset of the true context $ts'(v_{20})$ that includes the answer to the question in Figure~\ref{fig:10k}.
\end{example}

A robust SHT also guarantees that any parent-child pairs in the true SHT remain as ancestor-descendant pairs,
a property we will use for SHT inference in the following sections.
\keep{The formal statement is given below and proved in Appendix~\ref{append:proof-ts-correct}:}\hidec{}
\begin{theorem} \label{thm:ts-correct}
    Consider an SHT $T$ of a document $D$. $T$ is robust
    {\em if and only if } for every pair of nodes $v_i, v_j$ where $v_j$ is a child of $v_i$ in the true SHT $T'$, $v_j$ is a descendant of $v_i$ in $T$.
\end{theorem}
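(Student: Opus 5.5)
The proof rests on one structural lemma: in any SHT the descendants of a node are exactly the header nodes lying strictly inside its text span. Once that holds, both directions reduce to comparing where the two spans end.

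\emph{Lemma.} Let $T$ be an SHT. For every $v_i$, the descendants of $v_i$ are exactly the header nodes $v_k$ with $i < k < j$. Here $j$ is the index of the next non-descendant of $v_i$; if $v_i$ has no non-descendant with a larger index, set $j = n+1$, so that $ts(v_i)$ runs to the end of $D$. To prove it, I would use the standard fact that the subtree rooted at $v_i$ occupies a contiguous block of the pre-order traversal, beginning at $v_i$. By Definition~\ref{def:sht}, pre-order coincides with increasing index order. So the descendants of $v_i$ form a contiguous run of header indices immediately after $i$. The first header index after this run is, by definition, the smallest-index non-descendant greater than $i$, namely $j$. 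Hence a header phrase $p_k$ with $k \neq i$ lies in $ts(v_i) = [p_i, \ldots, p_{j-1}]$ if and only if $v_k$ is a descendant of $v_i$ in $T$.

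\emph{($\Rightarrow$).} Assume $T$ is robust, and let $v_j$ be a child of $v_i$ in $T'$. By the lemma applied to $T'$, $p_j \in ts'(v_i)$. Robustness gives $ts'(v_i) \subseteq ts(v_i)$, so $p_j \in ts(v_i)$. Since $p_j$ is a header phrase with $j \neq i$, the lemma applied to $T$ shows that $v_j$ is a descendant of $v_i$ in $T$.

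\emph{($\Leftarrow$).} Assume every parent--child pair of $T'$ is an ancestor--descendant pair in $T$.
\begin{enumerate}
\item By induction along paths, using transitivity of the descendant relation, every descendant of $v_i$ in $T'$ is a descendant of $v_i$ in $T$.
\item Both spans start at $p_i$. Let $j'$ and $j$ be the end points of $ts'(v_i)$ and $ts(v_i)$ respectively.
\item By the lemma, every header node with index strictly between $i$ and $j'$ is a $T'$-descendant of $v_i$, hence a $T$-descendant by step~1.
\item So the smallest-index $T$-non-descendant greater than $i$ cannot lie below $j'$, which gives $j \ge j'$. This also covers the boundary case $j' = n+1$, where $T'$ has no non-descendant after $i$: then $T$ has none either, and $j = n+1$.
\item Therefore $ts'(v_i) = [p_i, \ldots, p_{j'-1}] \subseteq [p_i, \ldots, p_{j-1}] = ts(v_i)$ for every $v_i$, so $T$ is robust.
\end{enumerate}

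\emph{Main obstacle.} The only step that needs real care is the lemma, specifically the contiguity of subtrees in pre-order combined with the index-order requirement of Definition~\ref{def:sht}. I would prove contiguity directly from the recursive definition of pre-order traversal: a node is visited, then each child subtree is visited in full before the next sibling. Every other step is direct bookkeeping on span end points.
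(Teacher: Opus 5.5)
Your proposal is correct and takes essentially the same route as the paper. The ($\Rightarrow$) direction goes through $p_j \in ts'(v_i) \subseteq ts(v_i)$, and the ($\Leftarrow$) direction shows that, because every $T'$-descendant of $v_i$ is a $T$-descendant, the next non-descendant of $v_i$ in $T$ cannot come before the one in $T'$. Your explicit lemma (subtrees are contiguous in pre-order, so the header phrases inside $ts(v_i)$ are exactly the descendants) makes rigorous a step the paper uses implicitly. It also handles the case where $v_i$ has no later non-descendant, and it avoids the paper's loose treatment of $p_{k'-1}$ as if it were a header node.
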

\begin{myproof}
$(\Leftarrow)$
Let $T'$ be the true SHT. Any parent-child pair in $T'$ remains an ancestor-descendant pair in SHT $T$.
Suppose $v_{k}$ and $v_{k'}$ are the next non-descendant of $v_i$ in $T$ and $T'$, respectively.
For $v_{k'-1}$, either $v_{k'-1} = v_i$, or it is the last descendant of $v_i$ in $T'$.
For the former case, $k'-1 = i < k$.
For the latter case, since $v_{i}$ is \hide{}\keepc{an ancestor} of $v_{k'-1}$ in $T'$, it is also an ancestor of $v_{k'-1}$ in $T$. Therefore, $v_{k'-1}$ is a descendant of $v_i$ in $T$, indicating $k'-1 < k$.
Thus $ts'(v_i) = [p_i, \ldots, p_{k'-1}] \subseteq [p_i, \ldots, p_{k-1}] = ts(v_i)$. $T$ is robust.

$(\Rightarrow)$
For each parent-child pair $(v_i, v_j)$ in the true SHT $T'$, $p_j\in ts'(v_i)$. Since $T$ is robust, we have $ts'(v_i)\subseteq ts(v_i)$. Therefore, $p_j\in ts(v_i)$, indicating that $v_i$ is an ancestor of $v_j$ in $T$.
\end{myproof}

While we want $T$ to be robust, at the same time, we must optimize $T$ for efficiency by minimizing the text span of each node.
For example, \hide{}\keepc{when retrieving using $T$}, the retrieved context for node $v_i$ is its text span $ts(v_i)$ in $T$.
While $ts(v_i)$ must include the true text span $ts'(v_i)$, it should also contain as little excess content $ts(v_i)\backslash ts'(v_i)$ as possible.
Therefore, to minimize the text span of each node of $T$ while preserving robustness, we want to minimize $\mathsf{C} = \sum_{v_i\in V} ts(v_i)\backslash ts'(v_i)$; we refer to $\mathsf{C}$ as {\em compactness}.

We can now formally define our problem.

\begin{definition} [Robust and Compact SHT Inference.]~\label{def:problem}
    Given a document $D = \{p_1, \ldots, p_n\}$, where each phrase $p_i$ has a label $l_i$ and visual pattern $vp(p_i)$, infer an SHT $T$ from $D$ such that $T$ is robust,
    while minimizing compactness $\mathsf{C}=\sum_{v_i\in V} ts(v_i) \backslash ts'(v_i)$.
\end{definition}
\vspace*{-\tinyskipamount}

\begin{example}
    For the SHTs
    of
    the Walmart Inc. form 10-K,
    we prefer
    \autoref{fig:robust_sht}
    over the deep and wide SHTs in \autoref{fig:deep_and_wide_shts}.
    The wide SHT is not robust:
    $ts(v_{16})$ incorrectly excludes
    $ts'(v_{20})$.
    The deep SHT is not compact:
    $ts(v_{20})$ unnecessarily includes all content from $p_{20}$ to the end of
    $D$.
    In contrast,
    \autoref{fig:robust_sht} achieves robustness, while also being more compact.
    For robustness, $ts(v_{16})$ is a superset of its true text span, including
    $ts'(v_{20})$.
    For compactness, $ts(v_{20})$ includes
    only
    the small text span of $v_{22}$ in addition to
    $ts'(v_{20})$.
\end{example}
\vspace*{-\tinyskipamount}

\section{SHT Inference: Two Approaches}
\label{sec:sht-construction}

In this section, we present
our two-stage workflow, \sys,
for inferring a robust and compact SHT $T=(V,E)$ from a document $D$,
starting with an overview in \Cref{subsec:workflow}.
As mentioned earlier, the set of nodes $V$ corresponds to the set of header phrases. To assemble $V$ into an SHT $T$, we rely on the intuition that the semantic hierarchy represented by an SHT depends on the {\em visual patterns} of nodes $vp(\cdot)$.
Recall that these visual patterns are represented by feature vectors that encode various visual features (e.g., \code{font\_size}, \code{font\_name}) of the phrases corresponding to the nodes.

\subsection{Overall Workflow}
\label{subsec:workflow}

The overall workflow of SHT inference consists of two stages:

\noindent{\bf 1) Semantic depth (SD) inference}
(\Cref{subsec:infer-sd-local} and \Cref{subsec:infer-sd-global}).
The semantic relationships between nodes (i.e., the edges in the SHT) closely correspond to their visual patterns: certain visual patterns (e.g., those of section headers) tend to indicate parent nodes, while others (e.g., those of subsection headers) indicate children.
Therefore, we first cluster the nodes based on their visual patterns---so each visual pattern corresponds to a cluster---and then infer what we refer to as the {\em semantic depth} of each cluster.
Semantic depth does not directly specify the exact ``level'' of nodes with the given visual pattern in the SHT;
instead, it constrains semantic relationships among nodes with different visual patterns (i.e., edges in the SHT), as described below.

\noindent{\bf 2) SHT assembly}
(\Cref{subsec:sht-assembly}).
We then build the SHT $T$
by inferring edges based on semantic depths of the clusters.
Specifically, we use the {\em relative ordering} of semantic depths to determine the allowed semantic relationships: the parent of a node in the SHT must belong to a cluster with a {\em smaller} semantic depth.

\vspace{2pt}
\noindent As shown in \autoref{fig:workflow},
there are multiple approaches for SD inference $\code{Infer-SD}_1, \code{Infer-SD}_2, \ldots$;
the approaches in Sections~\ref{subsec:infer-sd-local} and~\ref{subsec:infer-sd-global} correspond to $\code{Infer-SD}_1$, which we refer to as local-first, and  $\code{Infer-SD}_2$, which we refer to as global-first, respectively.
Both local- and global-first approaches rely on the node with the smallest index in each cluster to determine the semantic depth of each cluster.
This node, which we call the {\em indexing node} of the cluster, corresponds to the first occurrence of the visual pattern associated with the cluster in the document.
The local-first approach infers the semantic depth of each cluster using {\em local} structural cues, examining only the node whose header phrase immediately precedes that of the cluster's indexing node.
In contrast, global-first uses {\em global} structural cues by considering all nodes whose header phrases precede that of the indexing node.
Within our two-stage workflow, both approaches produce \hide{}{correct and compact SHTs},
as we show in \Cref{subsec:correct-compact}.

The robustness of the SHTs inferred by local- and global-first approaches will be analyzed in Sections~\ref{subsec:sd-robustnes} and~\ref{subsec:comp-infer-sd}.
We further combine local- and global-first into an infinite family of approaches
in \Cref{subsec:family-infer-sd}.
Each approach in this family also ensures SHT robustness for a particular class of documents, as shown in \Cref{subsec:hierarchy-robustness}.
The global-first approach generalizes ZenDB's inference algorithm~\cite{zendb}, which was restricted to {\em well-formatted} documents. We show in \Cref{subsec:hierarchy-robustness} that well-formatted documents constitute only a small subset of real-world documents, and the global-first approach, as a generalization within our two-stage framework, infers robust SHTs for a substantially broader class of documents.

We begin by discussing our first SD inference approach.

\begin{figure}[t]
    \centering
    \includegraphics[width=1\linewidth]{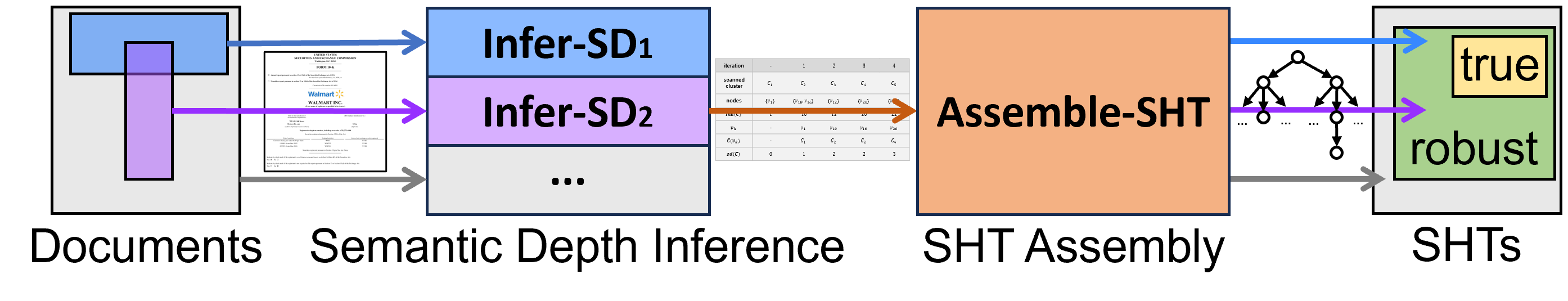}
    \caption{The overall workflow of \sys for SHT Inference.}
    \label{fig:workflow}
\end{figure}

\subsection{Semantic Depth Inference: Local-First}
\label{subsec:infer-sd-local}
In local-first SD inference,
we start by clustering the header nodes $V$, based on their visual patterns, into clusters $\mathcal{C} = \{C_1, \ldots, C_m\}$. Two nodes $v_i, v_j \in V$ belong to the same cluster if they share identical visual patterns, i.e., $vp(p_i) = vp(p_j)$.
For each cluster $C_j$, we will infer its semantic depth, denoted $sd(C_j)$. We assume that cluster $C_1$ contains only the root node $v_1$ and set its $sd(C_1) = 0$.

Next, we describe the inference of $sd(C)$ for each subsequent cluster.
We leverage the local-first property, i.e., inferring semantic depth using the node that immediately precedes the indexing node of each cluster, as detailed below.
We infer $sd(C_j)$ from the node $v_i$ whose header phrase $p_i$ is the {\em first} instance of the visual pattern corresponding to $C_j$ encountered in the document. Formally,
let $i=ind(C_j)$ denote the {\em index of a cluster}, defined as the smallest index among the indices of the nodes belonging to $C_j$.
We call $v_{ind(C)}$ the {\em indexing node} of $C$.
For example, if a cluster $C_2 = \{v_{10}, v_{16}\}$, its index is $ind(C_2) = 10$, and $v_{10}$ is its indexing node.
We use $C(v)$ to denote the cluster that contains node $v$.

\begin{figure}[t]
\centering
    \begin{subfigure}[b]{0.45\linewidth}
         \centering
         \includegraphics[width=1\textwidth]{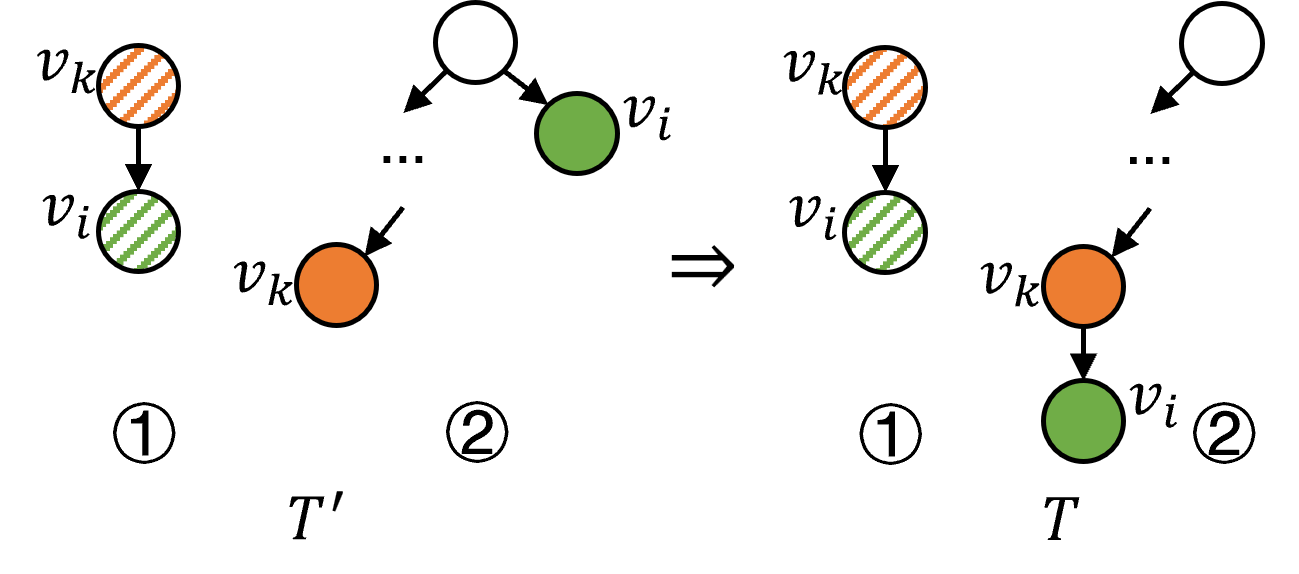}
         \vspace{-2em}
         \caption{Local-first.}
         \label{fig:intuition_infer_sd_1}
    \end{subfigure}
    \begin{subfigure}[b]{0.52\linewidth}
         \centering
         \includegraphics[width=1\textwidth]{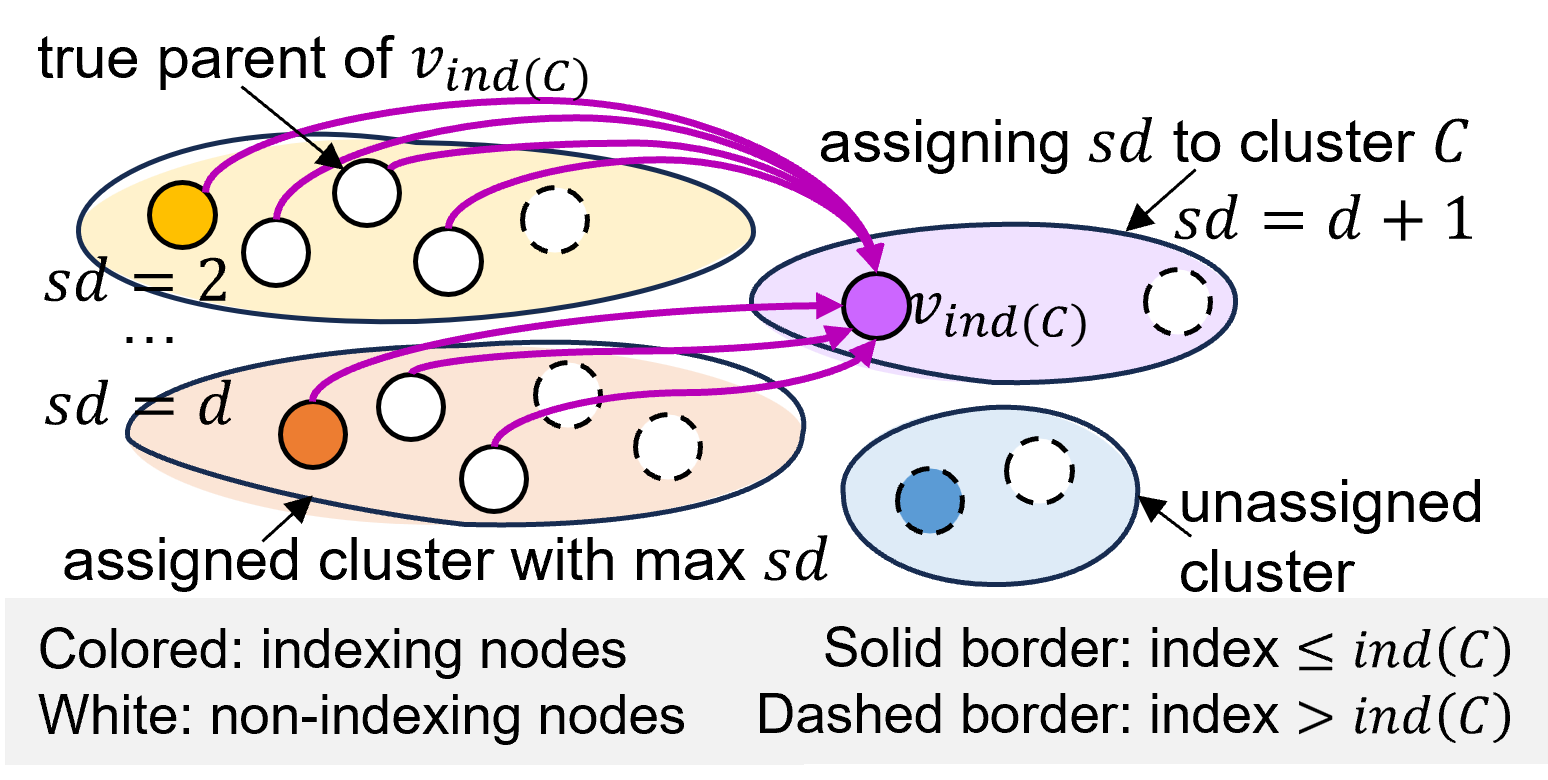}
         \vspace{-2em}
         \caption{Global-first.}
         \label{fig:intuition_infer_sd_2}
    \end{subfigure}
    \caption{Intuition of SD Inference.}
\end{figure}

\begin{figure*}[t]
\centering
    \begin{subfigure}[b]{0.1696\linewidth}
         \centering
         \includegraphics[width=1\textwidth]{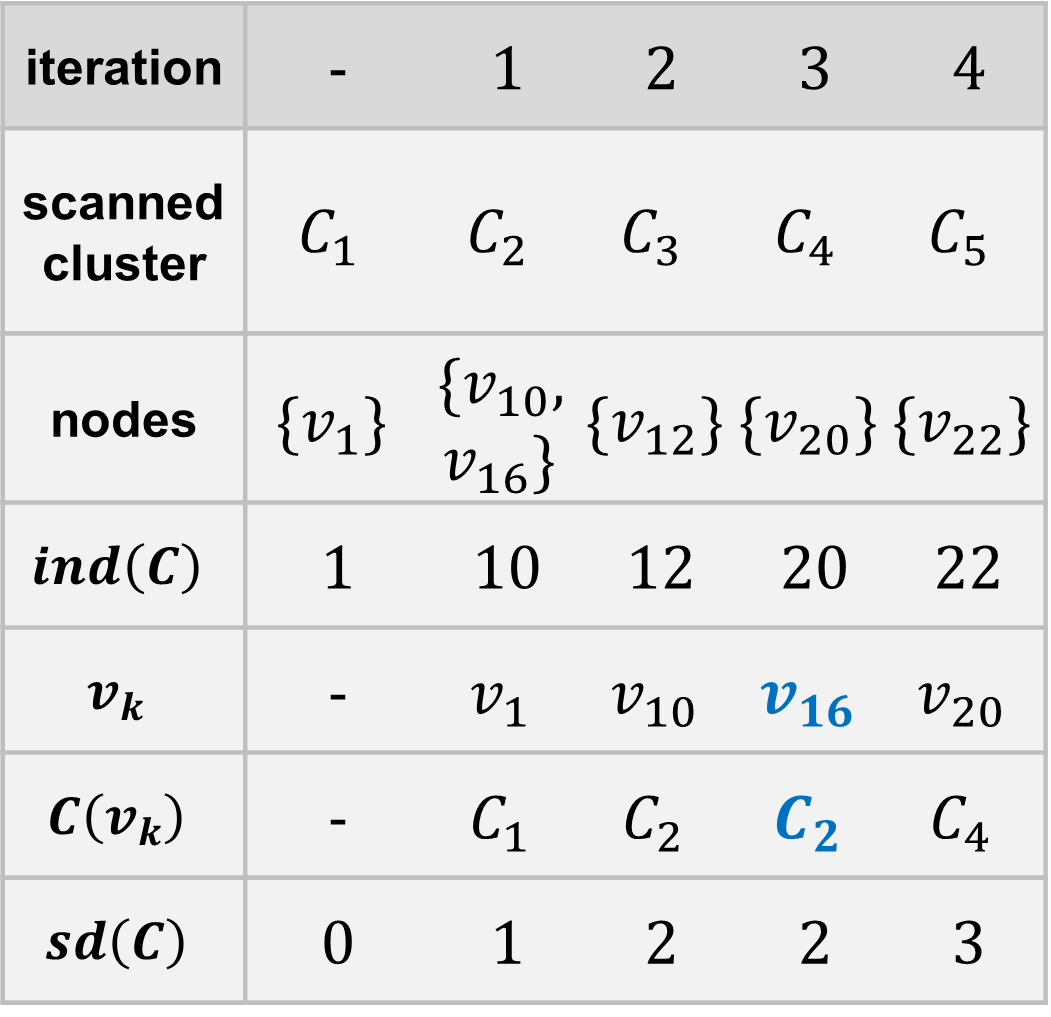}
         \vspace{-1.5em}
         \caption{Algorithm~\ref{alg:build_cg_1}}
         \label{fig:infer_sd_local}
    \end{subfigure}
    \begin{subfigure}[b]{0.1304\linewidth}
         \centering
         \includegraphics[width=1\textwidth]{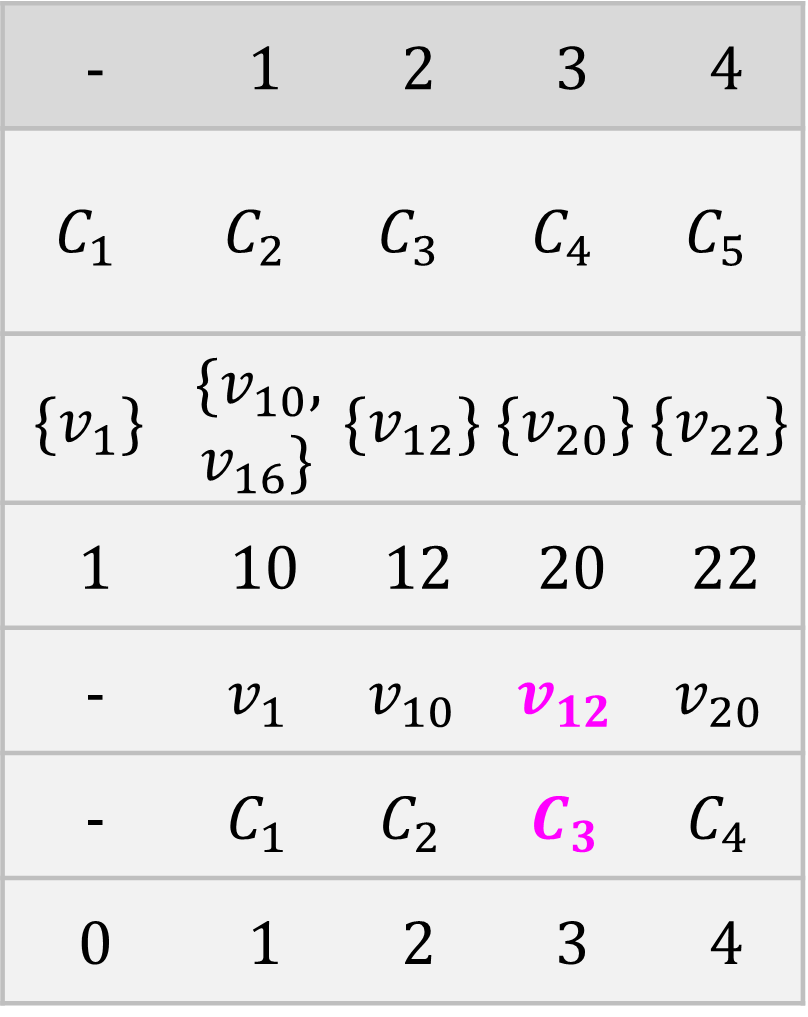}
         \vspace{-1.5em}
         \caption{Algorithm~\ref{alg:approxcg-2}}
         \label{fig:infer_sd_global}
    \end{subfigure}
    \begin{subfigure}{0.67\linewidth} %
         \centering
         \includegraphics[width=1\textwidth]{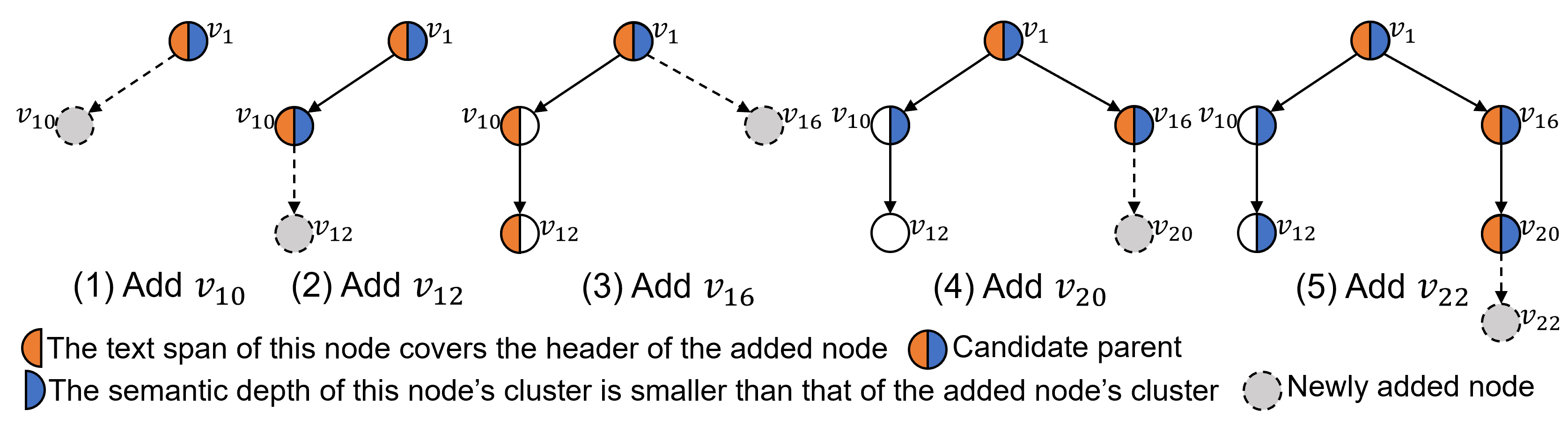}
         \vspace{-1.5em}
         \caption{Algorithm~\ref{alg:shtgen}}
         \label{fig:shtgen}
    \end{subfigure}
    \caption{An illustration of SHT inference.}
    \label{fig:sht_infer_example}
\end{figure*}

For the indexing node $v_i$ of a cluster $C_j$, suppose $v_k$ is the node with the largest index $k < i$. In other words, $v_k$ immediately precedes $v_i$ in the pre-order traversal of the true SHT $T'$, indicating one of the following two cases, as shown in \autoref{fig:intuition_infer_sd_1}: \circled{1} $v_k$ is the true parent of $v_i$, or \circled{2} $v_k$ is a descendant of the true parent of $v_i$ in $T'$---but we are unaware of which case we are in.
Recall that, to ensure robustness, we require the true parent of $v_i$ to remain an ancestor in the inferred SHT, as stated in Theorem~\ref{thm:ts-correct}.
\hide{}If we move $v_i$ to be a child of $v_k$, the true parent of $v_i$ in $T'$ remains an ancestor of $v_i$ in $T$, ensuring robustness.
Based on this intuition, we assign $C(v_i)$ a larger semantic depth than $C(v_k)$.
Note, however, that the latter case \circled{2} may lead to reduced compactness, but we accept this compactness cost to ensure robustness.

\begin{algorithm}[bt]
    \caption{$\code{Infer-SD}_1(V, \mathcal{C})$ // Local-First}
    \begin{algorithmic}[1]
        \State $sd(C_1)\leftarrow 0$
        \For{$C\in \mathcal{C}\backslash \{C_1\}$ in ascending order of $ind(C)$}
            \State $v_k\leftarrow \arg\max_{v_i\in V, i < ind(C)}: i$
            \State $sd(C)\leftarrow sd(C(v_k)) + 1$
        \EndFor
        \State \Return $sd$
        \Comment{a mapping from a cluster to its semantic depth}
    \end{algorithmic}
    \label{alg:build_cg_1}
\end{algorithm}

Algorithm~\ref{alg:build_cg_1} shows local-first SD inference.
Starting with $sd(C_1) = 0$ (Line 1), we scan the clusters in increasing order of their indices (Line 2-4).
For
cluster $C$, we find
$v_k$ with the largest index $k < {ind(C)}$ (Line 3).
We then set $sd(C)$ to be
$sd(C(v_k)) + 1$ to ensure robustness of the SHT inferred subsequently.
The scanning order of Algorithm~\ref{alg:build_cg_1} ensures that $sd(C(v_k))$ has already been inferred when computing $sd(C)$, since $ind(C(v_k)) \leq k < ind(C)$ \hidec{}
(Line 2).

\begin{example}
    \autoref{fig:infer_sd_local} illustrates Algorithm~\ref{alg:build_cg_1} using only the nodes explicitly shown in \autoref{fig:robust_sht}, with all other nodes hidden for simplicity.
    For example, to infer $sd$ for $C_4$,
    we identify the node
    $v_k = v_{16}$
    immediately prior to its indexing node $v_{20}$
    and assign $sd(C_4) = sd(C(v_{16})) +1 = sd(C_2) + 1 = 2$.
\label{ex:infer_h}
\end{example}
\vspace*{-\smallskipamount}

\subsection{Semantic Depth Inference: Global-First}
\label{subsec:infer-sd-global}

Next, we provide an alternate global-first approach for SD inference\footnote{As mentioned earlier, Algorithm~\ref{alg:approxcg-2} generalizes ZenDB's algorithm to work on non-well-formatted documents, and leverages the two-stage approach\keep{; see details in~Appendix~\ref{append:relation_zendb}.}\hidec{}
} (Algorithm~\ref{alg:approxcg-2}).
Like Algorithm~\ref{alg:build_cg_1}, Algorithm~\ref{alg:approxcg-2} assigns semantic depths to clusters based on their indices: clusters with smaller indices (i.e., visual patterns whose first occurrence appears earlier in the document) have smaller semantic depths.
Formally, after sorting the clusters
in
increasing order of their indices (Line 2), Algorithm~\ref{alg:approxcg-2} assigns each cluster $C$ a semantic depth that is one greater than the largest semantic depth
among all preceding clusters
(Line 3-4).

\begin{algorithm}[bt]
    \caption{$\code{Infer-SD}_{2}(V, \mathcal{C})$ // Global-First}
    \begin{algorithmic}[1]
    \State $sd(C_1)\leftarrow 0$
    \For{$C\in \mathcal{C}\backslash{\{C_1\}}$ in ascending order of $ind(C)$}
        \State $v_k\leftarrow \arg\max_{v_i\in V, i < ind(C)}: sd(C(v_i))$
        \State $sd(C) \leftarrow sd(C(v_k)) + 1$
    \EndFor
    \State \Return $sd$
    \end{algorithmic}
    \label{alg:approxcg-2}
\end{algorithm}

In effect, the semantic depth of each cluster $C$ is determined {\em globally} via its indexing node $v_{ind(C)}$:
Algorithm~\ref{alg:approxcg-2} considers {\em all} nodes with smaller indices than $v_{ind(C)}$.
We do not know which of these nodes is the true parent of $v_{ind(C)}$ (shown as purple arrows in \autoref{fig:intuition_infer_sd_2}), so we opt for the worst case and set $sd(C)$ to be one larger than the largest semantic depth among their clusters, ensuring that $sd(C)$ is greater than the semantic depth of the cluster of the true parent of $v_{ind(C)}$, as illustrated in \autoref{fig:intuition_infer_sd_2}.

\begin{example}
Using the same nodes and clusters as in \autoref{fig:infer_sd_local},
Algorithm~\ref{alg:approxcg-2} infers
$\small sd(C_i) = i - 1$ for each $\small i\in [1, 5]$,
with $\small sd(C_i) < sd(C_j)$ whenever $\small i < j$,
as shown in \autoref{fig:infer_sd_global}.
\end{example}
\vspace*{-\smallskipamount}

\subsection{SHT Assembly}
\label{subsec:sht-assembly}

Given nodes $V$ and inferred semantic depths $sd$ of their clusters $\mathcal{C}$ (using local- or global-first),
we now assemble $T$ by inferring the parent of each non-root node in $V$.

We infer the parents of nodes in increasing order of node indices, just as humans make sense of the structure of a document by reading the document sequentially.
Suppose we have a partial SHT $T$ containing all nodes with indices smaller than $i$, and we now infer the parent $v_k$ of node $v_i$. Let $v_j$ denote the true parent of $v_i$ (which is unknown).
Using the information available (i.e., $V, \mathcal{C}$, and $sd$), we can derive conditions that the true parent $v_j$ must satisfy, so that selecting $v_k$ ensures robustness while optimizing compactness $\mathsf{C} = \sum_{v\in V}ts(v)\backslash ts'(v)$. For simplicity, we assume that $T$ is the (partial) true SHT, i.e., $T$ matches the true SHT so far.
We define {\em candidate parents} of $v_i$ as the nodes that satisfy conditions {\bf 1)} and {\bf 2)} as stated below; this set must include the true parent $v_j$. We then select $v_k$ from this set to ensure robustness and optimize compactness, without knowing $v_j$.

\noindent{\bf 1) Coverage of Text Span:}
$p_i\in ts(v_j)$; i.e.,
the text span of $v_j$ inferred from $T$ must contain the header phrase $p_i$ of $v_i$ (e.g., a subsection header is contained within its parent section).
Since all nodes in $T$ have indices smaller than $i$, we have $j < i$, and $p_i\in ts(v_j)$ further implies that $v_j$ lies on the {\em rightmost path} of $T$, as described below, with its text span extending to the end of the document: $ts(v_j) = [p_j, \ldots, p_n]$, where $j < i < n$ and $p_n$ denotes the final phrase of the document.
The rightmost path of $T$ starts from the root and follows the rightmost child at each node down to a leaf. In \autoref{fig:shtgen}(5), $[v_1, v_{16}, v_{20}, v_{22}]$ is the rightmost path.

From {\bf 1)} alone, the candidate parents for $v_i$ are the nodes on the rightmost path of $T$.
We can therefore select $v_k$ as the candidate with the largest index (i.e., the rightmost leaf of $T$), inserting $v_i$ at the end of the rightmost path, guaranteeing that the true parent $v_j$ remains an ancestor of $v_i$, and ensuring robustness.

However, this choice of $v_k$ yields worst-case compactness: $ts(v_i)$ is added to the text span of every node on the rightmost path, contributing $r$ times to compactness $\mathsf{C}$, where $r$ is the number of candidate parents. To reduce $r$ and improve compactness, we introduce {\bf 2)} to further narrow the set of candidate parents.

\topic{2) Semantic Depth Constraint: $sd(C(v_j)) < sd(C(v_i))$}
Recall that semantic depths of clusters constrain semantic relationships between nodes: the semantic depth of a parent's cluster must be smaller than that of the child's cluster, i.e., $sd(C(v_j)) < sd(C(v_i))$.

Condition {\bf 2)} uses semantic depths $sd$ to further restrict the set of candidate parents to nodes on the rightmost path of $T$ whose clusters have smaller semantic depths than $C(v_i)$.
This set still contains the true parent $v_j$ while reducing the number of candidate parents $r$.
Selecting $v_k$ as the candidate parent with the largest index preserves robustness of $T$,
since both $v_k$ and $v_j$ lie on the rightmost path with $k \ge j$, implying that $v_k$ is equal to or a descendant of $v_j$.
This choice of $v_k$ also improves compactness of $T$, as the reduced set of candidate parents leads to a smaller $r$ than using {\bf 1)} alone.

\begin{algorithm}[bt]
    \caption{\code{Assemble-SHT}($V, \mathcal{C}, sd$)}
    \begin{algorithmic}[1]
        \State $T\leftarrow $ an empty tree
        \For{$v_i\in V$ in ascending order of their indices}
            \State $v_k \leftarrow \arg\max_{v_j\in T, p_i\in ts(v_j), sd(C(v_j)) < sd(C(v_i))}: j$
            \State Add node $v_i$ to $T$ as the rightmost child of $v_k$
        \EndFor
        \State \Return $T$
    \end{algorithmic}
    \label{alg:shtgen}
\end{algorithm}

Using the above two conditions, we can now present the algorithm to assemble an SHT $T$
in Algorithm~\ref{alg:shtgen}.
We consider the nodes in increasing order of their indices (e.g., from $v_1, v_{10}$ to $v_{22}$) (Line 2).
In each iteration, we insert $v_i$ into $T$ as the {\em rightmost child} of the selected parent $v_k$ (Line 4), so that the pre-order traversal of $T$ remains consistent with the nodes' indexing order and thus with the document's reading order.
As mentioned earlier, $v_k$ is selected as the candidate parent of $v_i$ with the largest index,
from among all $v_j$ in $T$ that satisfy:
{\bf 1)} $p_i\in ts(v_j)$, and {\bf 2)} $sd(C(v_j)) < sd(C(v_i))$.
Note that the text span $ts(v_j)$ of each node $v_j$ evolves as new nodes are inserted into $T$.
Therefore, in Line 3, $ts(v_j)$ denotes the text span of $v_j$ inferred from the partial tree $T$ at that point in the algorithm.

\begin{example}
    \autoref{fig:shtgen} illustrates Algorithm~\ref{alg:shtgen} using the nodes, clusters, and semantic depths from Figure~\ref{fig:infer_sd_local}.
    When inserting $v_{20}$ (step (4)),
    the candidate parents---nodes in $T$ with $\small sd<sd(C(v_{20})) = 2$ (blue)
    and text spans covering $p_{20}$ (orange)---are $v_1$ and $v_{16}$.
    Since $v_{16}$ has the larger index, $v_{20}$ is inserted as its rightmost child.
\end{example}

\subsection{Correctness and Compactness}
\label{subsec:correct-compact}

We now establish that our two-stage workflow correctly generates an SHT.
We also show that the resulting SHT has good compactness.
Correctness and compactness hold for any document and are independent of the SD inference approach.
With respect to robustness, different SD inference approaches apply to different classes of documents. We analyze these document classes in \Cref{sec:robustness-analysis}.

\subsubsection{SHT Correctness}
We can formally show
that the tree $T$ inferred by our two-stage workflow is an SHT:

\begin{theorem}\label{lem:sht}
    The tree $T$ output by Algorithm~\ref{alg:shtgen} is an SHT.
\end{theorem}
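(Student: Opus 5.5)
We argue by induction on the iterations of Algorithm~\ref{alg:shtgen}. The invariant is that after processing all nodes with index less than $i$, the partial tree $T$ has three properties: it is a directed rooted tree; its node set is exactly the set of header phrases with index less than $i$; and its pre-order traversal lists these nodes in strictly increasing order of index. At termination this is exactly Definition~\ref{def:sht}. The one-to-one correspondence with header phrases is immediate, since Line~2 iterates over each $v_i\in V$ exactly once and adds it as a single new node.

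\textbf{Base case.} The first node processed is the root $v_1$. For the base case to make sense, Line~3 must be read with the convention that when $T$ is empty, $v_1$ becomes the root. $C_1=\{v_1\}$ has $sd(C_1)=0$ under both $\code{Infer-SD}_1$ and $\code{Infer-SD}_2$, and the invariant holds trivially.

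\textbf{Inductive step.} The key point is to show that the $\arg\max$ in Line~3 is over a nonempty set, so that $v_i$ (with $i>1$) always gets a parent. I would show the root $v_1$ is always a candidate, in two parts.
\begin{itemize}
\item \emph{Text span.} $v_1$ has no non-descendant in $T$, so $ts(v_1)$ extends to the end of $D$ and contains $p_i$.
\item \emph{Semantic depth.} I need $sd(C(v_i))>0=sd(C_1)$. Both SD inference algorithms assign every cluster $C\neq C_1$ a value of the form $sd(C(v_k))+1$. Processing clusters in increasing order of $ind(C)$, induction shows all such values are at least $1$.
\end{itemize}
This positivity of semantic depths is the step I expect to need the most care. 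It depends on the SD inference stage, so I would state it as a small lemma valid for any $sd$ with $sd(C_1)=0$ and $sd(C)>0$ otherwise. The lemma also has to rule out $v_i\in C_1$ for $i>1$; this uses the assumption that $C_1$ contains only the root.

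Given a candidate, adding $v_i$ as a child of an existing node $v_k$ clearly preserves the tree property.

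\textbf{Pre-order.} It remains to show that $v_i$ is visited last in the new tree. Every candidate $v_j$ satisfies $p_i\in ts(v_j)$. Since $i$ exceeds all indices in $T$, this means $v_j$ has no non-descendant with index greater than $j$. So every later node in pre-order is a descendant of $v_j$, and $v_j$ lies on the rightmost path of $T$, exactly as observed in the text. Inserting $v_i$ as the \emph{rightmost} child of $v_k$ on that path puts it after all existing descendants of $v_k$. Every node after $v_k$'s subtree in pre-order would be a non-descendant of $v_k$ with larger index, and there are none. Hence $v_i$ is the last node in pre-order. Since $i$ is larger than all previous indices, the increasing order is preserved, which completes the induction.
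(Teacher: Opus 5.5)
Your proof is correct and follows the same route as the paper's: every candidate parent lies on the rightmost path, so inserting $v_i$ as the rightmost child places it last in pre-order, and induction preserves increasing index order. Your extra check that the root is always a candidate fills in a step the paper leaves implicit, namely that the $\arg\max$ in Line~3 is over a nonempty set; this uses $C_1=\{v_1\}$, $sd(C_1)=0$, and $sd(C)\ge 1$ for every other cluster.
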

\begin{myproof}
    When inserting $v_i$, all nodes whose text spans cover the header phrase of $v_i$ lie on the {\em rightmost path} of $T$ (e.g., in \autoref{fig:shtgen}, all nodes highlighted in orange are on the rightmost path at each insertion). $v_i$ is then inserted as the rightmost child of a node on this rightmost path, and therefore appears {\em last} in the pre-order traversal of $T$. By induction, after every iteration of Algorithm~\ref{alg:shtgen}, the pre-order traversal of $T$ remains consistent with the increasing order of the node indices.
\end{myproof}

\keep{Proof is  in~Appendix~\ref{append:proof_lem_sht}.
In brief, each node, when added, preserves the reading order thanks to condition {\bf 1)} in \Cref{subsec:sht-assembly}.}

\subsubsection{SHT Compactness}
We now show that SHTs inferred by our workflow have good compactness (irrespective of algorithm).

While true compactness $\mathsf{C} = \sum_{v\in V}ts(v)\backslash ts'(v)$ is hard to quantify without the true SHT $T'$, a suitable proxy is the depth of the inferred SHT $T$, defined as the number of nodes on the longest root-to-leaf path.
Deeper SHTs tend to be less compact\keep{: as \sys's SHT depth ratio to the true SHT {(i.e., depth($T$)/depth($T'$))} grows from ${\leq}1$ to $2.2$, average compactness drops from 0.92 to 0.83 (Appendix~\ref{append:correlation-compact-depth}). {We therefore use bounded depth as a proxy for good compactness, without claiming optimality with respect to compactness.}}
\hidec{}

Regardless of the SD inference approach used, the depth of an SHT $T$ inferred by Algorithm~\ref{alg:shtgen} is bounded by the number of clusters (i.e., the number of distinct visual patterns among headers), since the path from any node to the root consists of strictly decreasing semantic depths, terminating at zero\hidec{}.
In contrast, a deep SHT $T_d$ that links all nodes into a single chain  (\autoref{fig:deep_and_wide_shts}-1) has depth equal to the number of tree nodes\hide{}.
Since the number of clusters is usually much smaller than the number of tree nodes, $T$ is more compact than $T_d$, as shown on 600 real-world documents (Appendix~\ref{append:datasets}).  \hide{}

\section{Robustness Analysis and Hierarchy}
\label{sec:robustness-analysis}

In this section, we analyze the robustness of SHTs from different SD inference approaches.
We first identify document classes that ensure robustness for arbitrary semantic depths in \Cref{subsec:sd-robustnes}. These classes satisfy a property that we call {\em semantic depth conformance}, which depends on the inferred semantic depths; different semantic depths therefore induce different document classes where robustness is met.
Next, in \Cref{subsec:comp-infer-sd}, we compare robustness guarantees of local- (\Cref{subsec:infer-sd-local}) and global-first (\Cref{subsec:infer-sd-global}) SD inference. We also identify a document class for which local-first guarantees robustness while global-first does not.
Then, in \Cref{subsec:family-infer-sd}, we generalize local- and global-first by combining them into an infinite family of SD inference approaches.
In \Cref{subsec:hierarchy-robustness}, we summarize all proposed document classes by showing hierarchical relationships among them via a Venn diagram.
\keep{In \Cref{subsec:discussion_relationships_zendb}, we compare \sys with ZenDB~\cite{zendb}.}
\keep{Overall, this section completes our formal study of robust and compact SHT inference (\Cref{subsec:robust-sht}), complementing the correctness and compactness analyses  (\Cref{subsec:correct-compact}). {We establish guarantees of perfect robustness}, and characterize when they hold so that users can select approaches based on
document characteristics.}

\subsection{Semantic Depth Conformance}
\label{subsec:sd-robustnes}

Consider a document $D$, its true SHT $T'$, and node clusters $\mathcal{C}$.
We now establish a property of a document under which the SHT $T$ returned by Algorithm~\ref{alg:shtgen} is guaranteed to be robust for any given assignment of semantic depths $sd$ to clusters.

To ensure robustness of $T$, Theorem~\ref{thm:ts-correct} requires that every parent-child pair $(v_j, v_i)$ in the true SHT $T'$ remains an ancestor-descendant pair in $T$, which can be achieved by Algorithm~\ref{alg:shtgen} only if $sd(C(v_j)) < sd(C(v_i))$,
a cluster-level relationship encoded by the relative ordering of semantic depths, as enforced by condition {\bf 2)} in \Cref{subsec:sht-assembly}.
In other words, for any true parent-child pair in $T'$, the parent’s cluster must have a smaller semantic depth than the child’s cluster, a document property we refer to as {\em semantic depth conformance} and formalize in Theorem~\ref{theo:correctness-general}, so that Algorithm~\ref{alg:shtgen} permits the parent to remain as an ancestor of the child in $T$.
But, perhaps surprisingly, this property is sufficient to also guarantee robustness.

For convenience, we define $sd(v) = sd(C(v))$ for any node $v\in V$.

\begin{theorem} [SD Conformance Equals SHT Robustness.]
    Consider a document $D$, with nodes $V$ and clusters $\mathcal{C}$.
    Let $T$ be the SHT inferred by Algorithm~\ref{alg:shtgen}, and $T'$ be the true SHT.
    Given semantic depths $sd$ of the clusters inferred by an arbitrary SD inference approach,
    $T$ is robust {\em if and only if} $~\forall~v_i, v_j\in V$, whenever $v_j$ is the parent of $v_i$ in $T'$, it holds that $sd(v_j) < sd(v_i).$
\label{theo:correctness-general}
\end{theorem}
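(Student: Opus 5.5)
Both directions reduce, via Theorem~\ref{thm:ts-correct}, to the claim that every true parent-child pair $(v_j,v_i)$ of $T'$ is an ancestor-descendant pair in $T$. Throughout I use the invariant from Theorem~\ref{lem:sht}: when Algorithm~\ref{alg:shtgen} inserts $v_i$, the nodes of the partial tree whose text spans contain $p_i$ are exactly the nodes on its rightmost path. Also, $v_i$ is attached below a node of that path and never moved afterwards, so its ancestors in the final $T$ are exactly its ancestors at insertion time.

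\emph{Necessity} ($\Rightarrow$). Assume $T$ is robust and let $v_j$ be the parent of $v_i$ in $T'$. By Theorem~\ref{thm:ts-correct}, $v_j$ is an ancestor of $v_i$ in $T$. Along any root-to-node path of $T$, semantic depths strictly decrease toward the root, because Line~3 only attaches a node below a parent of strictly smaller $sd$. Following the path from $v_i$ up to $v_j$ therefore gives $sd(v_j)<sd(v_i)$.

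\emph{Sufficiency} ($\Leftarrow$). Assume SD conformance and prove, by induction on $i$, that when $v_i$ is inserted its true parent $v_j$ becomes an ancestor of $v_i$. The key lemma is that at the moment $v_i$ is inserted, $v_j$ lies on the rightmost path of the partial tree $T_{<i}$. I expect this to be the main obstacle, since the partial tree is not the true tree and we cannot assume $T_{<i}$ agrees with $T'$.

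To prove the lemma, consider any node $v_m$ with $j<m<i$. In $T'$, every such node is a descendant of $v_j$, by the pre-order property of $T'$ and the fact that $v_i$ is a child of $v_j$. By induction, following the chain of true parents from $v_m$ up to $v_j$ (all indices lie in $[j,m]$, all smaller than $i$), $v_j$ is an ancestor of $v_m$ in $T_{<i}$. In particular, the last inserted node $v_{i'}$ with $i'=\max\{m<i\}$ is either $v_j$ itself or a descendant of $v_j$. The rightmost path of $T_{<i}$ ends at the last inserted node, so it passes through $v_j$.

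Given the lemma, $v_j$ satisfies both conditions in Line~3: $p_i\in ts(v_j)$ by the lemma, and $sd(v_j)<sd(v_i)$ by conformance. The selected $v_k$ is the maximum-index candidate on the rightmost path, so $k\ge j$, which makes $v_k$ equal to $v_j$ or a descendant of it on that path. Hence $v_j$ is an ancestor of $v_i$ in $T$. Applying the $(\Leftarrow)$ direction of Theorem~\ref{thm:ts-correct} then gives robustness.

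A minor point to check is that the induction is well-founded. It is a strong induction over insertion order, and each step only uses true-parent chains among indices smaller than $i$, so the hypothesis is available when needed.
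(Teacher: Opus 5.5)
Your proof is correct and follows the paper's argument essentially step for step. Necessity uses Theorem~\ref{thm:ts-correct} plus strictly decreasing semantic depths along tree paths, and sufficiency is an induction showing the true parent $v_j$ lies on the rightmost path because every node inserted between $v_j$ and $v_i$ is already its descendant in the partial tree. Your induction hypothesis, stated via chains of true parents rather than the paper's ``the partial tree is robust,'' is just a more careful phrasing of the same step.
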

\begin{myproof} [$(\Rightarrow)$.]
    If $T$ is robust, $v_j$ is an ancestor of $v_i$ in $T$. Algorithm~\ref{alg:shtgen} ensures that for any node, the semantic depth of its cluster is larger than that of the cluster containing its parent. Applying this property along the ancestor chain from $v_i$ up to $v_j$ in $T$ gives $sd(v_j) < sd(v_i).$

    ($\Leftarrow$).
    We prove by induction that each time Algorithm~\ref{alg:shtgen} inserts a node $v_i$, it places $v_i$ as a descendant of its true parent $v_j$. Consequently, the tree $T$ remains robust throughout the execution of the algorithm.
    The key step is to show that, at the moment $v_i$ is inserted, {\em its true parent $v_j$ is a candidate parent of $v_i$ in Algorithm~\ref{alg:shtgen}}.
    Since $j < i$ and $sd(v_j) < sd(v_i)$, it suffices to show that $ts(v_j)$ covers the header phrase of $v_i$ at the point when $v_i$ is inserted, which is equivalent to showing that $v_j$ lies on the rightmost path of $T$.
    Indeed, all nodes inserted into $T$ after $v_j$ and before $v_i$ are placed as descendants of $v_j$ in $T$: they are descendants of $v_j$ in the true SHT, and remain descendants of $v_j$ in $T$ by our induction hypothesis that $T$ is robust before inserting $v_i$.
    Therefore, when $v_i$ is inserted, $v_j$ remains on the rightmost path of $T$.
\end{myproof}

\keep{We prove Theorem~\ref{theo:correctness-general} in Appendix~\ref{append:proof_correctness_general}.}
Note that when semantic depths are inferred using global-first, SD conformance reduces to requiring $ind(C(v_j)) < ind(C(v_i))$ for any parent-child pair $(v_j, v_i)$ in the true SHT, because the ordering of semantic depths is consistent with the indices of the corresponding clusters.

\subsection{Local vs. Global-First Depth Inference}
\label{subsec:comp-infer-sd}

To compare local- and global-first approaches,
we first distinguish them using two
concrete
examples in \Cref{subsec:diff-sd-example}:
one where local-first
infers
a robust SHT
while global-first fails,
and vice versa.
Then, in \Cref{subsec:diff-sd-class},
we characterize a class of documents
for which local-first guarantees robustness,
whereas global-first does not.

\subsubsection{Illustrative Scenarios}
\label{subsec:diff-sd-example}
\begin{figure*}[t]
\centering
    \begin{subfigure}[b]{0.14\linewidth}
         \centering
         \includegraphics[width=1\textwidth]{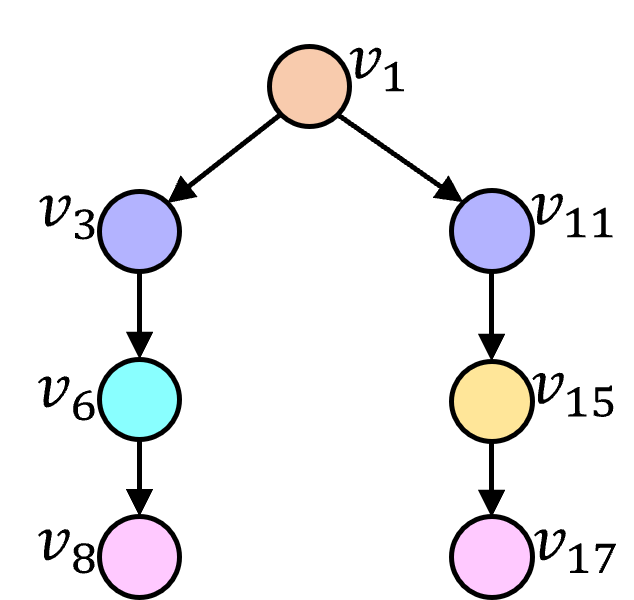}
         \caption{True SHT}
         \label{fig:comp_infer_sd_1_true}
    \end{subfigure}
    \begin{subfigure}[b]{0.14\linewidth}
         \centering
         \includegraphics[width=1\textwidth]{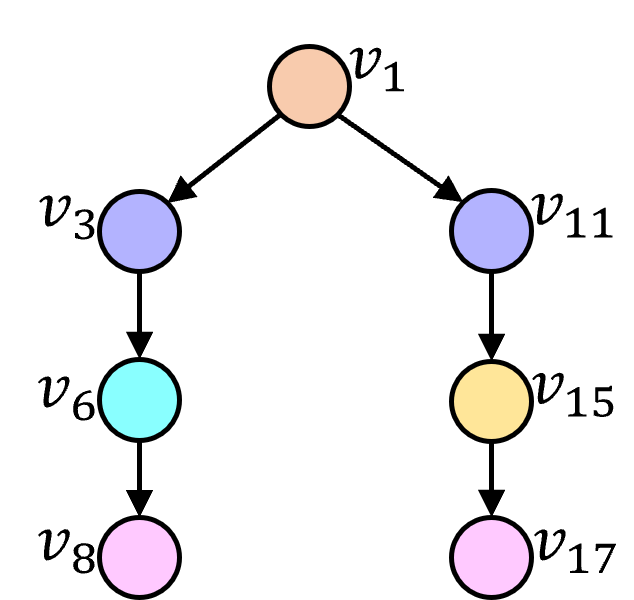}
         \caption{Algorithm~\ref{alg:build_cg_1}}
         \label{fig:comp_infer_sd_1_1}
    \end{subfigure}
    \begin{subfigure}[b]{0.14\linewidth}
         \centering
         \includegraphics[width=1\textwidth]{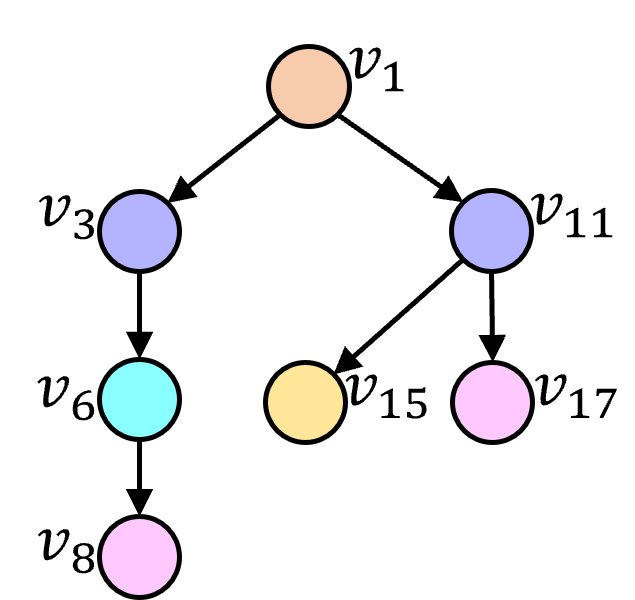}
         \caption{Algorithm~\ref{alg:approxcg-2}}
         \label{fig:comp_infer_sd_1_2}
    \end{subfigure}
    \begin{subfigure}[b]{0.14\linewidth}
         \centering
         \includegraphics[width=1\textwidth]{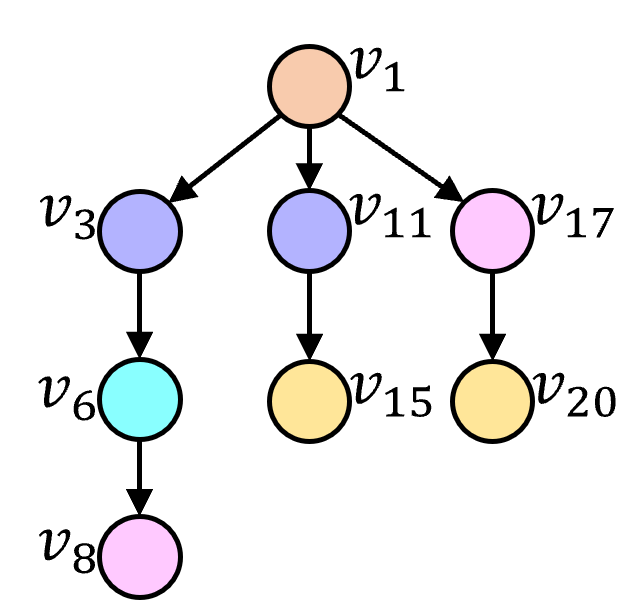}
         \caption{True SHT}
         \label{fig:comp_infer_sd_2_true}
    \end{subfigure}
    \begin{subfigure}[b]{0.14\linewidth}
         \centering
         \includegraphics[width=1\textwidth]{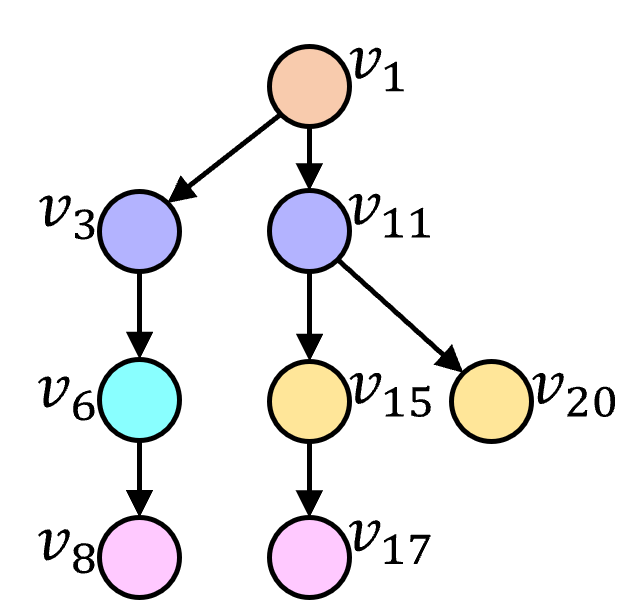}
         \caption{Algorithm~\ref{alg:build_cg_1}}
         \label{fig:comp_infer_sd_2_1}
    \end{subfigure}
    \begin{subfigure}[b]{0.14\linewidth}
         \centering
         \includegraphics[width=1\textwidth]{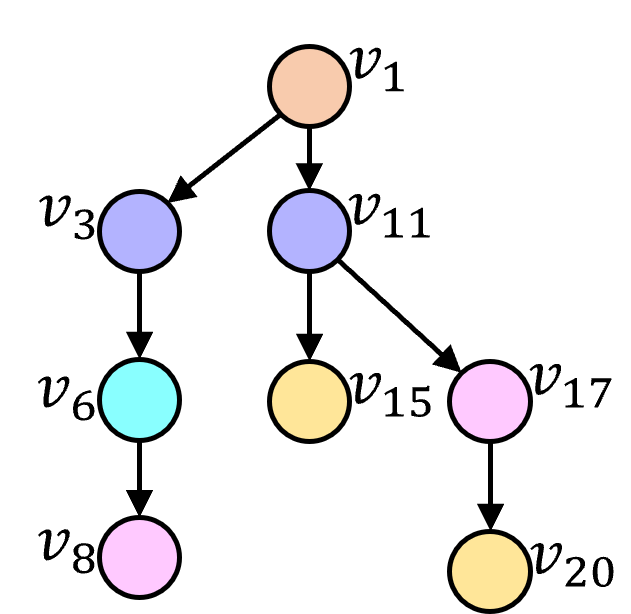}
         \caption{Algorithm~\ref{alg:approxcg-2}}
         \label{fig:comp_infer_sd_2_2}
    \end{subfigure}
    \begin{subfigure}[b]{0.1\linewidth}
         \centering
         \includegraphics[width=0.45\textwidth]{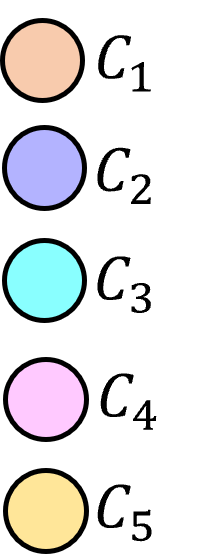}
         \caption{Clusters}
         \label{fig:comp_infer_sd_clusters}
    \end{subfigure}
    \caption{An example in which local-first generates a robust SHT while global-first does not (Figures \ref{fig:comp_infer_sd_1_true}, \ref{fig:comp_infer_sd_1_1}, \ref{fig:comp_infer_sd_1_2}), and vice versa (Figures \ref{fig:comp_infer_sd_2_true}, \ref{fig:comp_infer_sd_2_1}, \ref{fig:comp_infer_sd_2_2}). Colors indicate the cluster membership of each node in \autoref{fig:comp_infer_sd_clusters}.}
    \label{fig:comp_infer_sd}
\end{figure*}

Consider
documents $D_1, D_2$ with true SHTs
shown
in Figures~\ref{fig:comp_infer_sd_1_true} and~\ref{fig:comp_infer_sd_2_true}, respectively.
\keep{The clusters are
indicated
by colors in \autoref{fig:comp_infer_sd_clusters}
(e.g.,
$\small C_5 = \{v_{15}\}$
in $D_1$
and
$\small \{v_{15}, v_{20}\}$
in $D_2$).
For both documents, local-first assigns semantic depths $\small sd_1(C_{1, 2, 3, 4, 5}) = 0, 1, 2, 3, 2$; global-first assigns
$\small sd_2(C_{1..5}) = 0, 1, 2, 3, 4$.}

\keep{When SD conformance holds, the corresponding approach infers a robust SHT: local-first in $D_1$ under $sd_1$
and global-first in $D_2$ under $sd_2$
(Figures \ref{fig:comp_infer_sd_1_1} and \ref{fig:comp_infer_sd_2_2}).
When SD conformance is violated, non-robust SHTs result:
global-first in $D_1$ due to $sd_2(v_{15}) > sd_2(v_{17})$
and local-first in $D_2$ due to $sd_1(v_{17}) > sd_1(v_{20})$
(Figures \ref{fig:comp_infer_sd_1_2} and \ref{fig:comp_infer_sd_2_1}).}

\hidec{}

\subsubsection{A Document Class Favoring Local-First Inference}
\label{subsec:diff-sd-class}

\hidec{}
\keep{We next identify a document class for which local-first guarantees robustness, while global-first does not
(e.g., the document whose true SHT is in  \autoref{fig:comp_infer_sd_1_true}).
We call this class {\em depth-aligned} and provide a detailed explanation, as well as a proof, in Appendix~\ref{append:depth_aligned}.}
\hidec{}

\begin{definition} [Depth-Aligned Documents.]~\label{defn:depth_aligned}
    A document $D$, with nodes $V$, clusters $\mathcal{C}$, and true SHT $T'$ is {\em depth-aligned} if
    1) $\forall~C\in \mathcal{C}$, $v_{ind(C)}$, whenever it is not the root, is the leftmost child of its parent in $T'$, and
    2) for any $v_i, v_j \in V$ such that $C(v_i) = C(v_j)$, $v_i$ and $v_j$ are at the same level in $T'$.
\end{definition}

\hidec{}

\keep{For a depth-aligned document, the semantic depth assigned by local-first matches the true depth (i.e., the number of edges from the root to the node) of every node in its true SHT, guaranteeing SD conformance and thus robustness:}

\begin{theorem}~\label{thm:depth_aligned}
    Consider a document $D$, with nodes $V$, clusters $\mathcal{C}$, and true SHT $T'$.
    Let $sd$ denote the semantic depths inferred by local-first SD inference,
    and let $T$ denote the SHT inferred by Algorithm~\ref{alg:shtgen} from $sd$.
    If $D$ is depth-aligned, then $\forall~v_i\in V$, $sd(v_i)$ equals the depth of $v_i$ in $T'$, and $T$ is robust.
\end{theorem}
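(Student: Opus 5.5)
The plan is to prove the depth statement first and then obtain robustness from Theorem~\ref{theo:correctness-general}. Write $d(v)$ for the depth of $v$ in $T'$. By condition 2) of Definition~\ref{defn:depth_aligned}, all nodes in a cluster share the same depth. So $d$ is well defined on clusters: $d(C) = d(v_{ind(C)})$. It therefore suffices to show $sd(C) = d(v_{ind(C)})$ for every cluster $C$. Then $sd(v_i) = sd(C(v_i)) = d(C(v_i)) = d(v_i)$ for every node.

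I will prove $sd(C) = d(v_{ind(C)})$ by strong induction over clusters in ascending order of $ind(C)$, which is the same order Algorithm~\ref{alg:build_cg_1} processes them. For the base case, $C_1$ contains only the root $v_1$, so $sd(C_1) = 0 = d(v_1)$. For the inductive step, let $C \neq C_1$ with indexing node $v_i$, $i = ind(C)$, and let $v_k$ be the node with the largest index $k < i$ (Line~3 of Algorithm~\ref{alg:build_cg_1}).

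The key observation uses the pre-order property of Definition~\ref{def:sht}: $v_k$ immediately precedes $v_i$ in the pre-order traversal of $T'$. By condition 1), $v_i$ is the leftmost child of its parent $v_j$ in $T'$. In a pre-order traversal, the node visited just before a leftmost child is its parent, so $v_k = v_j$ and $d(v_k) = d(v_i) - 1$. Since $ind(C(v_k)) \le k < i$, the induction hypothesis applies to $C(v_k)$. Together with condition 2) it gives $sd(C(v_k)) = d(C(v_k)) = d(v_k)$, and hence $sd(C) = sd(C(v_k)) + 1 = d(v_k) + 1 = d(v_i)$.

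Robustness then follows directly. For any parent-child pair $(v_j, v_i)$ in $T'$ we have $sd(v_j) = d(v_j) = d(v_i) - 1 < sd(v_i)$. This is exactly SD conformance, so Theorem~\ref{theo:correctness-general} gives that $T$ is robust.

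The only step needing care is the claim that the pre-order predecessor of a leftmost child is its parent. This should follow from the definition of pre-order: the parent is visited, then its leftmost child's subtree. I should also check that $v_k$ exists (since $i > 1$) and that applying the induction hypothesis to $C(v_k)$ is legitimate. It is, because that cluster has strictly smaller index, even though $v_k$ itself need not be its indexing node. Condition 2) is what bridges that gap.
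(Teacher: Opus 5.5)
Your proof is correct and follows essentially the same argument as the paper. Both run an induction in index order, use condition 1) (the pre-order predecessor of a leftmost child is its parent) to get $sd$ equal to the parent's depth plus one at each indexing node, use condition 2) to carry depths across nodes of the same cluster, and then conclude robustness from SD conformance via Theorem~\ref{theo:correctness-general}. You index the induction by clusters rather than nodes and spell out the pre-order step and the final appeal to Theorem~\ref{theo:correctness-general}, both of which the paper leaves implicit.
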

\begin{myproof}
    At the moment before inserting node $v_i$ into $T$, assume that all nodes currently in $T$ have semantic depths equal to their true depths in $T'$.
    If $v_i$ is {\em not} the indexing node of its cluster, then by condition 2) of depth alignment, $sd(v_i)$ equals the true depth of the indexing node of $C(v_i)$, which in turn equals the true depth of $v_i$ in $T'$.
    If $v_i$ is the indexing node of its cluster, then by condition 1) of depth alignment, $sd(v_i) = sd(v_j) + 1$, where $v_j$ is the true parent of $v_i$ in $T'$. By our assumption that $sd(v_j)$ equals the true depth of $v_j$ in $T'$, it follows that $sd(v_i)$ also equals its true depth in $T'$.
\end{myproof}

\hidec{}

\subsection{Semantic Depth Inference Family}
\label{subsec:family-infer-sd}

We can further generalize SD inference by combining local- and global-first approaches into {\em an infinite family of approaches}.

Recall that local-first assigns $sd_1(C)$ to be one greater than $sd_1$ of the cluster corresponding to the node immediately prior to its indexing node (Algorithm~\ref{alg:build_cg_1}),
while global-first scans all nodes prior to its indexing node and assigns $sd_2(C)$ as one greater than the largest $sd_2$ among their clusters (Algorithm~\ref{alg:approxcg-2}).

We extend these two approaches by combining $sd_1$ and $sd_2$ linearly as
$sd(C) = \left(K(C)\cdot sd_1 + (1 - K(C)) \cdot sd_2\right) + 1$,
where $K(C)\in [0, 1]$ is a cluster-specific parameter.
Setting $K(C)=1 (0)$ for all clusters recovers the local-first (global-first) approach.
Arbitrary mappings $K: \mathcal{C} \to [0,1]$ define an infinite family of SD inference approaches $\mathcal{K}$, each denoted as $\code{Infer-SD}_K$\hidec{}\keep{; pseudo code can be found in~Appendix~\ref{append:infer_sd_family}.}

\hidec{}

\keep{$\code{Infer-SD}_K$ guarantees SHT robustness whenever the input document satisfies SD conformance (Theorem~\ref{theo:correctness-general}).}
\hidec{}
Because SD conformance depends on the semantic depths $sd$ inferred by $\code{Infer-SD}_K$, different $K$s induce different $sd$s and thus different conformance conditions.
Therefore, the document classes that are necessary and sufficient to guarantee robustness for $\code{Infer-SD}_K$ vary across different parameter mappings $K$.
In \Cref{subsec:hierarchy-robustness}, we identify a document class, termed {\em loosely-formatted} documents, which lies at the intersection of all such document classes, guaranteeing robustness for all approaches in  $\mathcal{K}$ induced by varying the mappings $K$.

\subsection{Hierarchy of Robustness}
\label{subsec:hierarchy-robustness}

In this section, we study two additional document classes: {\em well-formatted documents} proposed by Lin et al.~\cite{zendb}, and what we term {\em loosely-formatted documents}, which encompasses well-formatted documents and ensures SHT robustness for the family of SD inference approaches, $\mathcal{K}$, introduced in \Cref{subsec:family-infer-sd}.
We then summarize the relationships between all document classes considered thus far in the form of a hierarchy in \autoref{fig:hierarchy-robustness} and estimate their frequencies, as well as document classes that satisfy SD conformance for different approaches in $\mathcal{K}$.

\topic{Well-formatted Documents}
In a well-formatted document,
any pair of sibling nodes (i.e., nodes with the same parent) in the true SHT $T'$ belongs to the same cluster.
Additionally,
for any parent-child node pair in $T'$,
this relationship (i.e., which node is the parent and which is the child)
is consistent with the parent-child relationship observed at the first occurrence of their corresponding visual patterns\footnote{This statement is equivalent to ZenDB’s original definition of well-formattedness \keepc{(Definition~1 in~\cite{zendb})}, which we will establish in \hide{}\keepc{Appendix~\ref{append:relation_zendb}}.}.
Concretely, in $T'$, if $v_i$ is the parent of $v_j$, then, at the first occurrence of the visual pattern corresponding to $v_j$ (i.e., at node $v_l$, the indexing node of $C(v_j)$), $v_l$'s parent $v_k$ has a visual pattern that matches $v_i$. Here, $(v_k, v_l)$ is the first parent-child occurrence of the visual patterns corresponding to $(v_i, v_j)$.

For well-formatted documents, any approach in $\mathcal{K}$
produces robust SHTs after applying Algorithm~\ref{alg:shtgen}.
However, these documents constitute a small fraction of real-world documents, e.g., 23.39\% across 600 real-world documents (Appendix~\ref{append:datasets}).
So, instead\hidec{}, we define a broader document class below, called {\em loosely-formatted documents}, and prove that they also guarantee robustness for all approaches in $\mathcal{K}$ in Theorem~\ref{thm:loosely-robustness}.
Since well-formatted documents are a special case of loosely-formatted documents (see Theorem~\ref{thm:well-loosely-relationship} below), their robustness guarantees follow immediately.

\topic{Loosely-formatted Documents}
These documents relax well-formattedness by removing the requirement that siblings belong to the same cluster. And the requirement that each parent-child relationship in the true SHT matches the first {\em parent-child} occurrence of the corresponding visual patterns is replaced by the weaker requirement that it matches the first {\em ancestor-descendant} occurrence of the corresponding visual patterns.
Concretely, in $T'$, if $v_i$ is the parent of $v_j$,
then $v_l$ must have an {\em ancestor} $v_k$ whose visual pattern is the same as that of $v_i$,
where $(v_k, v_l)$ is the first ancestor-descendant occurrence of the visual patterns corresponding to $(v_i, v_j)$.
For example, for the node pair $(v_i, v_j) = (v_1, v_{16})$ in \autoref{fig:true_sht}, their corresponding visual patterns first appear as an ancestor-descendant pair at $(v_k, v_l) = (v_1, v_{10})$.
This ancestor-descendant relationship
correctly indicates the parent-child relationship between
$v_1$ and $v_{16}$.
\begin{definition} [Loosely-Formatted Documents.]~\label{def:loosely-format}
    A document $D$, with nodes $V$, clusters $\mathcal{C}$, and true SHT $T'$ is {\em loosely-formatted} if
    $\forall~v_i, v_j\in V$, when $v_i$ is the parent of $v_j$ in $T'$, there exists a pair of nodes $v_k, v_l \in V$ such that $v_k$ is an ancestor of $v_l$ in $T'$ with $l = ind(C(v_j))$ and $C(v_k) = C(v_i)$.
\end{definition}

As mentioned earlier,
loosely-formatted documents relax
{well-formattedness},
\keep{as stated below and proved in Appendix~\ref{append:proof_well_loosely_relationship}. }\hidec{}
\begin{theorem}~\label{thm:well-loosely-relationship}
    Every well-formatted document is also loosely-formatted, but the converse does not hold.
\end{theorem}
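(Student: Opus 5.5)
The plan is to prove the two claims separately: the inclusion directly from the definitions, and the strictness by an explicit small counterexample.

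\textbf{Inclusion.} Take a well-formatted document $D$ with true SHT $T'$, and fix an arbitrary parent-child pair $(v_i, v_j)$ in $T'$. Let $l = ind(C(v_j))$ be the index of the indexing node of $C(v_j)$. By the second condition of well-formattedness, which is the ``first parent-child occurrence'' condition as rephrased in the paper, the indexing node $v_l$ has a parent $v_k$ in $T'$ with $vp(p_k) = vp(p_i)$. Since clusters are defined by identical visual patterns, this gives $C(v_k) = C(v_i)$. A parent is in particular an ancestor, so the pair $(v_k, v_l)$ witnesses Definition~\ref{def:loosely-format} for $(v_i, v_j)$. Because the pair was arbitrary, $D$ is loosely-formatted.

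The one subtle point is to make sure this rephrased condition really applies at the indexing node $v_l$. In particular, $v_l$ must not be the root; otherwise it would have no parent. This holds because $C(v_j)$ contains the non-root node $v_j$, and $C_1 = \{v_1\}$ contains only the root. If the paper's rephrasing is taken as given, the inclusion step is immediate. If instead one starts from ZenDB's original definition, this is where the equivalence deferred to the appendix would have to be invoked, and I expect that to be the only real obstacle.

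\textbf{Strictness.} I will construct a small document that is loosely-formatted but not well-formatted. The simplest way is to break the sibling condition. Let the root $v_1$ (in $C_1$) have two children $v_2 \in C_2$ and $v_3 \in C_3$, with $C_2 \neq C_3$. These siblings lie in different clusters, so the document is not well-formatted. It is, however, loosely-formatted:
\begin{itemize}
\item For the pair $(v_1, v_2)$, the indexing node of $C_2$ is $v_2$ itself, and $v_1$ is its ancestor with $C(v_1) = C(v_1)$.
\item For the pair $(v_1, v_3)$, the indexing node of $C_3$ is $v_3$, and again $v_1$ is its ancestor.
\end{itemize}
If a further example is wanted that satisfies the sibling condition but fails the parent-versus-ancestor condition, one can use a chain $v_1 \to v_2 \to v_3$ together with a later node $v_4 \in C(v_3)$ that is a child of $v_1$. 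The first ancestor-descendant occurrence $(v_1, v_3)$ then still witnesses loose formatting, whereas the first parent-child occurrence pairs $C(v_3)$ with $C(v_2)$. This second example is optional; the first already suffices to show that the converse fails.
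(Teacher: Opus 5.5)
Your inclusion argument is exactly the paper's. The parent of the indexing node $v_l$ of $C(v_j)$ lies in $C(v_i)$ by well-formattedness, so it serves as the ancestor witness $v_k$. Your check that $v_l\neq v_1$ because $C_1=\{v_1\}$ is a detail the paper leaves implicit.

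The routes differ on strictness. The paper only remarks that well-formattedness ``does not allow parent-child relationships to be inferred from ancestor-descendant occurrences'' and never exhibits a document. You give an explicit three-node witness instead. That witness is correct and makes the strictness claim fully rigorous. Note that it exploits the \emph{other} relaxation (dropping the sibling requirement), not the parent-versus-ancestor one the paper's remark alludes to.

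Your optional second example, however, does not do what you say. In the tree $v_1\to v_2\to v_3$ with $v_4\in C(v_3)$ a child of $v_1$, the nodes $v_2$ and $v_4$ are siblings. Also $C(v_2)\neq C(v_3)=C(v_4)$, as your own description presupposes, since $v_3$ is the indexing node of $C(v_3)$. So the sibling condition fails as well. The document is still loosely- but not well-formatted, so it is a valid but redundant witness, not one that isolates the second condition.

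A witness that genuinely isolates it is the following. Let $v_1$ have children $v_2,v_5$, with $v_2\to v_3\to v_4$ and $v_5\to v_6$, and clusters $\{v_1\}$, $\{v_2,v_5\}$, $\{v_3\}$, $\{v_4,v_6\}$.
\begin{itemize}
\item The only siblings, $v_2$ and $v_5$, share a cluster, so the sibling condition holds.
\item For the pair $(v_5,v_6)$, the indexing node $v_4$ of $C(v_6)$ has parent $v_3\notin C(v_5)$, which violates well-formattedness.
\item For the same pair, the ancestor $v_2\in C(v_5)$ of $v_4$ satisfies Definition~\ref{def:loosely-format}.
\item Each remaining parent-child pair is witnessed by the parent of the corresponding indexing node.
\end{itemize}
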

\begin{myproof}
    If $D$ is well-formatted, then $\forall~v_i, v_j\in V$ where $v_i$ is the parent of $v_j$ in $T'$, $v_l$'s parent $v_s$ in $T'$ satisfies $C(v_s) = C(v_i)$, where $l = ind(C(v_j))$. Choosing $v_k = v_s$ satisfies all conditions in Definition~\ref{def:loosely-format}, and thus $D$ is loosely-formatted.
    The converse does not hold because well-formattedness does not allow parent-child relationships to be inferred from ancestor-descendant occurrences.
\end{myproof}

We further establish that applying our two-stage workflow with any SD inference approach from $\mathcal{K}$ on loosely-formatted documents guarantees robustness\keep{; proof can be found in Appendix~\ref{append:proof_loosely_robustness}}.
\begin{theorem} [Loose-Formattedness Implies SHT Robustness.]
    Let $T$ be the SHT inferred by Algorithm~\ref{alg:shtgen} after applying any approach from $\mathcal{K}$ for SD inference on a document $D$.
    If $D$ is loosely-formatted, then $T$ is robust.
    \label{thm:loosely-robustness}
\end{theorem}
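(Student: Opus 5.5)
By Theorem~\ref{theo:correctness-general}, it suffices to show that for every $K$ and every parent-child pair $(v_i, v_j)$ in $T'$ we have $sd(v_i) < sd(v_j)$, where $sd$ is the output of $\code{Infer-SD}_K$. I read the family's update rule with both terms taken from the current assignment $sd$. For a non-root cluster $C$, write $a(C) = sd(C(v_{k'}))$, where $v_{k'}$ is the node immediately preceding $v_{ind(C)}$, and $b(C) = \max_{i < ind(C)} sd(C(v_i))$. The rule is then $sd(C) = K(C)\, a(C) + (1-K(C))\, b(C) + 1$.

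\textbf{Step 1 (local inequality).} Since $v_{k'}$ is among the nodes preceding $v_{ind(C)}$, we have $a(C) \le b(C)$. A convex combination is at least its minimum, so $sd(C) \ge a(C) + 1 > a(C)$. Hence, for every $K \in [0,1]$, the indexing node of a cluster has strictly larger semantic depth than the node immediately before it. This is the only property of $\mathcal{K}$ I will use, and it holds for non-integer $sd$ as well.

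\textbf{Step 2 (strong induction on the child's index).} I prove the claim $P(m)$: for every parent-child pair $(v_i, v_j)$ in $T'$ with $j \le m$, $sd(v_i) < sd(v_j)$. Fix such a pair with $j = m$. By Definition~\ref{def:loosely-format} there is a node $v_k$ that is an ancestor of $v_l$ in $T'$, with $l = ind(C(v_j))$ and $C(v_k) = C(v_i)$. Note that $l \le j$, and $v_l$ is not the root because it has an ancestor. Let $v_p$ be the true parent of $v_l$. By the pre-order property (Definition~\ref{def:sht}), $v_{l-1}$, the node immediately preceding $v_l$, is either $v_p$ or a descendant of $v_p$. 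Since $v_k$ is an ancestor of $v_l$, it is equal to or an ancestor of $v_p$. So there is a downward path in $T'$ from $v_k$ to $v_{l-1}$, and every child on this path has index at most $l-1 < m$. The induction hypothesis therefore makes $sd$ strictly increasing along the path, giving $sd(v_k) \le sd(v_{l-1})$. Step 1 then gives $sd(v_{l-1}) < sd(v_l)$. Combining, $sd(v_i) = sd(v_k) < sd(v_l) = sd(v_j)$, since clusters share semantic depths.

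\textbf{Step 3 (conclusion).} This establishes SD conformance for every parent-child pair, so Theorem~\ref{theo:correctness-general} yields that $T$ is robust.

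The main obstacle is Step 2. Loose-formattedness only supplies an ancestor of the indexing node, not its immediate predecessor, while the family's rule only guarantees an inequality relative to the immediate predecessor (for local-first) or the maximum (for global-first). The bridge is the pre-order chain from $v_k$ to $v_{l-1}$, and one must order the induction so that every edge on this chain is already covered. Inducting on the child's index does this, because all those edges end at indices below $l \le j$.
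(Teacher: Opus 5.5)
Your proof is correct and follows essentially the same route as the paper's. Both use a minimal-counterexample (equivalently, strong induction) argument on the child's index, with the convex-combination bound at the indexing node $v_l$ of $C(v_j)$, and both use loose-formattedness to carry the inequality from $v_l$ over to $v_j$. The differences are only presentational: you avoid the paper's case split by always going through the pre-order predecessor of $v_l$ (since $a(C) \le b(C)$, the global term never matters), and you spell out the edge-by-edge chaining along ancestor paths that the paper compresses into ``by the minimality of $j$''.
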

\begin{myproof}
    Let the approach from $\mathcal{K}$ be $\code{Infer-SD}_K$ with an arbitrary parameter mapping $K$.
    $D$ is loosely-formatted.
    We prove this theorem by contradiction: suppose that the inferred SHT $T$ is not robust.
    By Theorem~\ref{theo:correctness-general}, there must exist a parent-child pair $(v_i, v_j)$ in the true SHT $T'$ that violates SD conformance, i.e., $sd(v_i) \geq sd(v_j).$
    Among all such violating pairs, choose $(v_i, v_j)$ with the smallest child index $j$.
    $v_{k_1}, v_{k_2}, sd_1$ and $sd_2$ are inferred by Algorithm~\ref{alg:infer-sd-family} while scanning the cluster $C(v_j)$ of $v_j$.
    We first establish two inequalities:\\
    1) $sd(v_i) \leq sd_1$.
    If $v_{k_1} = v_i$, the claim is immediate.
    Otherwise, $v_{k_1}$ is a true descendant of $v_i$ in $T'$.
    By the minimality of $j$, SD conformance holds for $(v_i, v_{k_1})$, implying $sd(v_i) < sd(v_{k_1}) = sd_1$.\\
    2) $sd(v_i) \leq sd_2$.
    This follows directly from $i < j$ and the definition of $sd_2$.\\
    We now consider $v_j$.\\
    \emph{Case (a): $v_j$ is the indexing node of its cluster.}
    By Algorithm~\ref{alg:infer-sd-family},
    $sd(v_j) = K(C(v_j)) \cdot sd_1 + \bigl(1 - K(C(v_j))\bigr) \cdot sd_2 + 1$
    for some $K(C(v_j)) \in [0,1]$.
    Since $sd(v_i) \le sd_1$ and $sd(v_i) \le sd_2$,
    it follows that $sd(v_i) < sd(v_j)$.\\
    \emph{Case (b): $v_j$ is not the indexing node of its cluster.}
    Let $v_l$ be the indexing node of $C(v_j)$, with $l = ind(C(v_j)) < j$.
    Because $D$ is loosely-formatted, there exists a node $v_k$ such that
    $v_k$ is an ancestor of $v_l$ in $T'$ and $C(v_k) = C(v_i)$.
    By the minimality of $j$, SD conformance holds for $(v_k, v_l)$, implying
    $sd(v_k) < sd(v_l)$.
    Since $sd(v_k) = sd(v_i)$ and $sd(v_l) = sd(v_j)$, we again obtain
    $sd(v_i) < sd(v_j)$.\\
    In both cases, we achieve $sd(v_i) < sd(v_j)$, a contradiction to our initial assumption that $sd(v_i) \geq sd(v_j)$.
    Therefore, the assumption that $T$ is not robust must be false, and $T$ is robust.
\end{myproof}

\topic{Hierarchy of Robustness}
We can now summarize the relationships among all document classes introduced using the Venn diagram in \autoref{fig:hierarchy-robustness}.
Well-formatted documents form the smallest class (yellow), and are strictly contained within loosely-formatted documents (Definition~\ref{def:loosely-format}, orange) via Theorem~\ref{thm:well-loosely-relationship}.
For any approach $\code{Infer-SD}_{K}$ from $\mathcal{K}$, loose-formattedness is a sufficient condition for SHT robustness (Theorem~\ref{thm:loosely-robustness}), while SD conformance is both necessary and sufficient (Theorem~\ref{theo:correctness-general}).
Consequently, for each $\code{Infer-SD}_{K}$, loosely-formatted documents are strictly contained within the class of documents that satisfy SD conformance (one such conformance class is highlighted in green).
We also highlight the corresponding conformance classes for local-first (blue) and global-first (purple) approaches, which are two instances of $\mathcal{K}$.
Loosely-formatted documents lie in the intersection of these conformance classes.
Depth-aligned documents (pink) guarantee robustness under the local-first approach, so they lie within the local-first conformance class, with possible intersections with other classes.

\topic{Frequency of Each Class}
Across 600 real-world documents spanning diverse domains from the source datasets of our evaluation (see Appendix~\ref{append:datasets}) with manually annotated true SHTs, we find 60.17\% are loosely-formatted, which is $2.57\times$ the fraction of well-formatted documents (23.39\%).
To further analyze class frequencies at scale,
we additionally collect 3,919 real-world
PDF documents~\cite{safedocs},
each with tables of contents encoded in metadata (i.e., extractable via \code{get\_toc} in {PyMuPDF}~\cite{pymupdf}),
avoiding manual true SHT construction.
Among them, 52.0\% are well formatted, 82.6\% loosely formatted, and 32.5\% depth-aligned; 85.7\% and 83.4\% satisfy SD conformance under the local- and global-first approaches, respectively.
Since local-first dominates, we focus on it in our evaluation\hide{}.

\hide{}
\label{subsec:equiv-zendb}

\keep{\hide{}}

\hidec{}

\hidec{}

\keep{\subsection{{Relationship to ZenDB}}
\label{subsec:discussion_relationships_zendb}
Lin et al.~\cite{zendb} \hide{} infer true SHTs for well-formatted documents, which
is equivalent to global-first \sys, as is proved in Appendix~\ref{append:relation_zendb}.
\sys differs from ZenDB in the following ways:
\begin{itemize}[leftmargin=*, topsep=0pt]
\itemsep0em
    \item A new problem formulation (\Cref{sec:defn}): ZenDB focuses on the true SHT; we formulate robust and compact SHT inference, which is practically useful and more feasible.
    \item A general two-stage framework (Sections~\ref{sec:sht-construction} and \ref{sec:robustness-analysis}): \sys splits SHT inference into two stages, of which ZenDB's algorithm is just one choice for the first stage, and generalizes global and local-first algorithms into an infinite family of approaches.
    \item Broad applicability (\Cref{sec:robustness-analysis}): \sys extends beyond ZenDB's narrow class ($<$25\% of real-world documents {in Appendix~\ref{append:datasets}}) to much broader classes ($>$60\%), with robustness guarantees.
\end{itemize}}
\section{Evaluation Setup}
\label{sec:eval_setup}

We
evaluate \sys\footnote{Code available at \url{https://github.com/Ruiying-Ma/SHED}} on real-world documents, assessing the robustness and compactness of its inferred SHTs and its \hide{} usefulness in agentic document question answering (QA).
Datasets and implementation of \sys are described in \Cref{sec:datasets} and \Cref{sec:sys_impl}, respectively.
We then present SHT evaluation in \Cref{sec:sht_eval}\hide{}\keep{,} agentic document QA in \Cref{sec:doc_qa}\keep{, and scalability analysis in \Cref{sec:scalability}}.

\subsection{Datasets}
\label{sec:datasets}

\keep{\begin{table}[t]
    \centering
    \small
    \caption{Statistics of our evaluation datasets. Avg. doc size: average word count per document in each dataset.\blank{} \#Docs (\#Qs): number of documents (questions) in each dataset.}
    \label{tab:dataset}
    \setlength{\tabcolsep}{4pt}
    \renewcommand{\arraystretch}{0.92}
    \begin{tabular}{@{}ccccc@{}}
    \toprule
    {\bf Statistics} & {\bf \ds{Civic}} & {\bf \ds{Contracts}} & {\bf \ds{Finance}} & {\bf \ds{Papers}}\\
    \midrule
    Avg. doc size & 7156 & 7863 & 178,905 & 17,352 \\
    \#Docs (\#Qs) & 107 & 248 & 100 & 500 \\

    \bottomrule
    \end{tabular}
\end{table}

\topic{Dataset creation}
We collected real-world PDF documents spanning
diverse lengths, domains,
and document templates, with QA sets
from existing document QA benchmarks:
civic agenda reports from~\cite{zendb, civic} (\ds{Civic}),
legal contracts from~\cite{contract-nli} (\ds{Contracts}),
financial filings from~\cite{financebench-paper} (\ds{Finance}),
and scientific papers from~\cite{qasper-paper} (\ds{Papers}).
Since these original datasets contain either
simple documents (e.g., \textasciitilde2K tokens
with shallow, single-level true SHTs)
or simple natural-language questions
(e.g., answerable from a single
sentence in the queried document),
they are insufficient for stress-testing our techniques
for SHT inference and document QA.
We therefore extend our datasets by composing
documents.
Specifically, for each question $q$
associated with a document $D$,
we replace $D$ with a composite document $D'$, constructed by concatenating $D$ with $k$ randomly selected documents
from the same dataset (e.g., $k=3$ for \ds{Civic}),
each preceded by a title page bearing its name.
Answering $q$ now
requires identifying $D$,
which appears as a section in $D'$, making the
task more challenging.
To construct the true SHTs for each composite document $D'$,
we first {\em manually} construct the SHTs
of its constituent documents, and concatenate their roots.
We use the modified datasets
for evaluation.

Dataset statistics are in \autoref{tab:dataset} (details in Appendix~\ref{append:datasets}), which reports average document size, the number of documents (\#Docs), and questions (\#Qs) per dataset. Here, \#Docs equals \#Qs, as each question is associated with a randomly chosen document.
\ds{Finance} has long documents, followed by \ds{Papers}; \ds{Civic} and \ds{Contracts} are shorter. \ds{Civic} \hide{}{has} complex nested structures, while \ds{Contract} \hide{}{has} dense layouts (i.e., short spans between headers).
}%

\keep{To evaluate how \sys performs on large-scale data, we additionally collect 3 datasets with complex hierarchical structure from different domains (law~\cite{cfr}, telecom~\cite{etsi}, environment~\cite{ferc}) with synthesized questions and manually labeled ground-truth answers, as we will discuss in \Cref{sec:scalability}.}

\hidec{}

\subsection{\sys Implementation}
\label{sec:sys_impl}

For header identification,
we use small pretrained models that can be run on CPUs from {Huridocs}~\cite{vgt-code}.
Following {Huridocs}' recommendations, we use a smaller model~\cite{lightgbm} for large-scale datasets
(\ds{Finance} and \ds{Papers}), trading slightly lower accuracy for substantially faster execution than the larger model~\cite{vgt-code, vgt-paper}.
For each header phrase, we extract its visual pattern using {PyMuPDF}~\cite{pymupdf}.
The header phrases are treated as nodes $V$ and clustered by their visual patterns, yielding clusters $\mathcal{C}$.
Finally, we use local-first SD inference, and construct SHTs $T$ using Algorithm~\ref{alg:shtgen}.
We adopt local- over global-first since more documents satisfy SD conformance under local-first\hide{}\keep{~(}as shown in prior statistics \hide{}\keep{in~}\Cref{subsec:hierarchy-robustness}\hide{}\keep{), for which local-first infers more robust and compact SHTs (Appendix~\ref{append:global_first_sht_infer})}.
We implemented \sys using \textasciitilde1.3k lines of Python.

\section{Evaluation of SHT Inference}
\label{sec:sht_eval}

We evaluate robustness and compactness of SHTs inferred by \sys and five baselines on documents from four datasets.
{\em Overall, \sys achieves average F-1 scores 13\%--68\% higher than non-LLM baselines and 9\%--15\% higher than significantly more expensive LLM-based approaches under two metrics, demonstrating its advantage in simultaneously ensuring robustness and optimizing compactness.}
We describe the baselines and metrics in \Cref{sec:setup_sht_eval},
and present evaluation results in \Cref{sec:sht_eval_results}.

\subsection{SHT Inference Experimental Setup}
\label{sec:setup_sht_eval}

We compare
\sys against five baselines in terms of SHT robustness and compactness
using two metrics.

\subsubsection{Baselines}
We have five baselines spanning rule-based, custom model-based, and LLM-based approaches:

\topic{Deep}
This approach uses the same nodes $V$ and clusters $\mathcal{C}$ as \sys, and constructs an SHT by linking the header nodes into a list in index order (e.g., \autoref{fig:deep_and_wide_shts}-1),
guaranteeing robustness at the cost of compactness.

\topic{Wide}
Using the same nodes $V$ and clusters $\mathcal{C}$, this approach selects the first node as the root and the remaining nodes as its children (e.g., \autoref{fig:deep_and_wide_shts}-2),
with low robustness and high compactness.

\topic{GROBID}
To infer an SHT for a composite document, this approach infers an SHT for each individual document using a model trained on scientific papers~\cite{grobid}, adds its title as a root node, and merges these individual document SHTs under a single virtual root. Since the model is not applied directly to composite documents, we treat GROBID's evaluation results as an optimistic upper bound.

\topic{LLM-text}
This approach extracts plain text from $D$ using PyMuPDF and prompts an LLM (GPT-5.4) to infer an SHT from the text in one shot (see prompt in
\keep{Appendix~\ref{append:prompts-sht}}\hidec{}).

\topic{LLM-vision}
This approach prompts an LLM (\hide{}GPT-5.4) with page images from $D$ to infer SHTs.
Pages are grouped sequentially into batches of 50 pages due to LLM input limits (prompts in~\keep{Appendix~\ref{append:prompts-sht}}\hidec{}); SHTs inferred per batch are merged under a virtual root.
\keep{This approach is state-of-the-art among \hide{} those jointly modeling PDF layout and semantics, inferring better structure than  others (, e.g., SmolDocling-256M-preview~\cite{nassar2025smoldocling}; Appendix~\ref{append:smoldocling})}.

\begin{table*}[t]
    \centering
    \caption{
Recall, precision, and F-1 scores for identified headers relative to the true headers, averaged per dataset and then across datasets (stratified average {\bf Avg.}), as well as the total cost (USD) of SHT inference per dataset ({\bf Tot.}: summed across datasets). \sys, Deep, and Wide use the same set of headers extracted with Huridocs; we report \sys only as the other two are identical on header inference and cost. \best{Green}: highest recall, precision, F-1 scores, and lowest cost per dataset}
    \label{tab:header_identification}
    \footnotesize
    \setlength{\tabcolsep}{0.7pt}
    \renewcommand{\arraystretch}{0.92}
    \begin{tabular}{@{}lrrrrr|rrrrr|rrrrr|rrrrr@{}}
    \toprule
    \multirow{2}{*}{\bf Approach}
    & \multicolumn{5}{c}{\bf Recall}
    & \multicolumn{5}{c}{\bf Precision}
    & \multicolumn{5}{c}{\bf F-1}
    & \multicolumn{5}{c}{\bf Total Cost (USD)}\\
    \cmidrule(lr){2-6}
    \cmidrule(lr){7-11}
    \cmidrule(lr){12-16}
    \cmidrule(lr){17-21}
    {} & {\bf \ds{\footnotesize Civic}} & {\bf \ds{\footnotesize Contracts}} & {\bf\ds{\footnotesize Finance}} & {\bf\ds{\footnotesize Papers}} & {\bf Avg.}
       & {\bf \ds{\footnotesize Civic}} & {\bf \ds{\footnotesize Contracts}} & {\bf\ds{\footnotesize Finance}} & {\bf\ds{\footnotesize Papers}} & {\bf Avg.}
       & {\bf \ds{\footnotesize Civic}} & {\bf \ds{\footnotesize Contracts}} & {\bf\ds{\footnotesize Finance}} & {\bf\ds{\footnotesize Papers}} & {\bf Avg.}
       & {\bf \ds{\footnotesize Civic}} & {\bf \ds{\footnotesize Contracts}} & {\bf\ds{\footnotesize Finance}} & {\bf\ds{\footnotesize Papers}} & {\bf Tot.}\\
    \midrule
    GROBID
    & 0.14 & 0.21 & 0.61 & 0.85 & 0.45
    & 0.73 & 0.61 & 0.58 & 0.85 & 0.69
    & 0.24 & 0.30 & 0.58 & 0.84 & 0.49
    & \best{0} & \best{0} & \best{0} & \best{0} & \best{0}\\

    LLM-text
    & 0.91 & 0.46 & 0.56 & 0.86 & 0.70
    & 0.84 & 0.80 & 0.52 & 0.88 & 0.76
    & 0.87 & 0.56 & 0.48 & 0.87 & 0.70
    & \hide{}\keep{5.7} & \hide{}\keep{10.7} & \hide{}\keep{104.0} & \hide{}\keep{36.3} & \hide{}\keep{156.6} \\

    LLM-vision
    & 0.91 & 0.37 & 0.72 & 0.85 & 0.71
    & 0.82 & 0.73 & 0.48 & 0.85 & 0.72
    & 0.85 & 0.47 & 0.55 & 0.85 & 0.68
    & \hide{}\keep{12.0} & \hide{}\keep{19.9} & \hide{}\keep{88.6} & \hide{}\keep{43.4} & \hide{}\keep{164.0} \\

    \sys
    & \best{0.93} & \best{0.85} & \best{0.95} & \best{0.92} & \best{0.91}
    & \best{0.89} & \best{0.93} & \best{0.60} & \best{0.90} & \best{0.83}
    & \best{0.91} & \best{0.88} & \best{0.72} & \best{0.91} & \best{0.86}
    & \best{0} & \best{0} & \best{0} & \best{0} & \best{0}\\
    \bottomrule
    \end{tabular}
\end{table*}

\hide{}{When analyzing}  correctness, robustness,
and compactness, we assume that the
node set $V$ used by \sys is identical to that of the true SHT $V'$.
This assumption need not hold
due to errors in header identification,
which we will evaluate and discuss in \Cref{sec:sht_eval_results}.

\subsubsection{Evaluation Metrics}
We design two metrics to evaluate
robustness and compactness: {\em top-down}, denoted $M_{\downarrow}$, and {\em bottom-up}, denoted $M_{\uparrow}$.
Given an inferred SHT $T$ (with header nodes only) for a document with true SHT $T'$,
$M_{\downarrow}$ compares, for each node $v$ in $T'$, the text spans $ts(v)$ and $ts'(v)$ inferred from $T$ and $T'$, respectively,
motivated by the definition of robustness (Definition~\ref{def:robust-sht}), which requires $ts'(v) \subseteq ts(v)$.
In contrast, $M_{\uparrow}$ operates in reverse: it compares %
the lists of ancestor headers of $v$ inferred from $T$ and $T'$, denoted $H(v)$ and $H'(v)$, respectively,
motivated by Theorem~\ref{thm:ts-correct}, which requires $H'(v) \subseteq H(v)$.

Let the node sets of $T$ and $T'$ be $V$ and $V'$, respectively. As mentioned earlier, $V$ may differ from $V'$.
We now define the top-down and bottom-up metrics for $T$.

\topic{\keep{Text-span overlap}}
\keep{Both metrics rely on the text-span overlap $ts(v)\cap ts'(v)$, computed as follows: we normalize $ts(v)$ and $ts'(v)$ (by canonicalizing characters, removing punctuation, lowercasing,
and normalizing whitespace), split each span into whitespace-separated tokens (i.e., a multiset of words), and take the intersection of the two multisets.
Text spans include all non-image text (e.g., tables, captions, page headers and footers) but exclude section headers (this has negligible impact since headers are far shorter than the text spans).
Long or empty sections are treated like other sections.}

\keep{\topic{Top-down ($M_{\downarrow}$)}
Consider a node $v\in V'$.
Suppose $v\in V$;
we consider the other case later.
$M_{\downarrow}$ compares $ts(v)$ (i.e., a set of words) with $ts'(v)$
in terms of recall ${\small \text{REC}(v)=\frac{|ts(v) \cap ts'(v)|}{|ts'(v)|}}$, precision \\ ${\small \text{PREC}(v)=\frac{|ts(v) \cap ts'(v)|}{|ts(v)|}}$, and F-1 score $\small {\text{F-1}(v)=\frac{2|ts(v) \cap ts'(v)|}{|ts(v)| + |ts'(v)|}}$,
which respectively measure robustness, compactness, and their trade-off.
The overall top-down robustness and compactness of $T$ are computed by averaging over all nodes in $V'$: $M_{\downarrow}(T) = \text{avg}_{v\in V'} f(v)$, where $f\in \{\text{REC, PREC, F-1}\}$.
We average over $V'$ to ensure that SHTs inferred from the same document by different methods, potentially with different node sets, are evaluated on a common basis.

If $v\notin V$,
we select the ``closest'' ancestor node $u \in V$ whose text span $ts(u)$ contains  $v$'s header, such that no child of $u$ has a text span that contains $v$'s header, to compute $f(v)$. If multiple such nodes $u$ exist (e.g., $v$ is referenced by multiple sections),
we choose the one that maximizes $f(v)$.

\topic{Bottom-up ($M_{\uparrow}$)}
$M_{\uparrow}(T)$ is computed analogously to $M_{\downarrow}(T)$,
except that  $ts(v)$ (resp.~$ts'(v)$) of each node $v\in V'$ is replaced by $H(v)$ ($H'(v)$),
where $H(v)$ ($H'(v)$) is defined as the list of header phrases corresponding to the nodes in the path from the root to $v$ in $T$ ($T'$), ordered by phrase indices.
}%

\hidec{}

\subsection{SHT Inference Results}
\label{sec:sht_eval_results}

We compare \sys with baselines in terms of header identification accuracy, SHT inference cost, and SHT robustness and compactness.

\begin{table*}[t]
    \centering
    \caption{$M_{\downarrow}$ and $M_{\uparrow}$ results, averaged per dataset and evaluated with $f\in \{\text{REC}, \text{PREC}, \text{F-1}\}$ for recall, precision, and F-1 score, representing robustness, compactness, and their trade-off, respectively. {\bf Avg.}: stratified averages across datasets. \best{Green}: highest per dataset per metric; costs for each scheme can be found in \Cref{tab:header_identification}.}
    \label{tab:sht_eval_results}
    \footnotesize
    \setlength{\tabcolsep}{1.8pt}
    \renewcommand{\arraystretch}{0.92}
    \begin{tabular}{@{}lrrrrr|rrrrr|rrrrr@{}}
    \toprule
    \multirow{2}{*}{\bf Approach}
    & \multicolumn{5}{c}{\bf Recall (Robustness)}
    & \multicolumn{5}{c}{\bf Precision (Compactness)}
    & \multicolumn{5}{c}{\bf F-1 (Trade-off)}\\
    \cmidrule(lr){2-6}
    \cmidrule(lr){7-11}
    \cmidrule(lr){12-16}
    {} & {\bf \ds{\footnotesize Civic}} & {\bf \ds{\footnotesize Contracts}} & {\bf\ds{\footnotesize Finance}} & {\bf\ds{\footnotesize Papers}} & {\bf Avg.}
       & {\bf \ds{\footnotesize Civic}} & {\bf \ds{\footnotesize Contracts}} & {\bf\ds{\footnotesize Finance}} & {\bf\ds{\footnotesize Papers}} & {\bf Avg.}
       & {\bf \ds{\footnotesize Civic}} & {\bf \ds{\footnotesize Contracts}} & {\bf\ds{\footnotesize Finance}} & {\bf\ds{\footnotesize Papers}} & {\bf Avg.}\\
    \midrule

    \multicolumn{16}{l}{\cellcolor{gray!10} \em Top-down: {$M_{\downarrow}(T) = avg_{v\in T'}f(v)$, where $f(v)$ calculates recall, precision, and F-1 score of $ts(v)$ relative to $ts'(v)$.}} \\

    Deep
    & 0.94 & \best{0.99} & \best{1.00} & \best{0.97} & \best{0.98}
    & 0.21 & 0.15 & 0.09 & 0.20 & 0.16
    & 0.29 & 0.20 & 0.13 & 0.27 & 0.22 \\

    Wide
    & 0.70 & 0.81 & 0.81 & 0.78 & 0.78
    & 0.93 & \best{0.90} & \best{0.97} & \best{0.97} & \best{0.94}
    & 0.72 & 0.74 & 0.81 & 0.79 & 0.77 \\

    GROBID
    & 0.48 & 0.84 & 0.74 & 0.76 & 0.71
    & 0.25 & 0.45 & 0.75 & 0.83 & 0.57
    & 0.26 & 0.47 & 0.66 & 0.76 & 0.54 \\

    LLM-text
    & 0.96 & 0.93 & 0.76 & 0.96 & 0.90
    & \best{0.98} & 0.67 & 0.61 & 0.96 & 0.81
    & 0.96 & 0.67 & 0.57 & 0.95 & 0.79 \\

    LLM-vision
    & \best{0.97} & 0.89 & 0.94 & 0.95 & 0.94
    & 0.97 & 0.61 & 0.77 & 0.94 & 0.82
    & \best{0.97} & 0.60 & 0.73 & 0.93 & 0.81 \\

    \sys
    & 0.90 & 0.97 & 0.94 & 0.96 & 0.94
    & 0.91 & 0.88 & 0.95 & 0.95 & 0.92
    & 0.89 & \best{0.87} & \best{0.90} & \best{0.94} & \best{0.90} \\

    \multicolumn{16}{l}{\cellcolor{gray!10} \em Bottom-up: {$M_{\uparrow}(T) = avg_{v\in T'}f(v)$, where $f(v)$ calculates recall, precision, and F-1 score of $H(v)$ relative to $H'(v)$.}} \\

    Deep
    & \best{0.94} & \best{0.91} & \best{0.98} & \best{0.85} & \best{0.92}
    & 0.28 & 0.26 & 0.15 & 0.33 & 0.26
    & 0.39 & 0.28 & 0.20 & 0.41 & 0.32 \\

    Wide
    & 0.32 & 0.41 & 0.26 & 0.25 & 0.31
    & \best{0.94} & \best{0.94} & \best{0.98} & \best{0.95} & \best{0.95}
    & 0.46 & 0.53 & 0.38 & 0.34 & 0.43 \\

    GROBID
    & 0.20 & 0.48 & 0.37 & 0.73 & 0.45
    & 0.38 & 0.73 & 0.57 & 0.91 & 0.65
    & 0.25 & 0.52 & 0.40 & 0.78 & 0.49 \\

    LLM-text
    & 0.76 & 0.60 & 0.40 & 0.79 & 0.64
    & 0.71 & 0.74 & 0.60 & 0.94 & 0.75
    & 0.72 & 0.62 & 0.45 & 0.83 & 0.66 \\

    LLM-vision
    & 0.92 & 0.47 & 0.50 & 0.77 & 0.67
    & 0.92 & 0.79 & 0.80 & \best{0.95} & 0.87
    & 0.91 & 0.55 & 0.57 & 0.82 & 0.71 \\

    \sys
    & 0.92 & 0.82 & 0.69 & 0.82 & 0.81
    & 0.93 & 0.92 & 0.71 & 0.93 & 0.87
    & \best{0.93} & \best{0.83} & \best{0.65} & \best{0.84} & \best{0.81} \\

    \bottomrule
    \end{tabular}
\end{table*}

\topic{Header identification}
As in \autoref{tab:header_identification},
\sys (which uses Huridocs) consistently achieves high accuracy,
whereas GROBID performs poorly on datasets \ds{Civic}, \ds{Contracts}, and \ds{Finance} that are outside its training domain.
LLM-based approaches perform poorly on \ds{Contracts} and \ds{Finance}. For \ds{Contracts}, which has a dense layout (i.e., short spans between headers), LLMs tend to ignore lower-level headers and focus only on top-level ones, leading to low recall. For \ds{Finance}, which contains long documents, LLMs lose track of headers midway~\cite{lost-in-the-middle}; this issue also occurs, to a lesser extent, in other datasets.
Moreover, 0.9\% (3.7\%) of headers identified by LLM-text (LLM-vision) are hallucinations that do not appear in the document.

\topic{Cost}
\keep{\autoref{tab:header_identification} reports the {\em end-to-end} cost of inferring SHTs
from a PDF, including preprocessing (i.e., layout parsing and visual pattern extraction) and tree construction.}
Only LLM-text and LLM-vision incur LLM costs, in total U.S. \hide{}\keep{\$157} and \hide{}\keep{\$164} across all datasets, respectively, using GPT-5.4.
\sys and other baselines incur no LLM cost\keep{, because they use open-source pre-trained models and Python programs for these steps.}

\topic{Robustness, compactness, and trade-off}
As shown in \autoref{tab:sht_eval_results},
across almost all datasets
for both $M_{\downarrow}$ and $M_{\uparrow}$,
Deep (Wide) has the highest values of recall
(precision, resply.) among all approaches.
Both approaches are closely
followed by \sys with minor
reductions in recall and precision,
whereas the remaining baselines
fail to achieve comparable performance.
However, both Deep and Wide exhibit low F-1 scores for both $M_{\downarrow}$ and $M_{\uparrow}$:
deep trees have poor compactness (and therefore precision),
while wide trees have poor robustness (i.e., recall).
GROBID has higher $M_{\downarrow}$ and $M_{\uparrow}$ values of recall, precision, and F-1 scores on \ds{Papers} than on the other datasets,
as \ds{Papers} closely aligns with its training domain,
showing GROBID's limited generalizability.
LLM-vision outperforms LLM-text in average F-1 score in both $M_{\downarrow}$ and $M_{\uparrow}$, \hide{}due to the additional visual information \hide{}that helps LLMs infer structures more accurately.

\topic{\keep{True header ablation}}
\keep{To isolate structure inference from header identification errors, we evaluate \sys with the true headers $V'$: this improves average recall, precision, and F-1 by only 3\%--6\% under both $M_{\downarrow}$ and $M_{\uparrow}$ (see Appendix~\ref{append:true_header_abl} for details). Subsequent evaluations therefore use the extracted headers, which perform comparably and reflect the realistic setting where $V'$ is unavailable.}

\section{Evaluation of Agentic Question Answering}
\label{sec:doc_qa}

We evaluate agentic document QA under four strategies, two with SHT assistance and two without.
{\em Overall, our SHT-based agent
achieves the highest average accuracy
and lowest cost, improving accuracy by 7\% and 19\%
over two non-SHT strategies,
respectively, while reducing cost by \hide{}\keep{$9\times$ and $10\times$} compared to strategies that process documents (and SHTs) in context.}
\keep{We also evaluate various agentic retrieval baselines, and {\em our SHT-based agent achieves 3\%--23\% higher accuracy while being 1.3--2$\times$ cheaper than others.}}
We further compare \sys against
SHT inference baselines using our SHT-based agent.  {\em \sys achieves accuracy and cost comparable to true SHTs, while outperforming other SHT inference approaches by 6\%--19\% in average accuracy, while reducing total cost by up to \hide{}\keep{$9\times$}.} We describe the four strategies and metrics in \Cref{sec:setup_doc_qa},  present the strategy evaluation results in \Cref{sec:eval_qa_sht},
\keep{the retrieval ablation results in \Cref{subsec:retrieval-ablation},}
and the SHT ablation results in \Cref{sec:eval_qa_sht_ablation}.

\subsection{Agentic QA Evaluation Setup}
\label{sec:setup_doc_qa}

\subsubsection{Evaluated Strategies}
\label{subsec:strategies_doc_qa}
We evaluate four strategies, spanning in-context processing~\cite{rag-or-long-context} and tool-based retrieval~\cite{bigeard2025finance}, with and without SHT assistance. We use GPT-5.4 as the backbone model.

For in-context strategies, a single LLM
API call places all
information (documents with/without SHTs) in the model context.
For tool-based retrieval, we evaluate ReAct-style agents equipped with different retrieval tools~\cite{yao2022react}.
Here, agents have no direct access to
the full document text;
instead, at each iteration an
agent reasons over the current {\em context},
including prior tool calls and their results, and generates a response via an LLM API call (with up to three retries for non-200 status codes), which may contain one or more tool calls.
The runtime executes these tool calls
and appends their outputs (i.e., retrieved content)
to the context for the next iteration.
We implement the agents using LangChain~\cite{langchain},
capped at 100 iterations.

We evaluate the following four strategies, each of
which includes the question as part of the prompt,
but differ in terms of how much document context
and which tools are provided:

\topic{Vanilla-in-context}
This strategy places the full serialized document text (obtained via \code{get\_text} from PyMuPDF) into the model context,
without using SHTs.

\topic{Vanilla-grep agent}
This strategy
gives the agent a \code{grep} tool
that takes a regex pattern and returns matching document chunks,
each containing
one match with 1000 characters of context.
Results are {\em streamed} across calls:
for a pattern, the first call returns the earliest match in reading order and subsequent calls
successive matches.
If the pattern changes,
the search restarts from the beginning;
if a previous pattern is reused
after other patterns,
the tool resumes from its last returned position. The agent decides when to stop searching a pattern.
\hide{}

\topic{SHT-augmented-context}
This strategy augments model context
with the {\em true} SHT corresponding to the document
prepended to the document text,
listing section headers one per line in reading order.
Each line begins with a {\em numeric prefix} indicating the section's hierarchical level, followed by a separator | and the section header (e.g., ``{\em 3.1 | Consolidated Statements of Income}'').
\blank{}
\hide{}

\topic{SHT-based agent}
This strategy provides an agent
with the {\em true} SHT for the document in-context
(via the user prompt, formatted as in the previous strategy).
The agent reasons over the SHT and
uses a \code{read\_section}
tool that takes a section's numeric
prefix\hide{}
and returns the corresponding section content
(i.e., text span).
\blank{}

\subsubsection{Evaluation Metrics}
We use task-specific metrics to evaluate the above strategies.
For \ds{Civic} questions that ask for a {\em set} of civic projects,
we use F-1 score.
For \ds{Contracts} questions with a single
true answer drawn from a fixed-size set, we use 0-1 accuracy.
For \ds{Finance} and \ds{Papers}, which involve free-form or open-ended answers,  we use LLM-as-judge~\cite{llm-as-a-judge} with GPT-4o-mini, following established practice~\cite{pdf-qa, grace, finsage, finbench-llmjudge} (prompts in Appendix~\ref{append:prompts-llm-as-a-judge}).
\keep{We report the total cost of end-to-end SHT inference and QA.}

\subsection{Usefulness of SHTs}
\label{sec:eval_qa_sht}

We report average accuracy
and total cost in \autoref{tab:eval_doc_qa_sht_usefulness_acc_cost}.

\topic{Accuracy}
Overall, SHTs improve accuracy over non-SHT strategies\hide{}: SHT-augmented-context
outperforms vanilla-in-context by 7\%,
and the SHT-based agent by 19\% over vanilla-grep.\hide{}
{Using SHTs} does not provide any
gains for \ds{Contracts} and \ds{Papers}
{(SHT-augmented-context vs. vanilla-in-context)},
while\hide{}
{the SHT-based agent improves accuracy by 30\% on \ds{Contracts} and 23\% on \ds{Papers} over vanilla-grep,}
\blank{}
likely because full document \hide{} provides enough context that enables LLMs to infer hierarchical structure. \hide{}
With SHT assistance, \hide{} the SHT-based agent has 6\% lower accuracy than SHT-augmented-context on \ds{Civic} because the questions require cross-section aggregation; the agent misses correct answers \hide{}\keepc{by checking only relevant sections in isolation}. In contrast, the agent achieves 5\% higher accuracy on \ds{Contracts} because the questions require reasoning over multiple sections but without cross-section dependencies,\hide{}
\keepc{making SHT-augmented-context more prone to context rot~\cite{context-rot} when all sections are processed in one shot.}
\blank{}

\topic{Cost}
\keepc{The SHT-based agent is overall \hide{}\keep{$10\times$ and $9\times$} cheaper than SHT-augmented-context and vanilla-in-context, respectively, especially on the large-scale \ds{Finance} setting (\hide{}\keep{$30\times$ and $29\times$}), because the agent loads only small relevant sections.
Compared with vanilla-grep\keepc{,} \hide{} the SHT-based agent is overall \hide{}\keep{$1.3\times$} cheaper, because SHTs provide additional semantics that enable more precise retrieval than keyword search (\code{grep}).
On \ds{Finance}, the SHT-based agent has higher cost \hide{}\keepc{than} vanilla-grep, because \hide{}\keepc{section-based} retrieval (\code{read\_section}) \keepc{often}  returns more content than \hide{}\keepc{1k-character chunks} (\code{grep})\hide{}\keepc{, especially in}\hide{} long documents, while also providing \hide{}\keepc{8\% higher accuracy}.
}
\hide{}
\blank{}

\begin{table}[t]
\centering
\caption{Average accuracy and total cost (USD, inside parenthesis) of four strategies by dataset. The last column reports stratified average accuracy ({\bf Avg.}) and total cost ({\bf Tot.}) across four datasets. \best{Green}: highest accuracy (lowest cost) per dataset. \blank{}}
\label{tab:eval_doc_qa_sht_usefulness_acc_cost}
\footnotesize
\setlength{\tabcolsep}{1.8pt}
\renewcommand{\arraystretch}{0.92}
\begin{tabular}{@{}
l@{\hspace{0.6pt}}
>{\raggedleft\arraybackslash}p{0.06\linewidth}@{\hspace{-0.9pt}}
>{\raggedleft\arraybackslash}p{0.06\linewidth}@{\hspace{0pt}}
>{\raggedleft\arraybackslash}p{0.09\linewidth}@{\hspace{-7pt}}
>{\raggedleft\arraybackslash}p{0.09\linewidth}@{\hspace{0pt}}
>{\raggedleft\arraybackslash}p{0.06\linewidth}@{\hspace{-2pt}}
>{\raggedleft\arraybackslash}p{0.09\linewidth}@{\hspace{0pt}}
>{\raggedleft\arraybackslash}p{0.07\linewidth}@{\hspace{-1pt}}
>{\raggedleft\arraybackslash}p{0.07\linewidth}@{\hspace{0pt}}
>{\raggedleft\arraybackslash}p{0.1\linewidth}@{\hspace{-5pt}}
>{\raggedleft\arraybackslash}p{0.1\linewidth}
@{}}
\toprule
{\bf Strategy}
& \multicolumn{2}{r}{\bf \ds{\footnotesize Civic}}
& \multicolumn{2}{r}{\bf \ds{\footnotesize Contracts}}
& \multicolumn{2}{r}{{\bf \ds{\footnotesize Finance}}}
& \multicolumn{2}{r}{{\bf \ds{\footnotesize Papers}}}
& \multicolumn{2}{r}{{\bf Avg. (Tot.)}} \\
\midrule

{Vanilla-in-context}
& 0.59 & (\hide{}\keep{3.1}) & 0.71 & (\hide{}\keep{6.7}) & 0.64 & (\hide{}\keep{112.5}) & \best{0.77} & (\hide{}\keep{32.6}) & 0.67 & (\hide{}\keep{154.9}) \\

Vanilla-grep agent
& 0.57 & (\hide{}\keep{3.4}) & 0.45 & (\hide{}\keep{\best{4.9}}) & 0.63 & (\hide{}\keep{\best{2.3}}) & 0.53 & (\hide{}\keep{11.9}) & 0.55 & (\hide{}\keep{22.5}) \\

{SHT-aug-context}
& \best{0.78} & (\hide{}\keep{4.1}) & 0.70 & (\hide{}\keep{9.0}) & \best{0.71} & (\hide{}\keep{118.2}) & \best{0.77} & (\hide{}\keep{34.1}) & \best{0.74} & (\hide{}\keep{165.3}) \\

SHT-based agent
& 0.72 & (\hide{}\keep{\best{2.1}}) & \best{0.75} & (\hide{}\keep{5.1}) & \best{0.71} & (\hide{}\keep{3.9}) & 0.76 & (\hide{}\keep{\best{6.3}}) & \best{0.74} & (\hide{}\keep{\best{17.4}}) \\

\bottomrule
\end{tabular}
\end{table}
\vspace{2mm}

\keep{\subsection{Agentic Retrieval Ablation}
\label{subsec:retrieval-ablation}
The SHT-based agent outperforms vanilla-grep, benefiting from both structure and a different retrieval tool (i.e., \code{read\_section}).
To isolate the two, we evaluate {\em SHT-grep} that uses \ttt{grep} with SHTs and two embedding-based agents, {\em vanilla-embed} and {\em SHT-embed}.
}

\topic{\keep{Setup}}
\keep{The SHT-grep and SHT-embed agents
have true SHT provided in the user prompt (as in SHT-based agent). For both embedding baselines, we split each document into 1,000-character chunks (as in vanilla-grep) and
embed them with the text-embedding-3-small model; the retrieval tool
takes a natural-language query and returns chunks by descending
cosine similarity, one-by-one.}

\topic{\keep{Results}}
\keep{\autoref{tab:retrieval_ablation} reports accuracy and cost of all agents.
SHTs improve accuracy for both retrieval tools: SHT-grep (SHT-embed) outperforms vanilla-grep (vanilla-embed) by 2--25\% (6--35\%).
The SHT-based agent is the best-performing one---3\%--23\% more accurate and 1.3--2$\times$ cheaper than all agentic baselines.
}

\begin{table}[t]
\centering
\caption{\keep{Average accuracy and total cost in parenthesis of all agents: {baselines (vanilla-grep from Tab.~\ref{tab:eval_doc_qa_sht_usefulness_acc_cost}, and the 3 other approaches from Sec.~\ref{subsec:retrieval-ablation})} and SHT-based (Tab.~\ref{tab:eval_doc_qa_sht_usefulness_acc_cost}). \best{Green}: highest acc. (lowest cost).}
}
\label{tab:retrieval_ablation}
\scriptsize
\setlength{\tabcolsep}{3pt}
\renewcommand{\arraystretch}{0.92}
\keep{%
\begin{tabular}{@{}l r@{\,}r r@{\,}r r@{\,}r r@{\,}r r@{\,}r@{}}
\toprule
{\bf Agent}
& \multicolumn{2}{r}{\bf \ds{\scriptsize Civic}}
& \multicolumn{2}{r}{\bf \ds{\scriptsize Contracts}}
& \multicolumn{2}{r}{{\bf \ds{\scriptsize Finance}}}
& \multicolumn{2}{r}{{\bf \ds{\scriptsize Papers}}}
& \multicolumn{2}{r}{{\bf Avg. Acc. (Tot. Cost)}} \\
\midrule
{Vanilla-embed}
& 0.38 & ({4.5}) & 0.39 & ({8.5}) & 0.64 & ({2.6}) & 0.64 & ({9.1}) & 0.51 & ({24.7}) \\
Vanilla-grep
& 0.57 & ({3.4}) & 0.45 & ({\best{4.9}}) & 0.63 & (\best{2.3}) & 0.53 & ({11.9}) & 0.55 & ({22.5}) \\
{SHT-grep}
& 0.62 & (5.5) & 0.70 & (12.5) & {0.65} & (5.6) & 0.61 & (11.3) & 0.65 & (34.8) \\
{SHT-embed}
& \best{0.75} & (4.9) & {0.74} & (15.2) & 0.64 & (5.8) & {0.69} & (9.2) & {0.71} & (35.0) \\
SHT-based
& 0.72 & (\hide{}{\best{2.1}}) & \best{0.75} & (\hide{}{{5.1}}) & \best{0.71} & (\hide{}{3.9}) & \best{0.76} & (\hide{}\keep{\best{6.3}}) & \best{0.74} & (\hide{}\keep{\best{17.4}}) \\
\bottomrule
\end{tabular}}
\end{table}

\subsection{SHT Ablation}
\label{sec:eval_qa_sht_ablation}

\begin{figure}
    \centering
    \begin{minipage}{0.52\linewidth}
        \includegraphics[width=\linewidth]{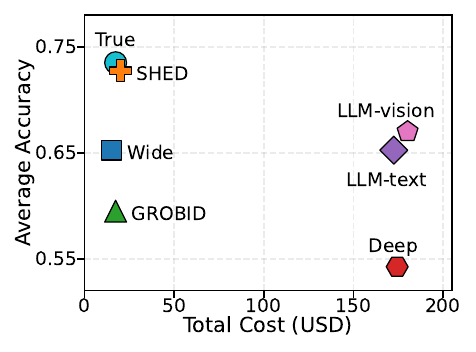}
    \end{minipage}%
    \hfill
    \begin{minipage}{0.44\linewidth}
        \caption{\keepc{Total cost (USD) vs. average accuracy of SHT-based agents using different SHTs across datasets.}}
        \label{fig:sht_ablation_acc_cost}
    \end{minipage}
    \vspace{-2mm}
\end{figure}

\begin{table}[t]
\centering
\caption{\keepc{Average accuracy and total cost (USD, inside parenthesis) of SHT-based agents using different SHTs by dataset. The last column reports stratified average accuracy ({\bf Avg.}) and total cost ({\bf Tot.}) across four datasets. \best{Green}: highest accuracy (lowest cost) per dataset.}}
\vspace{1mm}
\label{tab:sht_ablation_acc_cost}
\footnotesize
\setlength{\tabcolsep}{0.2pt}
\renewcommand{\arraystretch}{0.94}
\begin{tabular}{@{}
l@{\hspace{0.6pt}}
>{\raggedleft\arraybackslash}p{0.06\linewidth}@{\hspace{0pt}}
>{\raggedleft\arraybackslash}p{0.06\linewidth}@{\hspace{0pt}}
>{\raggedleft\arraybackslash}p{0.09\linewidth}@{\hspace{-5pt}}
>{\raggedleft\arraybackslash}p{0.09\linewidth}@{\hspace{0pt}}
>{\raggedleft\arraybackslash}p{0.08\linewidth}@{\hspace{-2pt}}
>{\raggedleft\arraybackslash}p{0.09\linewidth}@{\hspace{0pt}}
>{\raggedleft\arraybackslash}p{0.07\linewidth}@{\hspace{-0.5pt}}
>{\raggedleft\arraybackslash}p{0.07\linewidth}@{\hspace{0pt}}
>{\raggedleft\arraybackslash}p{0.08\linewidth}@{\hspace{-4.8pt}}
>{\raggedleft\arraybackslash}p{0.1\linewidth}
@{}}
\toprule
& \multicolumn{2}{r}{\bf \ds{\footnotesize Civic}}
& \multicolumn{2}{r}{\bf \ds{\footnotesize Contracts}}
& \multicolumn{2}{r}{{\bf \ds{\footnotesize Finance}}}
& \multicolumn{2}{r}{{\bf \ds{\footnotesize Papers}}}
& \multicolumn{2}{r}{{\bf Avg. (Tot.)}} \\
\midrule

Deep
& 0.59 & (\hide{}\keep{12.2}) & 0.63 & (\hide{}\keep{27.1}) & 0.20 & (\hide{}\keep{95.5}) & 0.75 & (\hide{}\keep{39.7}) & 0.54 & (\hide{}\keep{174.5}) \\

Wide
& 0.57 & (\hide{}\keep{2.0}) & 0.61 & (\hide{}\keep{\best{4.9}}) & 0.67 & (\hide{}\keep{\best{3.0}}) & \best{0.76} & (\hide{}\keep{\best{5.3}}) & 0.65 & (\hide{}\keep{\best{15.1}}) \\

GROBID
& 0.31 & (\hide{}\keep{\best{1.3}}) & 0.70 & (\hide{}\keep{5.8}) & 0.63 & (\hide{}\keep{5.0}) & 0.74 & (\hide{}\keep{5.3}) & 0.60 & (\hide{}\keep{17.4}) \\

LLM-text
& 0.68 & (\hide{}\keep{7.9}) & 0.60 & (\hide{}\keep{15.9}) & 0.59 & (\hide{}\keep{108.4}) & 0.74 & (\hide{}\keep{40.4}) & 0.65 & (\hide{}\keep{172.6}) \\

LLM-vision
& \best{0.75} & (\hide{}\keep{14.0}) & 0.50 & (\hide{}\keep{25.0}) & 0.69 & (\hide{}\keep{93.3}) & 0.74 & (\hide{}\keep{47.9}) & 0.67 & (\hide{}\keep{180.3}) \\

\sys
& 0.67 & (\hide{}\keep{2.2}) & \best{0.74} & (\hide{}\keep{6.4}) & \best{0.75} & (\hide{}\keep{5.4}) & 0.75 & (\hide{}\keep{6.2}) & \best{0.73} & (\hide{}\keep{20.1}) \\

\bottomrule
\end{tabular}
\end{table}

So far, we have shown that SHTs improve QA performance.\hide{}\blank{}
We now evaluate different SHT inference approaches (\Cref{sec:setup_sht_eval}) on document QA. We focus on the SHT-based agent for this comparison, as it achieves the best overall performance (accuracy and cost) among \hide{}{all} strategies.
Overall, we find that \sys achieves the highest average accuracy and lowest total cost among all SHT inference approaches, performing comparably to using true SHTs (\autoref{fig:sht_ablation_acc_cost}). Per-dataset results are in Table~\ref{tab:sht_ablation_acc_cost}.

\topic{Accuracy}
\sys outperforms {Deep} by 19\%, {GROBID} by 13\%, {Wide} and LLM-text by 8\%, and LLM-vision by 6\% in average accuracy.
\hide{}
Deep performs poorly  due to low compactness \keepc{causing context rot}, especially on \ds{Finance},
where the agent frequently exceeds the model context limit.
Wide achieves low accuracy due to poor robustness:
when the agent retrieves a section,
it only retrieves its prefix, and
may miss the content of
subsections appended as siblings
of that section in the tree\hide{}.
GROBID performs worse on out-of-domain datasets
than on \ds{Papers}, indicating limited generalizability.\blank{}\hide{}
LLM-based approaches lack guarantees on robustness or compactness, especially on \ds{Contracts}
with dense layout
and on large-scale \ds{Finance} documents.
All approaches have similar accuracy on \ds{Papers}, as it is trivial to infer structure only from section headers based on domain knowledge carried by the model.

\topic{Cost}
\sys has cost comparable to Wide and GROBID:
Wide is the cheapest due to strong compactness,
as each header contains only a small text span
(ending at the next header),
\keepc{\hide{}}\keepc{at the cost of lower robustness, leading to 8\% lower accuracy than \sys};\hide{}
\keepc{GROBID is cheap because its SHTs contain inaccurate headers (low F-1 scores in \autoref{tab:header_identification}\hide{}),  resulting in 13\% lower accuracy than \sys.}
\sys achieves {\em \keepc{both} low cost \keepc{and high accuracy}} through \hide{}\keepc{effective header identification and} \keepc{strong robustness-}compactness \keepc{\hide{}}\keepc{guarantees}.\blank{}
Deep has poor compactness, with each header’s text span
extending to the end of the document,
resulting in \hide{}\keep{$4\times$--$18\times$} higher cost than \sys \keepc{and up to 55\% accuracy drop across datasets due to context rot or \hide{}\keepc{exceeding the context limit}}.\blank{}
For LLM-text and LLM-vision, SHT inference dominates across datasets (\hide{}\keep{67\%--96\%} for LLM-text and 80\%--95\% for LLM-vision)\keepc{, making them \hide{}\keep{both about $9\times$} more expensive than \sys, while still yielding 8\% and 6\% lower average accuracy due to lack of robustness and compactness guarantees}.
\keep{\section{Scalability Analysis}
\label{sec:scalability}
To show \sys's scalability, \keep{we estimate the SHT inference latency of \sys (\Cref{subsec:sht_infer_latency}),} and repeat \hide{}{SHT inference and QA evaluations from} Sections~\ref{sec:sht_eval} \& \ref{sec:doc_qa}\hide{} on large-scale data,  varying composition sizes (\Cref{subsec:doc_composition})
and evaluating 3 new \hide{}datasets
\hide{}{of} individually complex documents\hide{}
(\Cref{subsec:scalability_complex_docs}).}

\keep{\subsection{SHT Inference Latency}
\label{subsec:sht_infer_latency}
We evaluate \sys's SHT inference latency on 80 sampled documents (20 per dataset; \autoref{tab:dataset}) and extrapolate to all 955 documents.
\sys is estimated to sequentially process all documents (\textasciitilde54K pages, 30M tokens) in 5 hours, showing good scalability; using parallelization would further reduce latency. {GROBID takes 4 hours but infers poor SHTs, LLM-based approaches take 2--5 hours but are costly, and SmolDocling-256M-preview is slow, taking \textasciitilde11.4 days.}
Details of the testbed, results\hide{}, and discussion are in Appendix~\ref{append:scalability_sht_infer}.
}

\keep{\subsection{Document Composition}
\label{subsec:doc_composition}
\hide{}{So far we} used composite documents, each concatenating a queried original with $k$ \hide{}{others, i.e., composition size $k+1$ (}$k$ varies by dataset\hide{}{; e.g.,} $k=3$ for \ds{Civic}; details in Appendix~\ref{append:datasets})\hide{}.
We now vary \hide{}{composition size from 1 (single originals) to }$2(k+1)$\hide{}{,}
\hide{}{mirroring growing} document size and structural complexity.
}
\keep{{Across all sizes, SHT-based agent with \sys is accurate
(0.71---0.73 on avgerage) at low cost (\$12--\$28 total, up to $28\times$ cheaper than baselines), on par with true SHTs
(0.71---0.74, \$11--\$25).
Its gains widen with size: it is up to  7\% to 19\% to 27\% more accurate than the baselines at sizes 1, $k+1$, and $2(k+1)$; its cost rises only 1.4$\times$ as size doubles.}
We provide detailed results and analysis in Appendix~\ref{append:doc_composition}.
}

\keep{\subsection{Complex Documents with Synthetic Q\&A}
\label{subsec:scalability_complex_docs}
To further show \sys's usefulness, we collected 3 new datasets with 119 complex documents and 142 synthetic Q\&A pairs. Specifically, we collected \ds{CFR}~\cite{cfr} (30 federal regulation codes), \ds{ETSI}~\cite{etsi} (49 radio-equipment standards), and \ds{FERC}~\cite{ferc} (40 environmental assessments), averaging 415K, 30K, and 49K tokens per document.
For each dataset, we designed realistic query templates, \hide{}instantiated 1--2 queries per document, \hide{}{and} manually labeled ground-truth answers.}
\keep{We evaluate SHT-based agent with \sys against Wide and two
non-SHT strategies, vanilla-in-context and vanilla-grep. Due to budget limits, we omit SHT-aug-context,
which is as accurate as SHT-based agent with \sys but far costlier (\Cref{sec:eval_qa_sht}), and other SHT baselines,
which are less accurate or costlier than Wide (\Cref{sec:eval_qa_sht_ablation}).
Overall, \sys has 7\%--18\% higher QA average accuracy at up to $15\times$ lower cost than baselines. Details can be found in Appendix~\ref{append:scalability_new_datasets}.}

\section{Related Work}
\label{sec:relatedwork}

We survey related work on document processing and analytics.

\topic{Systems for Document Analytics}
Recent systems target document processing \hide{}for analytical tasks.
Some systems focus on table extraction for predominantly tabular or structured documents~\cite{text2table, twix, evaporate},
which could complement our work.
Other systems provide declarative interfaces for general-purpose document processing~\cite{pz, lotus, docetl, caesura, thalamusDB, quest, agenticdata}
by treating documents as linear text sequences
and ignoring their inherent structure; our approach can be integrated to enhance their capabilities.
ZenDB~\cite{zendb} is the only document analytics
system that considers structure,
focusing on the setting where many
documents in a collection share the same template.
\hide{}
Our approach broadens its applicability with a new formulation of structure extraction and a general two-stage framework.

\topic{Document Structure Inference}
Significant prior work has considered document structure inference.
Some approaches target documents with explicit structures, such as HTML~\cite{structure-of-html}, which don't translate
to PDF and Word documents.
For the latter class of documents,
existing approaches fall into three classes.
{\em Rule-based approaches} rely on heuristics~\cite{pypdf, pdfplumber, pymupdf, pdfminer, pdfium},
which often fail on heterogeneous layouts
and provide no guarantees, being largely outperformed by LLM-based approaches.
Our approach instead provides \hide{} robustness guarantees sufficient for real-world applications.
{\em Learning-based approaches}~\cite{ml-sht-1, ml-sht-2, ml-sht-3, nougat} train models to infer structures but generalize poorly.
We choose GROBID~\cite{grobid}, trained primarily on scientific papers\hide{} {due to its wide adoption.}\hide{}
Our approach differs by leveraging learning-based approaches only for header identification, a simpler\hide{}
task~\cite{publaynet-paper, doclaynet-paper, docbank}, where models achieve high accuracy~\cite{vgt-paper, vgt-code, lightgbm}.
Finally, \emph{LLM-based approaches} can infer structures from plain text~\cite{structrag, bookrag} or page images~\cite{contextgem, llmaparse}.
As we see in our experiments,
these approaches are costly, hallucination-prone~\cite{llm-hallucination-1, llm-qa-7, lost-in-the-middle}, and yield poor robustness and compactness.

\topic{Agentic Document Processing}
Agents process documents across domains
such as coding~\cite{claude-code, codex}, finance~\cite{bigeard2025finance}, and scientific research~\cite{starace2025paperbench},
treating documents as bags of words by directly processing the full document~\cite{financebench-paper} or
via retrieval for relevant portions~\cite{sun2025docagent, agentic-rag-survey}.
They \hide{}ignore the hierarchical structure
\hide{}in documents\hide{} {that} is crucial for accuracy and cost,
\hide{}{as shown} by augmenting agents with \sys.
Related settings include agent skills~\cite{agent-skills} and knowledge management systems (e.g., LLM wikis~\cite{llm-wiki}), where agents operate over hierarchically organized document collections\hide{} rather than a single document. This hierarchical organization enables agents to progressively load relevant components, improving \hide{}{accuracy} and reducing cost~\cite{progressive-disclosure}.
For instance, agent skills require users to manually define and maintain task-specific hierarchies;
\sys can potentially be used here to recover hierarchical structures \hide{}{without manual specification}.

\blank{}

\vspace{-4mm}
\topic{Retrieval-Augmented Generation (RAG)}
\keepc{RAG retrieves the most relevant context for generation, compensating for LLM limitations on long documents.
Most techniques treat a document as a collection of chunks~\cite{rag, rag-2, agentic-rag-survey}, a recursive tree of summaries~\cite{raptor}, or a knowledge graph~\cite{graph-rag, hippo-rag},
ignoring
inherent hierarchical structure as in \sys. Our technique could be used by such approaches to improve RAG.
Other RAG techniques leverage hierarchical structures~\cite{bookrag, page-index, structrag}
but infer them \hide{}\keepc{using LLMs at significantly higher cost} and lacking guarantees.}

\hidec{}

\section{Conclusion}
We introduced \sys,
a two-stage workflow for SHT inference
that guarantees robustness \hide{}\keepc{with good compactness}.
We proposed local- or global-first SD inference,
and generalized them into an infinite family of approaches.
We further theoretically characterized
a taxonomy of document classes
for which \sys generates robust SHTs for each approach.
Empirically, \sys outperforms non-LLM
and expensive LLM-based
SHT inference baselines by 13\%--68\% and 9\%--15\% in F-1 scores, respectively, \hide{}at no cost.
Applying \sys to agentic document QA improves accuracy by \hide{}\keep{3\%--23\%} over baselines, reducing cost by up to \hide{}\keep{$10\times$}. \hide{}

\clearpage
\bibliographystyle{ACM-Reference-Format}
\bibliography{main}

\clearpage
\appendix
\section{Technical Details}

\subsection{Proof for \autoref{thm:ts-correct}}
\label{append:proof-ts-correct}

\begin{proof}
$(\Leftarrow)$
Let $T'$ be the true SHT. Any parent-child pair in $T'$ remains an ancestor-descendant pair in SHT $T$.
Suppose $v_{k}$ and $v_{k'}$ are the next non-descendant\keepc{s} of $v_i$ in $T$ and $T'$, respectively.
For $v_{k'-1}$, either $v_{k'-1} = v_i$, or it is the last descendant of $v_i$ in $T'$.
For the former case, $k'-1 = i < k$.
For the latter case, since $v_{i}$ is \hide{}\keepc{an ancestor} of $v_{k'-1}$ in $T'$, it is also an ancestor of $v_{k'-1}$ in $T$. Therefore, $v_{k'-1}$ is a descendant of $v_i$ in $T$, indicating $k'-1 < k$.
Thus $ts'(v_i) = [p_i, \ldots, p_{k'-1}] \subseteq [p_i, \ldots, p_{k-1}] = ts(v_i)$. $T$ is robust.

$(\Rightarrow)$
For each parent-child pair $(v_i, v_j)$ in the true SHT $T'$, $p_j\in ts'(v_i)$. Since $T$ is robust, we have $ts'(v_i)\subseteq ts(v_i)$. Therefore, $p_j\in ts(v_i)$, indicating that $v_i$ is an ancestor of $v_j$ in $T$.
\end{proof}

\subsection{Proof for \autoref{lem:sht}}
\label{append:proof_lem_sht}
\begin{proof}
    When inserting $v_i$, all nodes whose text spans cover the header phrase of $v_i$ lie on the {\em rightmost path} of $T$ (e.g., in \autoref{fig:shtgen}, all nodes highlighted in orange are on the rightmost path at each insertion). $v_i$ is then inserted as the rightmost child of a node on this rightmost path, and therefore appears {\em last} in the pre-order traversal of $T$. By induction, after every iteration of Algorithm~\ref{alg:shtgen}, the pre-order traversal of $T$ remains consistent with the increasing order of the node indices.
\end{proof}

\subsection{Proof for \autoref{theo:correctness-general}}
\label{append:proof_correctness_general}
\begin{proof}
    $(\Rightarrow)$.
    If $T$ is robust, $v_j$ is an ancestor of $v_i$ in $T$. Algorithm~\ref{alg:shtgen} ensures that for any node, the semantic depth of its cluster is larger than that of the cluster containing its parent. Applying this property along the ancestor chain from $v_i$ up to $v_j$ in $T$ gives $sd(v_j) < sd(v_i).$

    ($\Leftarrow$).
    We prove by induction that each time Algorithm~\ref{alg:shtgen} inserts a node $v_i$, it places $v_i$ as a descendant of its true parent $v_j$. Consequently, the tree $T$ remains robust throughout the execution of the algorithm.
    The key step is to show that, at the moment $v_i$ is inserted, {\em its true parent $v_j$ is a candidate parent of $v_i$ in Algorithm~\ref{alg:shtgen}}.
    Since $j < i$ and $sd(v_j) < sd(v_i)$, it suffices to show that $ts(v_j)$ covers the header phrase of $v_i$ \hide{}when $v_i$ is inserted, which is equivalent to showing that $v_j$ lies on the rightmost path of $T$.
    Indeed, all nodes inserted into $T$ after $v_j$ and before $v_i$ are placed as descendants of $v_j$ in $T$: they are descendants of $v_j$ in the true SHT, and remain descendants of $v_j$ in $T$ by our induction hypothesis that $T$ is robust before inserting $v_i$.
    Therefore, when $v_i$ is inserted, $v_j$ remains on the rightmost path of $T$.
\end{proof}

\subsection{Depth-Aligned Documents}
\label{append:depth_aligned}

In this class of documents, the indexing node of each cluster must be the {\em leftmost child} of its parent in the true SHT (i.e., only case \circled{1} is allowed to occur in \autoref{fig:intuition_infer_sd_1}, while case \circled{2} is disallowed).
In other words, when a new header formatting is introduced, it must be the first such header in a parent section.
Under this condition, the cluster corresponding to the child indexing node is assigned a semantic depth exactly one greater than that of its true parent's cluster by Algorithm~\ref{alg:build_cg_1}.

These documents further require clusters to be {\em level-unique}, meaning that all nodes within the same cluster are at the same level in the true SHT.
As a result, the semantic depth assigned by local-first matches the true depth of every node (i.e., the number of edges from the root to the node) in the true SHT, guaranteeing SD conformance and thus robustness. We refer to such documents as {\em depth-aligned documents} (Definition~\autoref{defn:depth_aligned}).

For example, \autoref{fig:comp_infer_sd_1_true} is the true SHT of a depth-aligned document.
Therefore, the SHT inferred by local-first is robust, as shown in \autoref{fig:comp_infer_sd_1_1}.

We prove \autoref{thm:depth_aligned} below:
\begin{proof}
    At the moment before inserting node $v_i$ into $T$, assume that all nodes currently in $T$ have semantic depths equal to their true depths in $T'$.
    If $v_i$ is {\em not} the indexing node of its cluster, then by condition 2) of depth alignment, $sd(v_i)$ equals the true depth of the indexing node of $C(v_i)$, which in turn equals the true depth of $v_i$ in $T'$.
    If $v_i$ is the indexing node of its cluster, then by condition 1) of depth alignment, $sd(v_i) = sd(v_j) + 1$, where $v_j$ is the true parent of $v_i$ in $T'$. By our assumption that $sd(v_j)$ equals the true depth of $v_j$ in $T'$, it follows that $sd(v_i)$ also equals its true depth in $T'$.
\end{proof}

Although depth alignment ensures that local-first produces robust SHTs, global-first may fail on such documents.
\autoref{fig:comp_infer_sd_1_2} shows a non-robust SHT produced by global-first for a depth-aligned document whose true SHT is shown in \autoref{fig:comp_infer_sd_1_true}.

\subsection{Proof for \autoref{thm:well-loosely-relationship}}
\label{append:proof_well_loosely_relationship}

\begin{proof}
    If $D$ is well-formatted, then $\forall~v_i, v_j\in V$ where $v_i$ is the parent of $v_j$ in $T'$, $v_l$'s parent $v_s$ in $T'$ satisfies $C(v_s) = C(v_i)$, where $l = ind(C(v_j))$. Choosing $v_k = v_s$ satisfies all conditions in Definition~\ref{def:loosely-format}, and thus $D$ is loosely-formatted.
    The converse does not hold because well-formattedness does not allow parent-child relationships to be inferred from ancestor-descendant occurrences.
\end{proof}

\subsection{Proof for \autoref{thm:loosely-robustness}}
\label{append:proof_loosely_robustness}
\begin{proof}
    Let the approach from $\mathcal{K}$ be $\code{Infer-SD}_K$ with an arbitrary parameter mapping $K$.
    \keepc{Assume }$D$ is loosely-formatted.
    We prove this theorem by contradiction: suppose that the inferred SHT $T$ is not robust.
    By Theorem~\ref{theo:correctness-general}, there must exist a parent-child pair $(v_i, v_j)$ in the true SHT $T'$ that violates SD conformance, i.e., $sd(v_i) \geq sd(v_j).$
    Among all such violating pairs, choose $(v_i, v_j)$ with the smallest child index $j$.
    $v_{k_1}, v_{k_2}, sd_1$ and $sd_2$ are inferred by Algorithm~\ref{alg:infer-sd-family} while scanning the cluster $C(v_j)$ of $v_j$.
    We first establish two inequalities:\\
    1) $sd(v_i) \leq sd_1$.
    If $v_{k_1} = v_i$, the claim is immediate.
    Otherwise, $v_{k_1}$ is a true descendant of $v_i$ in $T'$.
    By the minimality of $j$, SD conformance holds for $(v_i, v_{k_1})$, implying $sd(v_i) < sd(v_{k_1}) = sd_1$.\\
    2) $sd(v_i) \leq sd_2$.
    This follows directly from $i < j$ and the definition of $sd_2$.\\
    We now consider $v_j$.\\
    \emph{Case (a): $v_j$ is the indexing node of its cluster.}
    By Algorithm~\ref{alg:infer-sd-family},
    $sd(v_j) = K(C(v_j)) \cdot sd_1 + \bigl(1 - K(C(v_j))\bigr) \cdot sd_2 + 1$
    for some $K(C(v_j)) \in [0,1]$.
    Since $sd(v_i) \le sd_1$ and $sd(v_i) \le sd_2$,
    it follows that $sd(v_i) < sd(v_j)$.\\
    \emph{Case (b): $v_j$ is not the indexing node of its cluster.}
    Let $v_l$ be the indexing node of $C(v_j)$, with $l = ind(C(v_j)) < j$.
    Because $D$ is loosely-formatted, there exists a node $v_k$ such that
    $v_k$ is an ancestor of $v_l$ in $T'$ and $C(v_k) = C(v_i)$.
    By the minimality of $j$, SD conformance holds for $(v_k, v_l)$, implying
    $sd(v_k) < sd(v_l)$.
    Since $sd(v_k) = sd(v_i)$ and $sd(v_l) = sd(v_j)$, we again obtain
    $sd(v_i) < sd(v_j)$.\\
    In both cases, we \hide{}\keepc{obtain} $sd(v_i) < sd(v_j)$, a contradiction to our initial assumption that $sd(v_i) \geq sd(v_j)$.
    Therefore, the assumption that $T$ is not robust must be false, and $T$ is robust.
\end{proof}

\section{SD Inference Family}
\label{append:infer_sd_family}

We present the pseudo-code for our infinite family of SHT inference approaches, extending local and global-first, in Algorithm~\ref{alg:infer-sd-family}. $\code{Infer-SD}_K$ guarantees SHT robustness whenever the input document satisfies SD conformance, i.e., in the true SHT of $D$, the inferred semantic depth of each parent's cluster is smaller than that of its child's cluster (Theorem~\ref{theo:correctness-general}).

\begin{algorithm}[t]
    \caption{$\code{Infer-SD}_{K}(V, \mathcal{C})$}
    \begin{algorithmic}[1]
        \State $sd(C_1)\leftarrow 0$
        \For{$C\in \mathcal{C}\backslash \{C_1\}$ in ascending order of $ind(C)$}
            \State $v_{k_1}\leftarrow \arg\max_{v_i\in V, i < ind(C)}: i$
            \State $sd_1 \leftarrow sd(C(v_{k_1}))$
            \Comment{inferred by the local-first approach}
            \State $v_{k_2}\leftarrow \arg\max_{v_i\in V, i < ind(C)}: sd(C(v_i))$
            \State $sd_2 \leftarrow sd(C(v_{k_2}))$
            \Comment{inferred by the global-first approach}
            \State $sd(C)\leftarrow K(C) \cdot sd_1 + (1 - K(C)) \cdot sd_2 + 1$
        \EndFor
        \State \Return $sd$
    \end{algorithmic}
    \label{alg:infer-sd-family}
\end{algorithm}
\section{Relationship to ZenDB}
\label{append:relation_zendb}

Lin et al.~\cite{zendb} propose an algorithm for SHT construction (Algorithm~1 in \cite{zendb}).
In \Cref{append:equiv-zendb-sht},
we prove that this algorithm is equivalent to our two-stage workflow with global-first SD inference: for each node $v_i$, both approaches select as \hide{}\keepc{$v_i$'s} parent the node $v_j$ with the largest index $j < i$ such that $ind(C(v_j)) < ind(C(v_i))$.
Moreover, in \Cref{subsec:hierarchy-robustness},
we \hide{}\keepc{state} one requirement of well-formattedness as consistency between each parent-child pair and the first occurrences of their corresponding visual patterns.
We show in \Cref{append:equiv-zendb-well-format} that
this requirement is equivalent to Lin et al.’s original well-formattedness definition (Definition~1 in \cite{zendb}),
which requires consistency among the ancestor visual-pattern lists of all nodes within the same cluster.
Lin et al.
\cite{zendb} \hide{}\keepc{do} not go beyond well-formattedness and \hide{}\keepc{do} not consider properties such as robustness or compactness.

\subsection{SHT Construction}
\label{append:equiv-zendb-sht}
ZenDB proposes Algorithm~1 in~\cite{zendb} to construct SHTs.
We show that this algorithm is equivalent to our two-stage workflow, with global-first SD inference (Algorithm~\ref{alg:approxcg-2}) followed by SHT assembly (Algorithm~\ref{alg:shtgen}).
\begin{theorem}
    Our two-stage workflow with global-first SD inference is equivalent to Algorithm~1 of ZenDB.
\end{theorem}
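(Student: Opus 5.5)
The proof reduces both procedures to a single parent-selection rule and shows they coincide. The rule, stated in \Cref{append:relation_zendb}, is: the parent of $v_i$ is the node $v_j$ with the largest index $j<i$ such that $ind(C(v_j)) < ind(C(v_i))$. I would (i) recall ZenDB's Algorithm~1 and check that it selects parents by exactly this rule, and (ii) prove that Algorithm~\ref{alg:approxcg-2} followed by Algorithm~\ref{alg:shtgen} selects the same parent at every insertion. Since both procedures insert nodes in increasing index order, equal parents at every step give identical trees.

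\textbf{Step 1: global-first depths are monotone in cluster index.} Algorithm~\ref{alg:approxcg-2} scans clusters in ascending $ind(C)$. It assigns each new cluster one more than the maximum $sd$ over all clusters already met, i.e.\ over all clusters with smaller index. By induction, $sd$ is strictly increasing in $ind(C)$; in fact $sd(C)$ is the rank of $C$ in index order. Hence for any nodes $v_j,v_i$,
\[
sd(v_j)<sd(v_i) \iff ind(C(v_j))<ind(C(v_i)).
\]
So condition \textbf{2)} of Algorithm~\ref{alg:shtgen} is exactly the ZenDB-style index condition. This is the observation already noted after Theorem~\ref{theo:correctness-general}.

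\textbf{Step 2: the rightmost-path restriction is vacuous for this rule.} Algorithm~\ref{alg:shtgen} chooses the largest $j$ among nodes that lie on the rightmost path (condition \textbf{1)}) and satisfy $sd(v_j)<sd(v_i)$. Let $v_{j^*}$ be the largest-index node with $j^*<i$ and $ind(C(v_{j^*}))<ind(C(v_i))$. It suffices to show that $v_{j^*}$ is on the rightmost path of the partial tree when $v_i$ is inserted; it then is the argmax, since any candidate with larger index would contradict the maximality of $j^*$.

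\begin{itemize}
\item Every node $v_m$ with $j^*<m<i$ has $ind(C(v_m))\ge ind(C(v_i))>ind(C(v_{j^*}))$, by maximality of $j^*$. Hence $sd(v_m)>sd(v_{j^*})$.
\item I prove by induction over $m=j^*+1,\dots,i-1$ that $v_{j^*}$ remains on the rightmost path and each such $v_m$ is inserted as a descendant of $v_{j^*}$.
\item When $v_m$ is inserted, $v_{j^*}$ is on the rightmost path and satisfies the $sd$ condition, so it is a candidate parent. The chosen parent therefore has index $\ge j^*$ and lies on the rightmost path, so it is $v_{j^*}$ or one of its descendants.
\item Thus $v_m$ becomes a descendant of $v_{j^*}$, and $v_{j^*}$ stays on the rightmost path.
\end{itemize}

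Existence of $v_{j^*}$ for non-root $v_i$ follows because the root's cluster $C_1$ has the smallest index.

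\textbf{Main obstacle.} Step~2 is routine; the hard part is step (i), faithfully restating ZenDB's Algorithm~1 in our notation. ZenDB phrases parent selection through visual patterns and their first occurrences, and possibly through a stack of open headers. I would first try to show that its stack coincides with our rightmost path, ordered by first occurrence. I would then show that its pop rule ("pop while the top's pattern first appeared no earlier than the current pattern") leaves on top exactly the most recent node whose pattern first occurred earlier, which is $v_{j^*}$. With both procedures characterized by the same rule, equivalence follows by induction on insertion order.
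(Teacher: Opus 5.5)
\textbf{Gap: the ZenDB side of the equivalence is unproven, and your model of ZenDB is wrong.}

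Your Steps~1 and~2 are correct. Together they cover the half of the paper's proof about Algorithm~\ref{alg:approxcg-2} followed by Algorithm~\ref{alg:shtgen}. Your rightmost-path induction is in fact more explicit than the paper's appeal to ``as argued in 1)'' at that point.

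The gap is the other half, which is what the theorem is actually about. You never pin down ZenDB's Algorithm~1, and the model you assume is not the one the paper's proof uses. As the paper reads it, Algorithm~1 is not a single index-ordered pass with a stack. Its nested loops (Lines~4--10) process the clusters in order of first occurrence. For each node $v_i$ of the current cluster, it picks a parent among already-placed nodes of earlier clusters: $v_j\in V-V_C$ (Line~7), which is where $ind(C(v_j))<ind(C(v_i))$ comes from. Among those, it takes the largest-index one whose text span in the current partial tree contains $p_i$ (Line~8). So your sentence ``both procedures insert nodes in increasing index order'' is false for ZenDB. Your planned stack/pop-rule analysis targets a different algorithm.

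This matters because your Step~2 lemma does not transfer. When ZenDB places $v_i$, its partial tree contains nodes of earlier clusters with indices larger than $i$. It also lacks nodes of later clusters with indices smaller than $i$. So ``$p_i\in ts(v_j)$'' no longer means ``$v_j$ lies on the rightmost path''.

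The missing idea is the paper's direct argument that Line~8 is redundant:
\begin{itemize}
\item Let $v_j$ be the largest-index node with $j<i$ and $ind(C(v_j))<ind(C(v_i))$. Let $v_s$ be its next non-descendant in the current partial tree, so $ts(v_j)=[p_j,\ldots,p_{s-1}]$.
\item ZenDB works cluster by cluster, so $v_s$ was placed before $v_i$ and hence $ind(C(v_s))<ind(C(v_i))$.
\item If $s<i$, then $j<s<i$ contradicts the maximality of $j$. Hence $s>i$, and $p_i\in ts(v_j)$ holds automatically.
\end{itemize}
Once Line~8 is dropped, ZenDB's parent of $v_i$ is a fixed function of $V$ and $\mathcal{C}$ alone: the largest $j<i$ with smaller cluster index. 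It no longer depends on insertion order, so the cluster-ordered nested loops can be rewritten as a single index-ordered loop. Only after this reduction do your Steps~1 and~2 finish the proof.
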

\begin{proof}
\vspace{-10pt}
For simplicity, we use $ind(v_i)$ to denote $ind(C(v_i))$.

{\em 1) Removing the constraint on text span in Line~8 of Algorithm~1 of ZenDB.}
In Algorithm~1 of ZenDB,
when inserting a node $v_i$ (Line~6),
the algorithm chooses the node $v_j$ with the largest index $j <i$ such that $ind(v_j) < ind(v_i)$ as the parent of $v_i$,
where $ind(v_j) < ind(v_i)$ is implied by Line~7, and the maximality of $j$ is implied by Line~8.
At this point in Algorithm~1, $ts(v_j) = [p_j, \ldots, p_{s-1}]$, where $p_s$ corresponds to the next non-descendant $v_s$ of $v_j$ in $T$.
$v_s$ is thus inserted before $v_i$, implying $ind(v_s) < ind(v_i)$.
The maximality of $j$ further implies $s > i$.
Therefore, $p_i \in ts(v_j)$ holds automatically,
and the constraint $ind(v_i)\in ts(v_j)$ in Line~8 of Algorithm~1 can be removed without changing the algorithm’s behavior.

After removing this constraint on text span, the nested loops (Line 4-10) in Algorithm~1 of ZenDB can be reduced to the following single loop, where we additionally replace \keepc{the }constraint $v_j\in V - V_C$ (Line~7) with $v_j\in T$:
\begin{algorithm}[h]
    \begin{algorithmic}[1]
        \For{$v_i\in V$ in ascending order of $i$}
            \State $v_k \leftarrow \arg\max_{v_j\in T, ind(v_j) < ind(v_i)}: j$
            \State Add node $v_i$ to $T$ as the rightmost child of $v_k$.
        \EndFor
    \end{algorithmic}
\end{algorithm}

{\em 2) Equivalence to our two-stage workflow with global-first SD inference.}
Algorithm~\ref{alg:shtgen} (Line 3) selects $v_k$ as the parent of $v_i$, where $k$ is the largest index $j$ such that
$v_j\in V$, $p_i\in ts(v_j)$, and $sd(C(v_j)) < sd(C(v_i))$.
Under global-first SD inference, the ordering of semantic depths is consistent with the ordering of cluster indices.
Therefore, $sd(C(v_j)) < sd(C(v_i))$ is equivalent to $ind(v_j) < ind(v_i)$.
As argued in {\em 1)}, the condition $p_i \in ts(v_j)$ is also redundant for Algorithm~\ref{alg:shtgen} under global-first.
Line~3 of Algorithm~\ref{alg:shtgen} under global-first can thus be reduced to the following line:
\begin{algorithm}[h]
    \begin{algorithmic}[1]
        \State $v_k \leftarrow \arg\max_{v_j\in T, ind(v_j) < ind(v_i)}: j$
    \end{algorithmic}
\end{algorithm}

This line is identical to Line~2 of the single loop in {\em 1)}.
Therefore, our two-stage workflow under global-first is equivalent to Algorithm~1 of ZenDB.
\end{proof}

\subsection{Well-Formattedness}
\label{append:equiv-zendb-well-format}
ZenDB defines a document $D$, given its true SHT $T'$ and clusters $\mathcal{C}$, to be well-formatted via Definition~1 in \cite{zendb} if it satisfies the following two conditions:\\
1) Any two nodes that are siblings (i.e., have the same parent) in $T'$ must belong to the same cluster.\\
2) Any two nodes belonging to the same cluster must share identical {\em visual prefixes}, denoted $vispre$, defined as the sequences of clusters along the paths from the root to the nodes in $T'$.

In \Cref{subsec:hierarchy-robustness}, we rephrase the second condition as a consistency requirement between any parent-child pair and the first occurrence\keepc{s} of their corresponding visual patterns.
We now show that this rephrasing (i.e., condition (2) in Theorem~\ref{thm:well-format-equiv}) \hide{}\keepc{does not} change ZenDB's original definition (i.e., condition (1) in Theorem~\ref{thm:well-format-equiv}):
\begin{theorem}~\label{thm:well-format-equiv}
    Consider a document $D$ with nodes $V$, clusters $\mathcal{C}$, and true SHT $T'$.
    The following two conditions are equivalent:\\
    (1) $\forall~v_i, v_j\in V$, if $C(v_i) = C(v_j)$, then $vispre(v_i) = vispre(v_j)$, where $vispre(v) = [C(v_1), \ldots, C(v)]$ is the sequence of clusters along the path from the root $v_1$ to $v$ in $T'$.\\
    (2) $\forall~v_i, v_j\in V$, if $v_i$ is the parent of $v_j$ in $T'$, then there exists a parent-child pair $(v_k, v_l)$ in $T'$ such that $l = ind(C(v_j))$ and $C(v_k) = C(v_i)$.
\end{theorem}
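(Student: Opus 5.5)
Both directions rest on one observation: every node is an indexing node's cluster-mate, so both conditions describe the same relation between a node's parent cluster and the cluster's first occurrence. Write $\pi(v)$ for the parent of $v$ in $T'$. I will use that $vispre(v)$ ends with $C(v)$ and that $vispre(v) = vispre(\pi(v)) \,\|\, [C(v)]$ for every non-root $v$.

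\emph{(1)$\Rightarrow$(2).} Take a parent-child pair $(v_i, v_j)$ in $T'$ and let $l = ind(C(v_j))$. If $v_l$ is the root, then $v_j$ is in the root's cluster. I would handle this case using the paper's convention that $C_1 = \{v_1\}$, which makes it vacuous because $v_j \neq v_1$ is not in $C_1$. Otherwise, set $v_k = \pi(v_l)$; this is a parent-child pair with $l = ind(C(v_j))$. By (1), $vispre(v_l) = vispre(v_j)$ since $C(v_l) = C(v_j)$. Both sequences have the same length, and their second-to-last entries are $C(v_k)$ and $C(v_i)$ respectively. Hence $C(v_k) = C(v_i)$.

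\emph{(2)$\Rightarrow$(1).} I will prove by strong induction on the depth in $T'$ that every node $v$ satisfies $vispre(v) = vispre(v_{ind(C(v))})$. Condition (1) then follows, because two nodes in the same cluster share the same indexing node. The root case is trivial. For a non-root $v_j$, let $v_i = \pi(v_j)$ and $l = ind(C(v_j))$. Condition (2) says the parent-child pair $(v_k, v_l)$ has $C(v_k) = C(v_i)$, since $v_l$ has only one parent. So
\[ vispre(v_j) = vispre(v_i)\,\|\,[C(v_j)] \quad\text{and}\quad vispre(v_l) = vispre(v_k)\,\|\,[C(v_l)], \]
and it suffices to show $vispre(v_i) = vispre(v_k)$. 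Since $C(v_i) = C(v_k)$, both share the indexing node $v_m$ with $m = ind(C(v_i))$. Applying the induction hypothesis to $v_i$ and to $v_k$ gives $vispre(v_i) = vispre(v_m) = vispre(v_k)$.

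\emph{Main obstacle.} The delicate point is the choice of induction measure. Applying the hypothesis to $v_i$ is fine, since it is shallower than $v_j$. But $v_k$ is the parent of $v_l$, and $v_l$ may be deeper than $v_j$, so depth alone might not justify using the hypothesis on $v_k$. I would first try inducting on the node index instead. Here $k < l \le j$ and $i < j$, so both $v_i$ and $v_k$ have smaller index than $v_j$, and the statement "$vispre(v) = vispre(v_{ind(C(v))})$ for all nodes of index $< j$" is available for both. This index-based induction avoids the depth issue entirely, so it is the version I would write up.
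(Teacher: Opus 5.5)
Your proof is correct, and your (1)$\Rightarrow$(2) direction is the paper's argument: equal visual prefixes of $v_j$ and $v_{ind(C(v_j))}$ force their parents into the same cluster. You do not actually need the convention $C_1=\{v_1\}$ there, because (1) already forces cluster-mates to have equal depth. For (2)$\Rightarrow$(1) you use the same key step as the paper, namely that applying (2) to a node and its parent puts the node's parent in the same cluster as the parent of the cluster's indexing node. What differs is how you organize the induction. The paper argues pairwise. For cluster-mates $v_i, v_j$ it uses the indexing node only as a one-step pivot to show that their parents are cluster-mates, then repeats the argument on that pair of parents, climbing both root paths in lockstep. This is an induction on depth that never needs a hypothesis about the indexing node, so the obstacle you flag does not arise. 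You instead anchor every node to its cluster's indexing node, which forces you to invoke the hypothesis at the parent of $v_{ind(C(v_j))}$, whose depth is not controlled a priori. Strong induction on the index, using $k<l\le j$ and $i<j$, handles this cleanly. The paper's version is shorter but leaves implicit that the two climbs reach the root at the same step. That holds because, under (2), no non-root node can share the root's cluster, since the root has no parent. Your version makes the base case and termination explicit.
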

\begin{proof} [$(1) \Rightarrow (2)$.]
    If (1) holds for $D$, $vispre(v_j) = vispre(v_l)$ implies that the parents of $v_j$ and $v_l$ in $T'$ belong to the same cluster, establishing (2).

    $(2) \Rightarrow (1)$.
    If (2) holds for $D$,
    then for $v_i$ and $v_j$ in the same cluster $C$,
    the true parent of $v_i$ belongs to the same cluster as the true parent of $v_{ind(C)}$, and the same holds for $v_j$.
    Thus, $v_i$, $v_j$, and $v_{ind(C)}$ all have parents in the same cluster.
    Applying the same argument recursively to their parents, we conclude that $v_i$ and $v_j$ share identical clusters at every level along their paths to the root. Hence, $vispre(v_i) = vispre(v_j)$.
\end{proof}

\section{Datasets}
\label{append:datasets}

\begin{table*}[t]
    \centering
    \caption{Datasets are collected from existing sources and cited accordingly. {\bf Src. \#Docs}: number of documents in the source. {\bf Src. \#Qs}: number of questions in the source. For \ds{Civic}, metadata are provided, allowing flexible question curation; we therefore report --. {\bf \#Const. Docs ($k$)}: number of source documents constituting each composite document (i.e., $k+1$) used in our evaluation (\Cref{sec:datasets}). {\bf Src. Query Example}: an example question on the original source (single document). {\bf Modified Query Example}: its modified version for our agentic document QA evaluation. \diff{Red}: modified text.}
    \label{tab:append_datasets}
    \footnotesize
    \setlength{\tabcolsep}{2.6pt}
    \renewcommand{\arraystretch}{1.2}
    \begin{tabular}{@{}
    >{\raggedright\arraybackslash}p{0.09\linewidth}
    >{\raggedright\arraybackslash}p{0.035\linewidth}
    >{\raggedright\arraybackslash}p{0.035\linewidth}
    >{\raggedright\arraybackslash}p{0.06\linewidth}
    >{\raggedright\arraybackslash}p{0.3\linewidth}
    >{\raggedright\arraybackslash}p{0.4\linewidth}
    @{}}
    \toprule
    {\bf Datasets} & {\bf Src. \#Docs} & {\bf Src. \#Qs} & {\bf \#Const. Docs} & {\bf Src. Query Example} & {\bf Modified Query Example}\\
    \midrule
    \ds{Civic}~\cite{zendb, civic} & 19 & -- & 4 ($k=3$) & {\em Return a list of project names for all projects whose status matches the status of project `Westward Beach Road Shoulder Repairs (CalOES Project)'} & {\em \diff{According to the report for the meeting on January 26, 2022:} Return a list of project names for all projects whose status matches the status of project `Westward Beach Road Shoulder Repairs (CalOES Project)'}\\
    \ds{Contracts}~\cite{contract-nli} & 73 & 1241 & 5 (\hide{}\keep{$k=4$}) & {\em Determine the relationship between contract `064-19 Non Disclosure Agreement 2019' and a hypothesis $\mathcal{H}$ (one of `Entailment', `Contradiction', or `NotMentioned')} & {\em \diff{Return a list of contract names whose relationship to $\mathcal{H}$ is the same as that of the contract `064-19 Non Disclosure Agreement 2019'.}} \\
    \ds{Finance}~\cite{financebench-paper} & 84 & 150 & 2 ($k=1$) & {\em What is the FY2018 capital expenditure amount (in USD millions) for 3M?} & Same as the source query.\\
    \ds{Papers}~\cite{qasper-paper} & 416 & 1451 & \hide{}\keep{3 ($k=2$)} & {\em How big is the ANTISCAM dataset?} & {\em \diff{According to the paper `End-to-End Trainable Non-Collaborative Dialog System':} How big is the ANTISCAM dataset?} \\
    \bottomrule
    \end{tabular}
\end{table*}

We \hide{}\keepc{compose} real-world documents and sample questions from existing datasets spanning diverse lengths, domains, and templates into our evaluation datasets (\Cref{sec:datasets}). We describe each dataset in detail (\autoref{tab:append_datasets}).

\topic{\ds{Civic}}
The source contains Malibu City Council agenda reports~\cite{civic}, each using a complex, nested hierarchy to describe civic projects and their attributes (e.g., status, updates, start and completion times). We use each report's meeting date as its name, placed on its prepended title page.

\topic{\ds{Contracts}}
The source contains non-disclosure agreements (NDAs) with hierarchical structures indicated by list items. We use the raw PDF file names as contract names, placed on title pages.

\topic{\ds{Finance}}
The source contains financial filings (10-Ks, 10-Qs, 8-Ks, and earnings reports) from U.S. publicly traded companies between 2015 and 2023. We use a hash string as each document's name and place it on the corresponding title page; thus, to answer a (source) question (which specifies the company) over a composite document, one must first identify the correct constituent document (with hashed name) by inspecting partial content before retrieving relevant text.
Moreover, since some (source) questions rely on tables, we \hide{}\keepc{mitigate} LLMs' difficulty in processing serialized tables by reformatting them into Markdown~\cite{tab-llm, tab-llm-blog}.
Specifically,
each table row is represented as a Markdown section marked with \#\#,
and each cell value $v$, together with its corresponding column header $C$, is represented as a list item ``- $C: v$'' under this section.

\topic{\ds{Papers}}
The source contains 416 natural language processing (NLP) research papers in its test split. We extract paper titles from metadata and place them on title pages.

\keep{\section{Additional Evaluations and Results}
{In this section, we provide detailed results and analysis of evaluations mentioned in the main paper. }
}

\keep{\subsection{True Header Ablation}
\label{append:true_header_abl}
In \Cref{sec:sht_eval_results}, we summarized the results of SHT inference with \sys using extracted headers $V$ versus true headers $V'$. Here, we provide detailed results and analysis.}

\keep{Results in \autoref{tab:sht_eval_results_true_header} show that with true headers $V'$, \sys's SHTs reach nearly perfect robustness (e.g., 96\%--99\% $M_{\downarrow}$ recall across all datasets), with the remaining 1\%--4\% arising from documents that violate SD conformance (Section~\ref{subsec:sd-robustnes}).
$V'$ improves almost all metrics across all datasets {over extracted headers $V$}, most notably on \ds{Contracts} (2\%--12\%) and \ds{Finance} (0\%--20\%), whose header identification accuracy is relatively low (\autoref{tab:header_identification}).
However, on average, the gap between $V'$ and $V$ is moderate (3\%--6\% across all metrics), so \sys with $V$ still works well (because the text span of a missing true header $v\in V'\backslash V$ is usually contained in that of the preceding header, typically a sibling or parent).
Therefore, we use $V$ for \sys throughout our evaluation, which performs comparably to $V'$ and reflects the realistic setting where $V'$ is unavailable.
\providecommand{\ppd}[1]{{\scriptsize(#1)}}
\begin{table}[t]
    \centering
    \caption{\keep{Robustness (Recall), compactness (Precision), and their F-1 of SHTs
    inferred by \sys from true headers. Parentheses indicate the percentage difference from \sys with
    extracted headers.}}
    \label{tab:sht_eval_results_true_header}
    \footnotesize
    \setlength{\tabcolsep}{1.8pt}
    \renewcommand{\arraystretch}{0.92}
    \keep{%
    \begin{tabular}{@{}l
      r@{\hspace{1pt}}l r@{\hspace{1pt}}l r@{\hspace{1pt}}l |
      r@{\hspace{1pt}}l r@{\hspace{1pt}}l r@{\hspace{1pt}}l@{}}
    \toprule
    \multirow{2}{*}{\bf Dataset}
    & \multicolumn{6}{c}{\bf Top-down: $M_{\downarrow}$}
    & \multicolumn{6}{c}{\bf Bottom-up: $M_{\uparrow}$} \\
    \cmidrule(lr){2-7} \cmidrule(lr){8-13}
    {} & \multicolumn{2}{c}{\bf Recall} & \multicolumn{2}{c}{\bf Precision} & \multicolumn{2}{c}{\bf F-1}
       & \multicolumn{2}{c}{\bf Recall} & \multicolumn{2}{c}{\bf Precision} & \multicolumn{2}{c}{\bf F-1} \\
    \midrule
    {\ds{\footnotesize Civic}}
      & 0.96 & \ppd{+6} & 0.96 & \ppd{+5}  & 0.95 & \ppd{+6}
      & 0.94 & \ppd{+2} & 0.93 & \ppd{+0}  & 0.93 & \ppd{+0}  \\
    {\ds{\footnotesize Contracts}}
      & 0.99 & \ppd{+2}  & 0.99 & \ppd{+11} & 0.98 & \ppd{+11}
      & 0.94 & \ppd{+12} & 0.97 & \ppd{+5}  & 0.94 & \ppd{+11} \\
    {\ds{\footnotesize Finance}}
      & 0.96 & \ppd{+2} & 0.95 & \ppd{+0}  & 0.92 & \ppd{+2}
      & 0.71 & \ppd{+2} & 0.91 & \ppd{+20} & 0.76 & \ppd{+11} \\
    {\ds{\footnotesize Papers}}
      & 0.96 & \ppd{+0} & 0.96 & \ppd{+1} & 0.95 & \ppd{+1}
      & 0.86 & \ppd{+4} & 0.94 & \ppd{+1} & 0.88 & \ppd{+4} \\
    \cmidrule(lr){1-13}
    {\bf Avg.}
      & 0.97 & \ppd{+3} & 0.96 & \ppd{+4} & 0.95 & \ppd{+5}
      & 0.86 & \ppd{+5} & 0.94 & \ppd{+6} & 0.88 & \ppd{+6} \\
    \bottomrule
    \end{tabular}}
\end{table}
}

\keep{\subsection{Compactness-Depth Correlation}
\label{append:correlation-compact-depth}}

\keep{To show that {depth is a reasonable proxy for compactness}, we evaluate the correlation between the two over all 955 documents (\autoref{tab:dataset} in \Cref{sec:datasets}), using SHTs inferred by \sys from both extracted and true headers (\Cref{tab:sht_eval_results} in \Cref{sec:sht_eval_results}). As shown in \autoref{fig:depth_compact_corr}, we group SHTs by their inferred depth relative to the true depth into three bins, balancing the number of instances and depth ratio range. The average {bottom-up} compactness with extracted (true) headers is 0.92 (0.95), 0.91 (0.94), and 0.83 (0.91) for bins $(0.5,1]$, $(1,1.2]$, and $(1.2,2.2]$, respectively, indicating that deeper SHTs tend to be less compact.%
}

\begin{figure}[t]
    \centering
    \begin{minipage}{0.58\linewidth}
        \includegraphics[width=\linewidth]{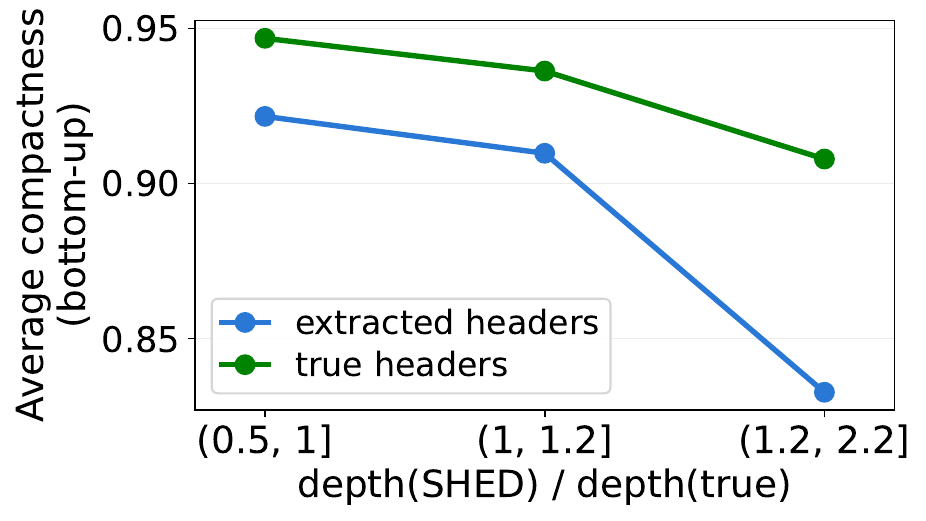}
    \end{minipage}%
    \hfill
    \begin{minipage}{0.38\linewidth}
    \caption{\keep{Compactness vs.\ depth relative to true SHT. A point shows average compactness of SHTs in a depth ratio bin, from extracted (blue) or true (green) headers.
    }}
    \label{fig:depth_compact_corr}
    \end{minipage}
\end{figure}

\keep{\subsection{Scalability Analysis: Document Composition}
\label{append:doc_composition}}

\keep{In \Cref{subsec:scalability_complex_docs}, we described results of \sys and baselines on composite documents of varying sizes, from 1 to $k$ to $2(k+1)$. Here, we provide detailed results and analysis. For single originals (size 1), we use the source queries rather than \hide{}{those }adapted for \hide{}{composites} (see \autoref{tab:append_datasets}).
}

\topic{\keep{Robustness and Compactness}}
\keep{We report the results ($M_{\downarrow}$/ $M_{\uparrow}$ recall, precision, and F-1) of composition sizes 1 and $2(k+1)$ in \autoref{tab:sht_eval_results_comp1} and \autoref{tab:sht_eval_results_comp2k} (the $k+1$ results are in \autoref{tab:sht_eval_results} in \Cref{sec:sht_eval}). The SHTs inferred by \sys stay robust and compact as composition size grows: recall, precision and F-1 change by at most 2\%. They also beat baselines by a widening margin: their $M_{\downarrow}$ ($M_{\uparrow}$) F-1 are 7\%--56\% (9\%--39\%), 9\%--68\% (10\%--49\%), and 13\%--71\% (15\%--55\%) higher than baselines at sizes 1, $k+1$, and $2(k+1)$.}

\topic{\keep{QA accuracy \& cost}}
\keep{We show the QA accuracy and total cost across composition sizes in \autoref{fig:scalability_acc_cost}.
Across all composition sizes, SHT-based agent with \sys is accurate
(0.71---0.73 on average) at low cost (\$12--\$28 total, up to $28\times$ cheaper than baselines), on par with true SHTs
(0.71---0.74, \$11--\$25).
Its gains widen with size: it is up to  7\% to 19\% to 27\% more accurate than the baselines at sizes 1, $k+1$, and $2(k+1)$; its cost rises only $1.4\times$ as \hide{}size doubles.
Specifically, at \hide{}size 1\hide{} the vanilla-grep agent is 2\% \hide{}{below} vanilla-in-context but nearly 4$\times$ cheaper.
{With the SHT-based agent,}
Deep drops fastest (\hide{}6\%, 19\%, \hide{}27\% lower than \sys) because its upper-level text spans grow with \hide{}size, \hide{}{giving} poor compactness and  \hide{}{overly context}.
{LLM-text and LLM-vision \hide{}{match }\sys at size 1 (within 1\%)\hide{} but cost $6.5\times$ and $14\times$ more\hide{}{; their} accuracy drops \hide{}7\% and 6\% at $k+1$\hide{} and \hide{}another 5\% and 6\% at $2(k+1)$, as inferring an SHT from the full concatenated document \hide{}{gets} harder with longer contexts.}
{GROBID \hide{}{improves} on larger compositions due to artificial \hide{}merging: it infers an SHT \hide{}{per document} and merges them under a virtual root (\Cref{sec:setup_sht_eval})\hide{}{;}} \hide{}{i}t stays uniformly low, yielding barely usable structure at any \hide{}size.
\sys's gain over Wide does not widen \hide{}{much} (4\%, 8\%, \hide{}6\%), likely because Wide's per-node text spans stay small and its flat SHT (only \textasciitilde0.5\% of GPT-5.4's 1M window even at $2(k+1)$) still \hide{}{lets the agent} infer a partial hierarchy.
}

\begin{table*}[t]
    \centering
    \caption{\keep{$M_{\downarrow}$ and $M_{\uparrow}$ results at composition size $1$ (a single original document), in the same format as \autoref{tab:sht_eval_results}.}}
    \label{tab:sht_eval_results_comp1}
    \footnotesize
    \setlength{\tabcolsep}{1.8pt}
    \renewcommand{\arraystretch}{0.92}
    \keep{%
    \begin{tabular}{@{}lrrrrr|rrrrr|rrrrr@{}}
    \toprule
    \multirow{2}{*}{\bf Approach}
    & \multicolumn{5}{c}{\bf Recall (Robustness)}
    & \multicolumn{5}{c}{\bf Precision (Compactness)}
    & \multicolumn{5}{c}{\bf F-1 (Trade-off)}\\
    \cmidrule(lr){2-6}
    \cmidrule(lr){7-11}
    \cmidrule(lr){12-16}
    {} & {\bf \ds{\footnotesize Civic}} & {\bf \ds{\footnotesize Contracts}} & {\bf\ds{\footnotesize Finance}} & {\bf\ds{\footnotesize Papers}} & {\bf Avg.}
       & {\bf \ds{\footnotesize Civic}} & {\bf \ds{\footnotesize Contracts}} & {\bf\ds{\footnotesize Finance}} & {\bf\ds{\footnotesize Papers}} & {\bf Avg.}
       & {\bf \ds{\footnotesize Civic}} & {\bf \ds{\footnotesize Contracts}} & {\bf\ds{\footnotesize Finance}} & {\bf\ds{\footnotesize Papers}} & {\bf Avg.}\\
    \midrule
\multicolumn{16}{l}{\cellcolor{gray!10} \em Top-down: {$M_{\downarrow}(T) = avg_{v\in T'}f(v)$, where $f(v)$ calculates recall, precision, and F-1 score of $ts(v)$ relative to $ts'(v)$.}} \\
    Deep
    & 0.93 & \best{0.99} & \best{0.99} & \best{0.97} & \best{0.97}
    & 0.28 & 0.29 & 0.15 & 0.35 & 0.27
    & 0.37 & 0.37 & 0.19 & 0.44 & 0.34 \\
    Wide
    & 0.67 & 0.81 & 0.80 & 0.78 & 0.76
    & 0.92 & \best{0.94} & \best{0.95} & \best{0.97} & \best{0.95}
    & 0.69 & 0.79 & 0.79 & 0.79 & 0.76 \\
    GROBID
    & 0.35 & 0.79 & 0.71 & 0.78 & 0.66
    & 0.17 & 0.34 & 0.71 & 0.83 & 0.51
    & 0.18 & 0.38 & 0.62 & 0.77 & 0.49 \\
    LLM-text
    & \best{0.96} & 0.75 & 0.92 & \best{0.97} & 0.90
    & \best{0.97} & 0.60 & 0.89 & \best{0.97} & 0.86
    & \best{0.96} & 0.58 & 0.86 & \best{0.96} & 0.84 \\
    LLM-vision
    & 0.93 & 0.77 & 0.95 & 0.94 & 0.90
    & 0.93 & 0.61 & 0.72 & 0.96 & 0.80
    & 0.93 & 0.61 & 0.68 & 0.95 & 0.79 \\
    \sys
    & 0.89 & 0.97 & 0.93 & \best{0.97} & 0.94
    & 0.89 & 0.92 & 0.93 & 0.95 & 0.93
    & 0.88 & \best{0.92} & \best{0.88} & 0.94 & \best{0.91} \\
    \midrule
\multicolumn{16}{l}{\cellcolor{gray!10} \em Bottom-up: {$M_{\uparrow}(T) = avg_{v\in T'}f(v)$, where $f(v)$ calculates recall, precision, and F-1 score of $H(v)$ relative to $H'(v)$.}} \\
    Deep
    & \best{0.94} & \best{0.89} & \best{0.96} & \best{0.87} & \best{0.91}
    & 0.36 & 0.56 & 0.24 & 0.64 & 0.45
    & 0.48 & 0.56 & 0.31 & 0.68 & 0.51 \\
    Wide
    & 0.31 & 0.43 & 0.26 & 0.26 & 0.31
    & \best{0.94} & \best{0.95} & \best{0.97} & \best{0.96} & \best{0.96}
    & 0.46 & 0.54 & 0.38 & 0.34 & 0.43 \\
    GROBID
    & 0.16 & 0.46 & 0.35 & 0.76 & 0.43
    & 0.30 & 0.71 & 0.55 & 0.92 & 0.62
    & 0.20 & 0.52 & 0.38 & 0.81 & 0.48 \\
    LLM-text
    & 0.90 & 0.58 & 0.72 & 0.82 & 0.76
    & 0.73 & 0.62 & 0.79 & 0.88 & 0.76
    & 0.80 & 0.57 & \best{0.73} & 0.82 & 0.73 \\
    LLM-vision
    & 0.91 & 0.34 & 0.47 & 0.81 & 0.63
    & 0.76 & 0.73 & 0.77 & \best{0.96} & 0.81
    & 0.82 & 0.43 & 0.55 & 0.85 & 0.66 \\
    \sys
    & 0.92 & 0.84 & 0.69 & 0.84 & 0.83
    & 0.93 & 0.92 & 0.69 & 0.93 & 0.87
    & \best{0.93} & \best{0.85} & 0.65 & \best{0.86} & \best{0.82} \\
    \bottomrule
    \end{tabular}}
\end{table*}

\begin{table*}[t]
    \centering
    \caption{\keep{$M_{\downarrow}$ and $M_{\uparrow}$ results at composition size $2(k+1)$, in the same format as \autoref{tab:sht_eval_results}.}}
    \label{tab:sht_eval_results_comp2k}
    \footnotesize
    \setlength{\tabcolsep}{1.8pt}
    \renewcommand{\arraystretch}{0.92}
    \keep{%
    \begin{tabular}{@{}lrrrrr|rrrrr|rrrrr@{}}
    \toprule
    \multirow{2}{*}{\bf Approach}
    & \multicolumn{5}{c}{\bf Recall (Robustness)}
    & \multicolumn{5}{c}{\bf Precision (Compactness)}
    & \multicolumn{5}{c}{\bf F-1 (Trade-off)}\\
    \cmidrule(lr){2-6}
    \cmidrule(lr){7-11}
    \cmidrule(lr){12-16}
    {} & {\bf \ds{\footnotesize Civic}} & {\bf \ds{\footnotesize Contracts}} & {\bf\ds{\footnotesize Finance}} & {\bf\ds{\footnotesize Papers}} & {\bf Avg.}
       & {\bf \ds{\footnotesize Civic}} & {\bf \ds{\footnotesize Contracts}} & {\bf\ds{\footnotesize Finance}} & {\bf\ds{\footnotesize Papers}} & {\bf Avg.}
       & {\bf \ds{\footnotesize Civic}} & {\bf \ds{\footnotesize Contracts}} & {\bf\ds{\footnotesize Finance}} & {\bf\ds{\footnotesize Papers}} & {\bf Avg.}\\
    \midrule
\multicolumn{16}{l}{\cellcolor{gray!10} \em Top-down: {$M_{\downarrow}(T) = avg_{v\in T'}f(v)$, where $f(v)$ calculates recall, precision, and F-1 score of $ts(v)$ relative to $ts'(v)$.}} \\
    Deep
    & 0.94 & \best{0.99} & \best{1.00} & \best{0.97} & \best{0.97}
    & 0.20 & 0.12 & 0.08 & 0.14 & 0.13
    & 0.28 & 0.16 & 0.10 & 0.20 & 0.19 \\
    Wide
    & 0.72 & 0.82 & 0.82 & 0.78 & 0.78
    & 0.93 & \best{0.90} & \best{0.97} & \best{0.97} & \best{0.94}
    & 0.74 & 0.73 & 0.82 & 0.79 & 0.77 \\
    GROBID
    & 0.55 & 0.84 & 0.76 & 0.76 & 0.73
    & 0.30 & 0.48 & 0.76 & 0.82 & 0.59
    & 0.30 & 0.48 & 0.67 & 0.75 & 0.55 \\
    LLM-text
    & 0.97 & 0.94 & 0.79 & 0.96 & 0.91
    & \best{0.97} & 0.66 & 0.55 & 0.95 & 0.79
    & \best{0.96} & 0.67 & 0.51 & \best{0.95} & 0.77 \\
    LLM-vision
    & \best{0.98} & 0.87 & 0.95 & 0.91 & 0.93
    & 0.94 & 0.55 & 0.76 & 0.84 & 0.77
    & 0.94 & 0.53 & 0.72 & 0.83 & 0.76 \\
    \sys
    & 0.90 & 0.97 & 0.94 & 0.95 & 0.94
    & 0.91 & 0.88 & 0.96 & 0.95 & 0.92
    & 0.90 & \best{0.86} & \best{0.91} & 0.93 & \best{0.90} \\
    \midrule
\multicolumn{16}{l}{\cellcolor{gray!10} \em Bottom-up: {$M_{\uparrow}(T) = avg_{v\in T'}f(v)$, where $f(v)$ calculates recall, precision, and F-1 score of $H(v)$ relative to $H'(v)$.}} \\
    Deep
    & \best{0.94} & \best{0.91} & \best{0.99} & \best{0.85} & \best{0.92}
    & 0.27 & 0.20 & 0.11 & 0.22 & 0.20
    & 0.37 & 0.22 & 0.15 & 0.29 & 0.26 \\
    Wide
    & 0.32 & 0.41 & 0.27 & 0.25 & 0.31
    & \best{0.94} & \best{0.94} & \best{0.98} & \best{0.94} & \best{0.95}
    & 0.46 & 0.53 & 0.39 & 0.34 & 0.43 \\
    GROBID
    & 0.22 & 0.48 & 0.39 & 0.73 & 0.45
    & 0.42 & 0.73 & 0.59 & 0.91 & 0.66
    & 0.28 & 0.52 & 0.41 & 0.78 & 0.50 \\
    LLM-text
    & 0.76 & 0.61 & 0.38 & 0.78 & 0.63
    & 0.72 & 0.76 & 0.57 & 0.93 & 0.74
    & 0.72 & 0.64 & 0.42 & 0.81 & 0.65 \\
    LLM-vision
    & 0.86 & 0.41 & 0.47 & 0.68 & 0.60
    & \best{0.94} & 0.73 & 0.77 & 0.85 & 0.82
    & 0.88 & 0.48 & 0.54 & 0.72 & 0.66 \\
    \sys
    & 0.93 & 0.81 & 0.70 & 0.81 & 0.81
    & 0.93 & 0.92 & 0.68 & 0.92 & 0.87
    & \best{0.93} & \best{0.83} & \best{0.64} & \best{0.84} & \best{0.81} \\
    \bottomrule
    \end{tabular}}
\end{table*}

\begin{figure*}[t]
\centering
    \begin{subfigure}[b]{0.23\linewidth}
         \centering
         \includegraphics[width=1\textwidth]{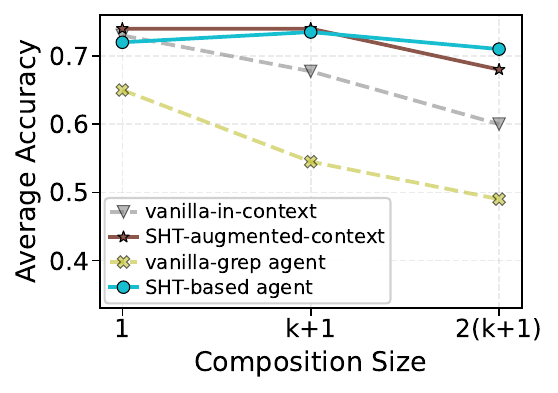}
         \caption{\keep{Accuracy: usefulness of SHTs.}}
         \label{fig:scalability_e2e}
    \end{subfigure}
    \begin{subfigure}[b]{0.23\linewidth}
         \centering
         \includegraphics[width=1\textwidth]{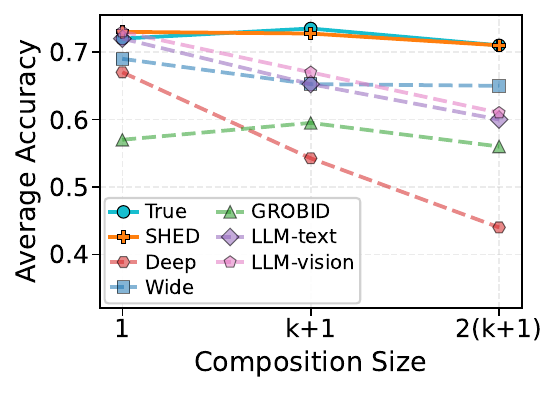}
         \caption{\keep{Accuracy: SHT ablation.}}
         \label{fig:scalability_sht}
    \end{subfigure}
    \begin{subfigure}[b]{0.23\linewidth}
         \centering
         \includegraphics[width=1\textwidth]{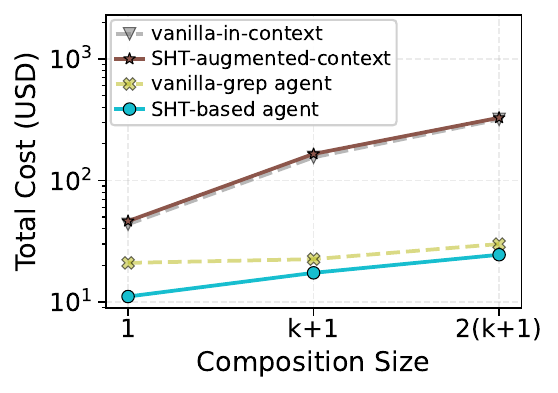}
         \caption{\keep{Cost: usefulness of SHTs.}}
         \label{fig:scalability_e2e_cost}
    \end{subfigure}
    \begin{subfigure}[b]{0.23\linewidth}
         \centering
         \includegraphics[width=1\textwidth]{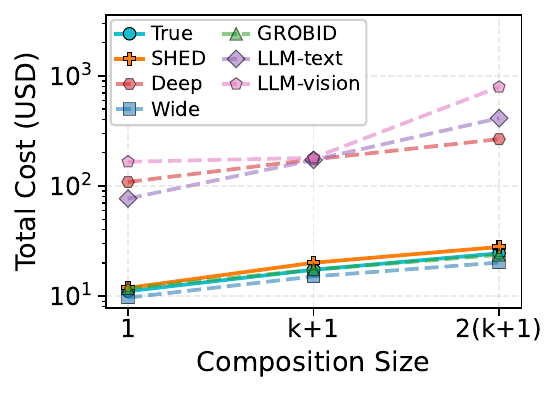}
         \caption{\keep{Cost: SHT ablation.}}
         \label{fig:scalability_sht_cost}
    \end{subfigure}
    \caption{\keep{Average accuracy and total cost (USD) as the composition size (i.e., the number of concatenated documents) grows from the original (1) to twice the size used in \Cref{sec:doc_qa} ($2(k+1)$),
    where $k$ varies by dataset (\autoref{tab:append_datasets} in Appendix~\ref{append:datasets}). The panels
    replicate the experiments in Sections~\ref{sec:eval_qa_sht} (Figures~\ref{fig:scalability_e2e}, \ref{fig:scalability_e2e_cost})
    and~\ref{sec:eval_qa_sht_ablation} (Figures~\ref{fig:scalability_sht}, \ref{fig:scalability_sht_cost}).}}
    \label{fig:scalability_acc_cost}
\end{figure*}

\keep{\subsection{Scalability Analysis: Complex Documents with Synthetic Q\&A}
\label{append:scalability_new_datasets}}

\keep{In \Cref{subsec:scalability_complex_docs}, we briefly described the evaluation results of \sys and baselines on three new datasets (\ds{CFR}, \ds{ETSI}, and \ds{FERC}) that contain individually complex documents without concatentation.
Here, we describe the details of the three new datasets, including their document collection, their Q\&A synthesis (summarized in \autoref{tab:append_complex_datasets}), as well as the QA evaluation metric and detailed results and analysis, as summarized in \autoref{tab:append_complex_datasets}.}

\topic{\keep{Document collection}}
\keep{We use dataset-specific criteria to select documents with complex structures: for \ds{CFR}, we select volumes with at least 6 agency sections each; for \ds{ETSI}, we select documents with at least 40 pages. Documents in \ds{FERC} are from 2020--2025.
{Each document has a table of contents embedded in its PDF, which we extract via PyMuPDF's \ttt{get\_toc} as the true SHT.}}

\topic{\keep{Q\&A synthesis}}
\keep{For each document, we design query templates that require document's structure, and then instantiate each template with entities in the document to generate queries.
For example, documents in \ds{CFR} have a three-level structure: each document contains multiple agencies as sections, each agency contains multiple regimes as subsections, and each regime contains multiple regulations as subsubsections (agency $\rightarrow$ regime $\rightarrow$ regulation).
Therefore, we design a query template that asks for all agencies whose regulations in a given regime $R$ relate to a given hypothesis $H$ the same way as a given agency $X$.
For each document, we select a regime $R$, a regulation $H$ as the hypothesis, and an agency $X$, to generate a query for that document.
Queries for \ds{ETSI} and \ds{FERC} are generated similarly, as shown in \autoref{tab:append_complex_datasets}.
}

\topic{\keep{Evaluation metric}}
\keep{Each query requests a set of entities (e.g., agency names, as shown in \autoref{tab:append_complex_datasets}). To evaluate answer accuracy, we split the generated answer on line breaks to extract entities, normalize them (canonicalizing characters, lowercasing, and removing punctuation and whitespace), and compute the F-1 score against the ground-truth entity set.}

\topic{\keep{Results}}
\keep{
\autoref{tab:scalability_complex_docs_sht} shows that \sys outperforms Wide by 8\%--15\% (16\%--26\%) in recall and 1\%--13\% (12\%--20\%) in F-1 across datasets under $M_{\downarrow}$ ($M_{\uparrow}$).
\autoref{tab:scalability_complex_docs_acc_cost} shows that \sys's QA accuracy is
7\%--18\% higher on average than {the 3 baselines\hide{}{,}}
\hide{}{at nearly the cheapest cost (Wide's) and $15\times$ below} the costliest (vanilla-in-context).
\hide{}{On} \ds{CFR}, \sys still achieves good QA accuracy despite low header\hide{}{-}identification recall: the missed headers are mostly regulations (the lowest level, \textasciitilde1.6K per document), whose content the agent recovers \hide{}{from the parent section's (regime's) text span}.
\ds{FERC} shows low accuracy across strategies due to query \hide{}{complexity}: \hide{}{queries require} finding all environmental-effect sections and classifying experts' sentiments \hide{}{from} all corresponding analyses (subsection{s}).
}

\begin{table*}[t]
    \centering
    \caption{\keep{Three new datasets, collected from inherently complex real-world documents. Avg. Doc Size: average word count per document in each dataset. \#Docs (\#Qs): number of documents (questions) in each dataset. {Structure: document's hierarchical structure, represented as section $\rightarrow$ subsection $\rightarrow$ subsubsection.} Query template: natural-language template instantiated with entities in documents.}}
    \label{tab:append_complex_datasets}
    \footnotesize
    \setlength{\tabcolsep}{2.6pt}
    \renewcommand{\arraystretch}{1.2}
    \keep{
    \begin{tabular}{@{}
    >{\raggedright\arraybackslash}p{0.09\linewidth}
    >{\raggedright\arraybackslash}p{0.04\linewidth}
    >{\raggedright\arraybackslash}p{0.1\linewidth}
    >{\raggedright\arraybackslash}p{0.04\linewidth}
    >{\raggedright\arraybackslash}p{0.15\linewidth}
    >{\raggedright\arraybackslash}p{0.4\linewidth}
    @{}}
    \toprule
    {\bf Datasets} & {\bf \#Docs} & {\bf Avg. Doc Size} & {\bf \#Qs} & {\bf Structure} &  {\bf Query Template}\\
    \midrule
    \ds{CFR}~\cite{cfr}  & 30 & 415,115 & 40 & agency $\rightarrow$ regime $\rightarrow$ regulation & {\em List all agencies whose regulations in regime $R$ (e.g., FOIA, Privacy) relate to hypothesis $H$ the same way (Entailment / Contradiction / NotMentioned) as agency $X$.}\\
    \ds{ETSI}~\cite{etsi} & 49 &  30,485 & 49 & equipment category $\rightarrow$ characteristic $\rightarrow$ attribute & {\em List all characteristics, spanning $\geq\!2$ equipment categories (transmitter / receiver / duplex), whose attribute $A$ (Definition / Method-of-measurement / Limits) satisfies predicate $P$.}\\
    \ds{FERC}~\cite{ferc} & 40 &  49,126 & 53 &  resource type $\rightarrow$ environmental effect $\rightarrow$ staff analysis & {\em List all environmental effects whose resource type (e.g., aquatic, terrestrial, T\&E species, recreation, cultural) differs from that of effect $X$ but whose staff-analysis sentiment (positive / negative / neutral) is the same as $X$.}\\
    \bottomrule
    \end{tabular}}
\end{table*}

\begin{table}[t]
    \centering
    \caption{\keep{Robustness (Recall), compactness (Precision), and their F-1 of SHTs inferred by Wide and \sys on the three new datasets in \Cref{subsec:scalability_complex_docs}. \best{Green}: better of the two approaches per dataset per metric.}}
    \label{tab:scalability_complex_docs_sht}
    \footnotesize
    \setlength{\tabcolsep}{0.8pt}
    \renewcommand{\arraystretch}{0.92}
    \keep{%
    \begin{tabular}{@{}l rrrr | rrrr | rrrr@{}}
    \toprule
    \multirow{2}{*}{\bf Approach}
    & \multicolumn{4}{c}{\bf Recall}
    & \multicolumn{4}{c}{\bf Precision}
    & \multicolumn{4}{c}{\bf F-1} \\
    \cmidrule(lr){2-5} \cmidrule(lr){6-9} \cmidrule(lr){10-13}
    {} & {\bf \ds{\footnotesize CFR}} & {\bf \ds{\footnotesize ETSI}} & {\bf \ds{\footnotesize FERC}} & {\bf Avg.}
       & {\bf \ds{\footnotesize CFR}} & {\bf \ds{\footnotesize ETSI}} & {\bf \ds{\footnotesize FERC}} & {\bf Avg.}
       & {\bf \ds{\footnotesize CFR}} & {\bf \ds{\footnotesize ETSI}} & {\bf \ds{\footnotesize FERC}} & {\bf Avg.} \\
    \midrule
    \multicolumn{13}{l}{\cellcolor{gray!10}\em Top-down: $M_{\downarrow}(T)$} \\
    Wide
      & 0.71 & 0.77 & 0.68 & 0.72
      & \best{0.47} & \best{0.96} & \best{0.93} & \best{0.78}
      & 0.42 & 0.76 & 0.69 & 0.62 \\
    \sys
      & \best{0.79} & \best{0.89} & \best{0.83} & \best{0.84}
      & 0.43 & 0.95 & 0.91 & 0.76
      & \best{0.43} & \best{0.89} & \best{0.79} & \best{0.70} \\
    \multicolumn{13}{l}{\cellcolor{gray!10}\em Bottom-up: $M_{\uparrow}(T)$} \\
    Wide
      & 0.19 & 0.19 & 0.43 & 0.27
      & \best{0.75} & \best{0.98} & \best{0.91} & \best{0.88}
      & 0.28 & 0.29 & 0.54 & 0.37 \\
    \sys
      & \best{0.35} & \best{0.45} & \best{0.63} & \best{0.48}
      & 0.58 & 0.63 & 0.90 & 0.70
      & \best{0.40} & \best{0.49} & \best{0.71} & \best{0.54} \\
    \bottomrule
    \end{tabular}}
\end{table}

\begin{table}[t]
\centering
\caption{\keep{Average QA accuracy and total cost (USD, inside parenthesis) on the three new datasets in \Cref{subsec:scalability_complex_docs}. \best{Green}: highest accuracy (lowest cost) per dataset.}}
\label{tab:scalability_complex_docs_acc_cost}
\footnotesize
\setlength{\tabcolsep}{1.8pt}
\renewcommand{\arraystretch}{0.92}
\keep{
\begin{tabular}{@{}
l@{\hspace{2pt}}
>{\raggedleft\arraybackslash}p{0.065\linewidth}@{\hspace{-1pt}}
>{\raggedleft\arraybackslash}p{0.105\linewidth}@{\hspace{4pt}}
>{\raggedleft\arraybackslash}p{0.065\linewidth}@{\hspace{-1pt}}
>{\raggedleft\arraybackslash}p{0.075\linewidth}@{\hspace{4pt}}
>{\raggedleft\arraybackslash}p{0.065\linewidth}@{\hspace{-1pt}}
>{\raggedleft\arraybackslash}p{0.085\linewidth}@{\hspace{4pt}}
>{\raggedleft\arraybackslash}p{0.065\linewidth}@{\hspace{-1pt}}
>{\raggedleft\arraybackslash}p{0.105\linewidth}
@{}}
\toprule
{\bf Strategy}
& \multicolumn{2}{r}{\bf \ds{\footnotesize CFR}}
& \multicolumn{2}{r}{\bf \ds{\footnotesize ETSI}}
& \multicolumn{2}{r}{\bf \ds{\footnotesize FERC}}
& \multicolumn{2}{r}{{\bf Avg. (Tot.)}} \\
\midrule
{Vanilla-in-context}
& 0.65 & (111.1) & 0.87 & (4.8) & 0.21 & (12.3) & 0.58 & (128.3) \\
Vanilla-grep agent
& 0.67 & (4.8) & 0.66 & (6.5) & 0.23 & (18.0) & 0.52 & (29.2) \\
SHT-based agent (Wide)
& 0.71 & (\best{3.6}) & 0.83 & (\best{1.1}) & 0.34 & (3.0) & 0.63 & (\best{7.7}) \\
SHT-based agent (\sys)
& \best{0.78} & (4.3) & \best{0.90} & (1.7) & \best{0.42} & (\best{2.4}) & \best{0.70} & (8.5) \\
\bottomrule
\end{tabular}}
\end{table}

\keep{\subsection{Scalability Analysis: SHT Inference Latency}
\label{append:scalability_sht_infer}
We test SHT inference latency for the approaches in \Cref{sec:sht_eval},
plus
SmolDocling-256M-preview
(Appendix~\ref{append:smoldocling}, a new learning-based baseline that jointly models the layout and semantics of a PDF),
since running the costlier methods (e.g., LLM-vision) on large corpora exceeds our budget---scaling to 1M documents like those we evaluate would cost
$>$250K USD---we instead evaluate
on 80 composite documents (20 sampled from each of the four datasets in \autoref{tab:dataset}) and report the estimated latency over all 955 documents in \autoref{tab:sht_inference_latency}.
Overall,
\sys needs 5 hours to process millions of tokens, showing good scalability; employing parallelism would further improve scalability.}

\topic{\keep{Testbed}}
\keep{{We run the large model~\cite{vgt-code} used by \sys (for header identification on \ds{Civic} and \ds{Contracts}; Sec.~\ref{sec:sys_impl}) and SmolDocling-256M-preview (Appendix~\ref{append:smoldocling})
on a GPU machine with one NVIDIA L4 GPU with 24 GiB, 4 vCPUs, and 16 GiB RAM.
We run
all other experiments that do not involve local neural-network inference on a CPU machine with one 2.20 GHz Intel Xeon (8 cores, 16 threads)
and 15 GiB RAM, including
\sys's smaller model~\cite{lightgbm} for header
identification on \ds{Finance} and \ds{Papers}, visual pattern extraction and SHT assembly, and GROBID, LLM-text,
LLM-vision.}}

\topic{\keep{Results}}
\keep{Each approach processes documents sequentially.
\autoref{tab:sht_inference_latency} shows that GROBID, LLM-text, and LLM-vision infer SHTs for all 955 documents in 4, 2, and 5 hours.
However, GROBID produces low-quality SHTs (poor robustness and compactness; \autoref{tab:sht_eval_results});
LLM-text and LLM-vision do not scale in cost (the LLM-based ones would run \textasciitilde\$0.2M per 1M documents).
\sys takes about 5 hours to infer SHTs for all 955 documents (in total nearly 54K pages and 30M tokens) sequentially, showing good scalability. Specifically, 90\%--94\% of time is spent on header identification by Huridocs~\cite{vgt-code,lightgbm}, the rest visual-pattern extraction and SHT construction.
Deep and Wide should be similar to \sys in latency, since they only skip visual-pattern extraction (which accounts for up to 10\% of \sys's latency), but their SHTs are far less robust or compact.
SmolDocling-256M-preview would take \textasciitilde11.4 days, far slower than \sys, since it runs more complex models to build SHTs end-to-end rather than just for header identification.}

\begin{table}[t]
    \centering
    \caption{\keep{SHT inference latency (average seconds per document), measured on 20 sampled composite documents per dataset. {\bf Est. Tot.}: total latency (hours) for all 955 documents in \autoref{tab:dataset}, as extrapolated estimates via the samples.
    }}
    \label{tab:sht_inference_latency}
    \footnotesize
    \setlength{\tabcolsep}{1.8pt}
    \renewcommand{\arraystretch}{0.92}
    \keep{%
    \begin{tabular}{@{}l rrrr | r@{}}
    \toprule
    {\bf Approach}
    & {\bf \ds{\footnotesize Civic}} & {\bf \ds{\footnotesize Contracts}}
    & {\bf \ds{\footnotesize Finance}} & {\bf \ds{\footnotesize Papers}}
    & {\bf Est. Tot. (h)} \\
    \midrule
    GROBID       &     14 &    15 &      52 &     8 &       4 \\
    LLM-text     &      8 &     5 &      33 &     3 &       2 \\
    LLM-vision   &     13 &     9 &      93 &    11 &       5 \\
    \sys         &    22 &   18 &      48 &    13 &      5 \\
    SmolDocling-256M-preview  & 229 & 308 & 5{,}074 & 747 & 273 \\
    \bottomrule
    \end{tabular}}
\end{table}

\keep{\subsection{An Advanced SHT Inference Baseline}
\label{append:smoldocling}
For approaches that jointly model PDF layout and semantics to infer structure, we've already evaluated LLM-vision (GPT-5.4) in Sections~\ref{sec:sht_eval} and \ref{sec:doc_qa}, which is state-of-the-art---e.g., on DocVQA, GPT-4o scores 92.8\%~\cite{docvqa_leaderboard}, {beating fine-tuned} LayoutLMv3 (83.4\%)~\cite{huang2022layoutlmv3} by over 9\%; our baseline uses the \hide{}stronger GPT-5.4.
To broaden our baselines with this line of approaches,
we further evaluate SmolDocling-256M-preview~\cite{nassar2025smoldocling}, which is a learning-based approach that generates a Markdown to encode hierarchical structure.
Due to resource constraints, we evaluate only the 80 documents from Appendix~\ref{append:scalability_sht_infer}.
}

\keep{\autoref{tab:sht_eval_results_smoldocling} reports the top-down and bottom-up recall, precision, and F-1 of its SHTs, which average up to 56\% below \sys.
Possible causes are: (1) its inferred SHTs are mostly flat (78 of the 80 have only 1-level non-root nodes), and (2)
its header identification is less accurate than \sys's (recall/precision/F-1 is 0.82/0.84/0.83 on \ds{Civic}, 0.20/0.74/0.31 on \ds{Contracts}, 0.85/0.46/0.58 on \ds{Finance}, and 0.82/0.82/0.82 on \ds{Papers}).}

\providecommand{\ppd}[1]{{\scriptsize(#1)}}
\begin{table}[t]
    \centering
    \caption{\keep{Robustness (Recall), compactness (Precision), and their F-1 of SHTs
    inferred by SmolDocling-256M-preview, on 80 sampled documents (20 per dataset).
    Parentheses indicate the percentage difference from \sys (\autoref{tab:sht_eval_results}).}}
    \label{tab:sht_eval_results_smoldocling}
    \footnotesize
    \setlength{\tabcolsep}{1.8pt}
    \renewcommand{\arraystretch}{0.92}
    \keep{%
    \begin{tabular}{@{}l
      r@{\hspace{1pt}}l r@{\hspace{1pt}}l r@{\hspace{1pt}}l |
      r@{\hspace{1pt}}l r@{\hspace{1pt}}l r@{\hspace{1pt}}l@{}}
    \toprule
    \multirow{2}{*}{\bf Dataset}
    & \multicolumn{6}{c}{\bf Top-down: $M_{\downarrow}$}
    & \multicolumn{6}{c}{\bf Bottom-up: $M_{\uparrow}$} \\
    \cmidrule(lr){2-7} \cmidrule(lr){8-13}
    {} & \multicolumn{2}{c}{\bf Recall} & \multicolumn{2}{c}{\bf Precision} & \multicolumn{2}{c}{\bf F-1}
       & \multicolumn{2}{c}{\bf Recall} & \multicolumn{2}{c}{\bf Precision} & \multicolumn{2}{c}{\bf F-1} \\
    \midrule
    {\ds{\footnotesize Civic}}
      & 0.66 & \ppd{$-$24} & 0.88 & \ppd{$-$3}  & 0.69 & \ppd{$-$20}
      & 0.26 & \ppd{$-$66} & 0.86 & \ppd{$-$7}  & 0.40 & \ppd{$-$53} \\
    {\ds{\footnotesize Contracts}}
      & 0.76 & \ppd{$-$21} & 0.43 & \ppd{$-$45} & 0.40 & \ppd{$-$47}
      & 0.24 & \ppd{$-$58} & 0.74 & \ppd{$-$18} & 0.33 & \ppd{$-$50} \\
    {\ds{\footnotesize Finance}}
      & 0.70 & \ppd{$-$24} & 0.89 & \ppd{$-$6}  & 0.70 & \ppd{$-$20}
      & 0.22 & \ppd{$-$47} & 0.88 & \ppd{+17}   & 0.33 & \ppd{$-$32} \\
    {\ds{\footnotesize Papers}}
      & 0.63 & \ppd{$-$33} & 0.83 & \ppd{$-$12} & 0.66 & \ppd{$-$28}
      & 0.26 & \ppd{$-$56} & 0.87 & \ppd{$-$6}  & 0.35 & \ppd{$-$49} \\
    \cmidrule(lr){1-13}
    {\bf Avg.}
      & 0.69 & \ppd{$-$25} & 0.76 & \ppd{$-$16} & 0.61 & \ppd{$-$29}
      & 0.25 & \ppd{$-$56} & 0.84 & \ppd{$-$3}  & 0.35 & \ppd{$-$46} \\
    \bottomrule
    \end{tabular}}
\end{table}

\keep{\subsection{Sensitivity of \sys}
\label{append:global_first_sht_infer}}

\keep{To assess \sys's sensitivity to the SD inference approach (\Cref{sec:sht-construction}), we compare the robustness and compactness of SHTs inferred by global-first and local-first (our previous implementation; \Cref{sec:sys_impl}) on the four datasets in \Cref{tab:dataset} (\autoref{tab:sht_eval_results_global_first}).
{To further justify our choice of local-first, we evaluated both approaches on 40 real-world documents from SAFEDOCS~\cite{safedocs} in \Cref{subsec:hierarchy-robustness} where local-first guarantees robustness while global-first does not (\autoref{tab:sht_eval_results_global_first_local_first_comp}).}
}

\topic{\keep{Results}}
\keep{\autoref{tab:sht_eval_results_global_first} shows that global- and local-first achieve nearly identical robustness and compactness across datasets on $M_{\uparrow}$ and $M_{\downarrow}$ (within 1\% difference in average recall, precision, and F-1), indicating \sys's stability across the two approaches over the four datasets. This is partly because over 60\% of documents are loosely-formatted, for which \sys guarantees robustness under {\em any} SD inference approach in \Cref{subsec:family-infer-sd} (Theorem~\ref{thm:loosely-robustness}).
{\autoref{tab:sht_eval_results_global_first_local_first_comp} shows that, over 40 real-world documents other than the ones in the four datasets, local-first guarantees robustness (recall=1) while global-first does not, and local-first achieves 4\% (5\%) higher average F-1 than global-first on $M_{\downarrow}$ ($M_{\uparrow}$). This further justifies our choice of local-first for \sys in \Cref{sec:sys_impl}.}}

\providecommand{\ppd}[1]{{\scriptsize(#1)}}
\begin{table}[t]
    \centering
    \caption{\keep{Robustness (Recall), compactness (Precision), and their trade-off (F-1) of SHTs inferred by \sys with the global-first SD inference. Parentheses: difference from local-first (\autoref{tab:sht_eval_results}), in percentage points. {\bf Avg.}: stratified averages across datasets.}}
    \label{tab:sht_eval_results_global_first}
    \footnotesize
    \setlength{\tabcolsep}{1.8pt}
    \renewcommand{\arraystretch}{0.92}
    \keep{%
    \begin{tabular}{@{}l
      r@{\hspace{1pt}}l r@{\hspace{1pt}}l r@{\hspace{1pt}}l |
      r@{\hspace{1pt}}l r@{\hspace{1pt}}l r@{\hspace{1pt}}l@{}}
    \toprule
    \multirow{2}{*}{\bf Dataset}
    & \multicolumn{6}{c}{\bf Top-down: $M_{\downarrow}$}
    & \multicolumn{6}{c}{\bf Bottom-up: $M_{\uparrow}$} \\
    \cmidrule(lr){2-7} \cmidrule(lr){8-13}
    {} & \multicolumn{2}{c}{\bf Recall} & \multicolumn{2}{c}{\bf Precision} & \multicolumn{2}{c}{\bf F-1}
       & \multicolumn{2}{c}{\bf Recall} & \multicolumn{2}{c}{\bf Precision} & \multicolumn{2}{c}{\bf F-1} \\
    \midrule
    {\ds{\footnotesize Civic}}
      & 0.90 & \ppd{+0} & 0.91 & \ppd{+0} & 0.89 & \ppd{+0}
      & 0.92 & \ppd{+0} & 0.93 & \ppd{+0} & 0.93 & \ppd{+0} \\
    {\ds{\footnotesize Contracts}}
      & 0.97 & \ppd{+0} & 0.88 & \ppd{+0} & 0.87 & \ppd{+0}
      & 0.82 & \ppd{+0} & 0.92 & \ppd{+0} & 0.84 & \ppd{+1} \\
    {\ds{\footnotesize Finance}}
      & 0.94 & \ppd{+0} & 0.95 & \ppd{+0}   & 0.90 & \ppd{+0}
      & 0.69 & \ppd{+0} & 0.70 & \ppd{$-$1} & 0.65 & \ppd{+0} \\
    {\ds{\footnotesize Papers}}
      & 0.96 & \ppd{+0} & 0.95 & \ppd{+0} & 0.94 & \ppd{+0}
      & 0.82 & \ppd{+0} & 0.93 & \ppd{+0} & 0.84 & \ppd{+0} \\
    \cmidrule(lr){1-13}
    {\bf Avg.}
      & 0.94 & \ppd{+0} & 0.92 & \ppd{+0} & 0.90 & \ppd{+0}
      & 0.81 & \ppd{+0} & 0.87 & \ppd{+0} & 0.82 & \ppd{+1} \\
    \bottomrule
    \end{tabular}}
\end{table}

\begin{table}[t]
    \centering
    \caption{\keep{Robustness (Recall), compactness (Precision), and their F-1 of SHTs inferred by \sys with local- and global-first SD inference over 40 real-world documents~\cite{safedocs}.}}
    \label{tab:sht_eval_results_global_first_local_first_comp}
    \footnotesize
    \setlength{\tabcolsep}{1.8pt}
    \renewcommand{\arraystretch}{0.92}
    \keep{%
    \begin{tabular}{@{}l rrr | rrr@{}}
    \toprule
    \multirow{2}{*}{\bf Approach}
    & \multicolumn{3}{c}{\bf Top-down: $M_{\downarrow}$}
    & \multicolumn{3}{c}{\bf Bottom-up: $M_{\uparrow}$} \\
    \cmidrule(lr){2-4} \cmidrule(lr){5-7}
    {} & {\bf Recall} & {\bf Precision} & {\bf F-1}
       & {\bf Recall} & {\bf Precision} & {\bf F-1} \\
    \midrule
    global-first & 0.97 & 0.94 & 0.92 & 0.94 & 0.81 & 0.84 \\
    local-first  & 1.00 & 0.96 & 0.96 & 1.00 & 0.84 & 0.89 \\
    \bottomrule
    \end{tabular}}
\end{table}

\section{Prompts}

\subsection{LLM-based SHT Inference}
\label{append:prompts-sht}

LLM-text use the following prompt to infer SHT given serialized document text:
\begin{lstlisting}[language=text,style=plaintextstyle]
======================== System Prompt ========================
You are given a document. Your task is to extract and organize its Table of Contents (ToC).

Instructions:

1. Identify **all headings at every level** in the document (e.g., title, sections, subsections, sub-subsections, and any deeper levels).
2. **Use the exact wording of each heading as it appears** - do NOT paraphrase, shorten, or modify.
3. Preserve the original capitalization, punctuation, and numbering.
4. Infer the hierarchy levels based on formatting, numbering, or context.
5. Maintain the original order of appearance.
6. Do NOT invent, merge, or omit any headings.
7. Exclude any text that is not a heading.

Formatting requirements:

* Use the following Markdown format to represent hierarchy:

  ```markdown
  # Title  
  ## Section Header  
  ### Subsection Header  
  #### Sub-subsection Header  
  ```

* The number of `#` symbols indicates the hierarchy level.

* Include exactly one space between the `#` symbols and the header text.

* Extend to deeper levels as needed (e.g., `#####`, `######`, etc.).

* Do NOT skip levels (e.g., do not jump from `#` to `###`).

* Output only the Table of Contents in Markdown format.

* Do NOT include any explanations or extra text.

========================= User Prompt =========================
DOCUMENT:
{{ doc_txt }}

TOC IN MARKDOWN:
\end{lstlisting}

The prompt for LLM-vision is largely the same, with two modifications:
\begin{enumerate}[leftmargin=*, nosep]
\item the first sentence of the system prompt is changed to: ``\code{You are given images of a document's pages. Your task is to extract and organize a Table of Contents (ToC) of the document from the images}'';
\item the placeholder \code{\{\{ doc\_txt \}\}} is replaced with a batch of page images in reading order.
\end{enumerate}

\subsection{LLM-as-judge}
\label{append:prompts-llm-as-a-judge}

Below is the prompt for LLM-as-judge on \ds{Finance}:
\begin{lstlisting}[language=text,style=plaintextstyle]
======================== System Prompt ========================
You are an LLM-as-judge. Grade a candidate answer against a ground-truth reference answer for a finance question.

INPUTS (verbatim):
- QUESTION
- REFERENCE_ANSWER
- CANDIDATE_ANSWER

HARD RULES
- Use ONLY QUESTION, REFERENCE_ANSWER, CANDIDATE_ANSWER. No outside knowledge.
- Output ONE JSON object only (no extra text).
- Scoring must follow the rubric below exactly.

OUTPUT JSON SCHEMA (strict)
{
  "score": number,                        // final in [0, 1]
  "question_type": "numeric" | "boolean" | "list" | "explanatory" | "mixed",
  "extracted_candidate_answer": {
    "text_span": string,
    "normalized_value": number | null,
    "normalized_units": "usd_millions" | "usd_billions" | "percent" | "ratio" | "days" | "text" | "none" | null
  },
  "reference_answer": {
    "text": string,
    "normalized_value": number | null,
    "normalized_units": "usd_millions" | "usd_billions" | "percent" | "ratio" | "days" | "text" | "none" | null
  },
  "subscores": {
    "polarity_or_core": number,           // [0,1]
    "numeric_accuracy": number,           // [0,1]
    "units_and_scale": number,            // [0,1]
    "completeness": number,               // [0,1]
    "no_contradictions": number           // [0,1]
  },
  "reasons": [string],
  "confidence": number                   // in [0,1]
}

QUESTION TYPE CLASSIFICATION
- "numeric": asks for a single numeric value/ratio/percent/days/currency amount.
- "boolean": asks Yes/No (or equivalent) with or without brief justification.
- "list": asks to list items (e.g., securities, acquisitions, geographies).
- "explanatory": asks for drivers/why/what caused, narrative required.
- "mixed": combines boolean + explanation or numeric + explanation.

NUMERIC NORMALIZATION (use for numeric parts)
- Extract the first unambiguous numeric target in CANDIDATE_ANSWER that appears to answer the question.
- Accept: "$1577.00", "1,577", "0.66", "65.4%", "1.9 %".
- Remove commas.
- Percent handling:
  - If percent sign present => units "percent", numeric is the percent number (e.g., "1.9%" -> 1.9).
  - If QUESTION requests percent but candidate omits "%" and also omits the word "percent", treat unit as ambiguous and score units accordingly (do NOT assume percent).
- Scale words MUST be honored:
  - million/mn/m => scale of millions; billion/bn/b => scale of billions.
  - If QUESTION requests USD millions (or billions), candidate must present the answer in that unit; merely giving an equivalent in another unit is NOT acceptable unless candidate explicitly converts AND presents the requested unit as the final answer.
- Rounding:
  - If QUESTION says "Round to two decimal places": candidate must match within +-0.005.
  - If QUESTION says "Round to one decimal place": match within +-0.05.
  - If QUESTION not specify rounding: require exact match to reference after normalization; if reference shows 1 decimal allow +-0.05, if 2 decimals allow +-0.01.

CONTRADICTIONS (global)
- If candidate contains internal contradictions (multiple conflicting finals, "Yes" then "No", different numbers) then set "no_contradictions"=0; otherwise 1.

SCORING RUBRIC (deterministic)
Compute subscores, then weighted sum. Clamp final to [0,1].

A) For question_type="numeric"
- polarity_or_core = 1 (not applicable)
- units_and_scale:
  - 1 if units/scale match QUESTION requirements (and no ambiguity).
  - 0 if wrong units/scale OR ambiguous when units are required.
- numeric_accuracy:
  - If units_and_scale=0 => numeric_accuracy = 0 (do not reward correct magnitude in wrong unit).
  - Else compute accuracy:
    - If within tolerance => 1
    - Else if relative error <= 1% => 0.5
    - Else => 0
- completeness = 1 if exactly one clear final numeric answer is provided; else 0.5 if extra non-conflicting numbers; else 0.
- Final score weights:
  - numeric_accuracy 0.60
  - units_and_scale 0.25
  - completeness 0.10
  - no_contradictions 0.05

B) For question_type="boolean"
- Extract polarity from candidate and reference (Yes/No). If reference polarity not explicit, infer from first sentence.
- polarity_or_core:
  - 1 if polarity matches; 0 if mismatches; 0 if candidate gives no polarity.
- completeness:
  - If REFERENCE_ANSWER includes justification metrics/facts, require candidate to mention the same key items (see "COMPLETENESS RULES" below).
  - Score 1 if all key items present; 0.5 if partially; 0 if none or contradictory.
- units_and_scale, numeric_accuracy:
  - If justification includes numbers in reference, treat missing/wrong numbers as completeness penalties (not numeric_accuracy) unless QUESTION explicitly asks to compute a number.
  - Set units_and_scale=1 and numeric_accuracy=1 unless QUESTION contains an explicit numeric computation request.
- Final weights:
  - polarity_or_core 0.55
  - completeness 0.35
  - no_contradictions 0.10

C) For question_type="list"
- polarity_or_core = 1 (not applicable)
- completeness:
  - Let R = set of required items in REFERENCE_ANSWER.
  - Candidate must include all items in R (paraphrase ok; but must preserve identifying details like rates/due years/symbols when present).
  - Score:
    - 1 if all present and no contradictions.
    - 0.5 if some missing but at least half present.
    - 0 if fewer than half present OR includes contradictory items.
- units_and_scale=1, numeric_accuracy=1 unless the list items are numeric-identifiers that must match (then treat mismatch as completeness reduction).
- Final weights:
  - completeness 0.85
  - no_contradictions 0.15

D) For question_type="explanatory" or "mixed"
- polarity_or_core:
  - If a Yes/No or "metric not relevant" stance exists in REFERENCE_ANSWER, candidate must match it.
  - Score 1 match, 0 mismatch, 0.5 if unclear but not contradictory.
- completeness:
  - Identify "key points" in REFERENCE_ANSWER: each bullet/driver/explicit named factor counts as one key point.
  - Score = (# key points covered by candidate) / (total key points), capped at 1.
  - A key point is "covered" if candidate expresses the same idea, not necessarily same wording.
- units_and_scale, numeric_accuracy:
  - Only apply if the QUESTION explicitly asks for a computed number. Otherwise set both to 1.
- Final weights:
  - polarity_or_core 0.40
  - completeness 0.50
  - no_contradictions 0.10
  - (If numeric asked as well, renormalize weights to include numeric_accuracy 0.35, units_and_scale 0.15, completeness 0.35, polarity_or_core 0.10, no_contradictions 0.05.)

COMPLETENESS RULES (key items)
- If REFERENCE_ANSWER includes explicit metrics (e.g., CAPEX/Revenue 5.1%, Fixed assets/Total Assets 20%, ROA 12.4%), each metric is a separate key point.
- If REFERENCE_ANSWER is a list, each listed item is a key point.
- If REFERENCE_ANSWER says "metric not relevant", that statement itself is a mandatory key point plus the provided rationale is another key point.

REASONS
- Provide brief, specific reasons for any subscore < 1 (e.g., "Unit mismatch (USD billions requested)", "Missing 2 of 3 key metrics", "Polarity mismatch: reference No, candidate Yes").

========================= User Prompt =========================
QUESTION:
{{ query }}

REFERENCE_ANSWER:
{{ gt_answer }}

CANDIDATE_ANSWER:
{{ llm_answer }}

OUTPUT JSON:
\end{lstlisting}

Below is the prompt for LLM-as-judge on \ds{Papers}:
\begin{lstlisting}[language=text,style=plaintextstyle]
======================== System Prompt ========================
You are an automatic grader. Judge whether a MODEL_ANSWER correctly answers a QUESTION, using ONLY the provided set of GROUND_TRUTH_ANSWERS (multiple acceptable references). Do not use outside knowledge.

You must:
1) Compare MODEL_ANSWER to EACH ground-truth answer independently.
2) Assign a score for each comparison using the rubric below.
3) Select the MAXIMUM score across references as FINAL_SCORE (best plausible match).
4) Output a strict JSON object with: final_score, matched_reference_index, per_reference_scores, rationale.

Inputs:
- QUESTION: {question}
- GROUND_TRUTH_ANSWERS: (indexed list) {ground_truth_answers}
- MODEL_ANSWER: {model_answer}

General Rules (apply to all questions):
- Use ONLY the information in GROUND_TRUTH_ANSWERS to decide correctness.
- Ignore differences in capitalization, punctuation, and minor rephrasing.
- Treat synonymous wording as equivalent if the meaning is the same.
- If the question asks for numbers, counts, dataset sizes, metrics, names, or lists: the required items must be present and correct. Small rounding differences are acceptable ONLY if clearly implied by the reference (e.g., 13,014 vs 13 000 tweets) and not contradictory.
- If MODEL_ANSWER includes extra assertions not supported by ANY ground truth, penalize (see rubric).
- If MODEL_ANSWER is empty, refuses, or says it cannot answer while ground truth is answerable, score 0.
- If ANY ground truth answer is exactly "Unanswerable" (case-insensitive) meaning the question is unanswerable from the source, then the only fully correct responses are ones that clearly state it is unanswerable/unknown/not provided. Any attempt to answer with specific facts must be scored 0.

Rubric (score each MODEL_ANSWER vs one reference; then take max):
5 = Fully correct: matches the reference meaning and includes all key required facts; no unsupported extra claims.
4 = Mostly correct: minor omission OR minor imprecision that does not change the meaning; no major unsupported claims.
3 = Partially correct: captures some key facts but misses others, OR includes a minor contradiction, OR adds some unsupported details.
2 = Slightly correct: vaguely related but lacks key facts or is overly incomplete; may include unsupported claims.
1 = Incorrect but on-topic: attempts to answer but key facts are wrong or contradict the reference.
0 = Completely incorrect / unrelated / refusal; OR (when GT is Unanswerable) provides specific factual answer.

How to identify "key required facts":
- For size questions: the exact quantities and units (e.g., dialogs vs sentences) are key.
- For "what metrics/baselines/models/datasets": the set of items is key; missing 1-2 items can reduce score depending on importance.
- For definition/how questions: the described procedure/criteria is key; partial descriptions score lower.

Output format (JSON ONLY, no extra text):
{
  "final_score": <0-5 integer>,
  "matched_reference_index": <integer index of the reference that gave the max score>,
  "per_reference_scores": [<0-5 int>, ...],
  "rationale": "<brief, specific justification citing which key facts matched/missed>"
}

========================= User Prompt =========================
QUESTION:
{{ query }}

GROUND_TRUTH_ANSWERS:
{{ gt_answers }}

MODEL_ANSWER:
{{ llm_answer }}

OUTPUT JSON:
\end{lstlisting}

\end{document}